\newcommand{\notshow}[1]{}
\newcommand{\Omit}[1]{}
\newtheorem{theorem}{Theorem}[section]
\newtheorem{lemma}[theorem]{Lemma}
\newtheorem{proposition}[theorem]{Proposition}
\newtheorem{corollary}[theorem]{Corollary}
\newtheorem{definition}{Definition}[section]
\theoremstyle{definition}
\newtheorem{defn}{Definition}[section]
\newtheorem{lem}{Lemma}[section]
\DeclareMathAlphabet{\mathcal}{OMS}{cmsy}{m}{n}
\newcommand{\ignore}[1]{}
\def\discretionaryslash{\discretionary{/}{}{/}}
\def\discretionarydot{\discretionary{.}{}{.}}
\def\discretionarycolon{\discretionary{:}{}{:}}
{\catcode`\/\active
\catcode`\.\active
\catcode`\:\active
\gdef\URLprepare{\catcode`\/\active\let/\discretionaryslash
                 \catcode`\.\active\let.\discretionarydot
                 \catcode`\:\active\let:\discretionarycolon
        \def~{\char`\~}}}%
\def\URL{\bgroup\URLprepare\realURL}%
\def\realURL#1{\tt #1\egroup}%
\newtheorem{observation}{Observation}
\newcommand{\mhrplus}{\ensuremath{\text{MHR}^+}}
\newenvironment{numberedproposition}[1]{%
\begin{proposition}}{\end{proposition}\addtocounter{theorem}{-1}}
\begin{document}
\title{When to Limit Market Entry under Mandatory Purchase}

\author{Meryem Essaidi%
\thanks{%
    {Princeton University (\protect\url{messaidi@cs.princeton.edu}).}}
\and Kira Goldner%
\thanks{%
    {Columbia University (\protect\url{kgoldner@cs.columbia.edu}).  Supported in part by NSF CCF-1420381 and by a Microsoft Research PhD Fellowship.  Supported in part by NSF award DMS-1903037 and a Columbia Data Science Institute postdoctoral fellowship.}}
\and S. Matthew Weinberg%
\thanks{%
	{Princeton University (\protect\url{smweinberg@princeton.edu}).  Supported by NSF CCF-1717899.}}
}

\maketitle



\begin{abstract}

We study a problem inspired by regulated health insurance markets, such as those created by the government in the Affordable Care Act Exchanges or by employers when they contract with private insurers to provide plans for their employees.  The market regulator can choose to do nothing, running a Free Market, or can exercise her regulatory power by limiting the entry of providers (decreasing consumer \emph{welfare} by limiting options, but also decreasing \emph{revenue} via enhanced competition). We investigate whether limiting entry increases or decreases the \emph{utility} (welfare minus revenue) of the consumers who purchase from the providers, specifically in settings where the outside option of ``purchasing nothing'' is prohibitively undesirable.

We focus primarily on the case where providers are symmetric. We propose a sufficient condition on the distribution of consumer values for (a) a unique symmetric equilibrium to exist in both markets and (b) utility to be higher with limited entry.  (We also establish that these conclusions do not necessarily hold for all distributions, and therefore some condition is necessary.) Our techniques are primarily based on tools from revenue maximization, and in particular Myerson's virtual value theory. We also consider extensions to settings where providers have identical costs for providing plans, and to two providers with an asymmetric distribution.


\end{abstract}

\section{Introduction}
\label{sec:intro}


In the Affordable Care Act (ACA) Exchanges, the government plays the role of the \emph{market regulator}, contracting with several private health insurance \emph{providers} to offer insurance plans to a number of patients, or \emph{consumers}.  Each insurance provider posts a price for his insurance plan, which he can sell to an unlimited number of consumers, and then each consumer chooses a plan to purchase among those for sale. Further, the ACA originally imposed an ``individual mandate'' that charges a steep penalty to consumers who do not purchase any plan, whose aim was to essentially make it mandatory for every consumer to purchase some plan. 




In the ACA exchanges, prices set by providers tend to be high\footnote{In 2013, even with subsidies, premiums represented between 4 and 9.5\% of the median income \citep{johnson2013can}. In 2015, a survey found that without subsidies, average marketplace unsubsidized premiums were over $2.5\times$ what most consumers claim is the maximum they can afford \citep{williams2016patient}.}, and little regulatory power exercised is by the government (for example, the ACA establishes minimum coverage guarantees, but otherwise is not an active regulator). One reason cited for these high prices is that the insurance market has a high barrier to entry, so once incumbent providers have surpassed this barrier, they face little competition. The motivating scenario to have in mind is that there may be multiple incumbents but, due to entry barriers, each incumbent serves a section of the market significantly better than all others. Could the government improve consumer utility by further regulating the market? For instance, what if the government were to only allow the insurance providers offering the five least expensive plans to enter the exchanges?  This would create competition and ensure lower prices. However, it would also mean fewer options, excluding the favorite choices of some consumers' from the market, and thus leading to lower consumer \emph{welfare}. That is, such regulation decreases both prices and welfare, so it not immediately obvious whether it improves consumer \emph{utility} (which is the difference of the two). The main purpose of this paper is to provide theoretical tools to reason about this tradeoff, and understand under what circumstances consumer utility is improved.

The ACA exchanges are an example of a regulated health insurance market. Another common example concerns an employer who contracts with providers in order to provide health insurance to her employees.  In this paper, we focus on such regulated markets, where a market regulator can  determine which providers enter the market, and where every consumer \emph{must} purchase a plan.\footnote{Put another way, the outside option of purchasing nothing is so undesirable that all rational consumers will choose to purchase something, even at negative utility. This may be due to an explicit fine (individual mandate), or simply because the outside option of being uninsured is so undesirable. But we do not explicitly model this undesirable outside option.}  Note that while healthcare is our motivating example, the model is not limited to only healthcare, but applies to any situation where purchase is mandatory.   

The focus of this paper is the following: \emph{under what conditions does limited entry (reducing prices via competition, but also reducing welfare via reduced options) improve consumer utility?}


Observe that reasoning about this question requires a direct understanding of what prices the providers will choose to set. As such, our main techniques involve characterizing and analyzing equilibria, and understanding when they exist and are unique. We compare two settings: the \emph{Free Market} setting, where the market regulator does not restrict entry at all, and the \emph{Limited Entry} setting. Note, of course, that while we use healthcare exchanges as a motivating example, the focus of our paper is to provide a model and theoretical tools which are generally applicable. As such, our model is stylized and intended to capture one key decision facing regulators of healthcare exchanges (whether to limit entry or not)---it is not intended to capture verbatim the full range of challenges facing healthcare regulators. Still, we do emphasize that this is indeed a key decision facing regulators of healthcare exchanges, and that theoretical tools from mechanism design indeed provide insight to make these decisions.



\subsection{Model and Results}



Our goal is to provide a clean model to reason about the impact of limiting entry. To this end, we consider $n$ providers, and consider a population $D$ of consumers, each with a value $v_i$ for the plan offered by provider $i$. For the majority of the paper, we consider the symmetric setting, where a random consumer from $D$ has value $v_i$ for plan $i$ drawn i.i.d. from some single-variate distribution $F$ (that is, while individual consumers have different values for different providers, the providers are a priori identical).\footnote{To have an example in mind, consider a provider which offers a good network for patients with diabetes, and another which offers a good network for patients with a history of cancer. Any particular patient will of course have different values for the plans, based on their own medical history. But the population as a whole does not necessarily prefer one over the other.}  We denote this as $D:= F^n$. 

In Section~\ref{sec:design}, we study the Free Market setting, where each provider $i$ simply sets a price $p_i$, and consumers drawn from $D$ purchase the plan $i$ which maximizes their utility $v_i - p_i$. We first identify sufficient conditions on $F$ (involving a new condition we term ``\mhrplus'') for the existence of a unique symmetric equilibrium, and we characterize the equilibrium prices in the Free Market. Here, $h_2^n(F)$ denotes the expected hazard rate of the second-highest of $n$ i.i.d. draws from $F$, and formal definitions of both technical terms appear in Section~\ref{sec:prelim}. 

\begin{theorem}\label{thm:symmetriceq} Let $D:=F^n$, where $F$ is \mhrplus\ and has decreasing density. Then the unique symmetric equilibrium in the Free Market setting is for each provider to set price $1/h_2^n(F)$.
\end{theorem}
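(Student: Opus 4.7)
The plan is to derive the candidate symmetric equilibrium price by imposing the first-order condition on a single provider's best-response, and then to use the hypotheses on $F$ to argue that this critical point is the unique global best-response, so the equilibrium exists and is unique.

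First I would set up the deviation problem. Fix a candidate symmetric price $p$, suppose every provider $j\geq 2$ charges $p$, and let provider $1$ deviate to $p_1$. Because each consumer must purchase some plan and picks the utility maximizer, a consumer with values $(v_1,\ldots,v_n)\sim F^n$ buys from provider $1$ exactly when $v_1 - Y \geq p_1 - p$, where $Y := \max_{j\geq 2} v_j$. Letting $H(t) := \Pr[v_1 - Y \geq t]$, provider $1$'s expected revenue is $R(p_1;p) = p_1\cdot H(p_1-p)$. The first-order condition at $p_1 = p$ reduces to $H(0) + p\cdot H'(0) = 0$, i.e.\ $p = -H(0)/H'(0)$.

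Next I would evaluate this FOC explicitly. By symmetry $H(0) = \Pr[v_1 \geq Y] = 1/n$, and conditioning on $v_1$ and differentiating in $t$ gives $H'(0) = -(n-1)\int f(v)^2 F(v)^{n-2}\,dv$. The density of the second-highest of $n$ i.i.d.\ draws from $F$ is $n(n-1) F(v)^{n-2}(1-F(v)) f(v)$, so multiplying by the hazard rate $f(v)/(1-F(v))$ and integrating yields the identity
\[
h_2^n(F) \;=\; n(n-1)\int f(v)^2 F(v)^{n-2}\, dv.
\]
Plugging this into the FOC immediately gives $p = 1/h_2^n(F)$, the claimed equilibrium price.

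The main obstacle is verifying that this critical point is a genuine global best-response (so a symmetric equilibrium exists) and that no other symmetric price is also a best-response (so it is unique). I would do this by reinterpreting $R(\cdot;p)$ as the monopoly revenue of a single-item seller facing one buyer whose value is distributed as $Z := v_1 - Y + p$, and appealing to Myerson's virtual-value theory: if the virtual value of $Z$ is monotone non-decreasing, the monopoly FOC has a unique solution which is the unique global maximizer, and then the symmetry requirement $p_1 = p$ forces $p = 1/h_2^n(F)$. The hypotheses on $F$ are exactly what is needed for this step: \mhrplus\ together with decreasing density should be shown to transfer a monotone-hazard-rate structure from $F$ to the order-statistic difference $v_1 - Y$, uniformly in the candidate $p$. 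The bulk of the technical work, and where I expect the most difficulty, lies in this transfer; the first-order calculation itself is routine.
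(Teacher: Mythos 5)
Your proposal is correct and mirrors the paper's own argument: your $H(p_1-p)$ is exactly the paper's $1-F^*_p(p_1)$, the first-order computation producing $p=1/h_2^n(F)$ matches Proposition~\ref{prop:symmetric}, and the Myersonian sufficiency step you outline (showing \mhrplus\ with decreasing density forces the deviation-revenue curve to be MHR/regular) is precisely Proposition~\ref{prop:mainstars}. You correctly flag the MHR-transfer lemma as the technical crux; the paper's proof of that step is indeed where the real work lies.
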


Observe that a characterization of a canonical equilibrium in the Free Market setting is necessary if we are to possibly anlayze consumer utility. We include one vignette regarding our technical approach, which leverages machinery from the revenue maximization literature~\cite{Myerson81}. Suppose all other providers are setting price $p$; what is a provider's best-response?  To reason about this, consider instead a new distribution $F_p^*$ such that $1-F_p^*(q)$ is the probability that a consumer drawn from marginal $F$ will purchase from this provider if he sets price $q$, given that all other providers set price $p$.  The provider's best response is to set $q^*$ which maximizes his expected profit, $q\cdot (1-F_p^*(q))$, and $p$ is therefore a symmetric equilibrium if and only if $p = q^*$. Using this rephrasing, we then argue that if this distribution $F_p^*$ happens to have a \emph{monotone hazard rate} (MHR), then we are guaranteed the existence of a unique equilibrium. Of course, this distribution $F_p^*$ is quite different from $F$ itself (for example, $F$ may be MHR and $F_p^*$ may not even be regular!). We define a new distributional assumption, \mhrplus, such that if $F$ is \mhrplus then this implies that $F_p^*$ is MHR. We postpone a formal definition of \mhrplus to Section~\ref{sec:design}, but only note here that it is a strictly stronger condition than MHR, and that most common MHR distributions are also \mhrplus (e.g. exponential, uniform, Gaussian).

Next, in Section~\ref{sec:eval}, we study the Limited Entry setting. Formally, each provider still sets a price $p_i$, but now only the $n-1$ providers with lowest price enter the market (tie-breaking arbitrarily). Consumers again pick the plan maximizing $v_i - p_i$, but only among these $n-1$ providers. For symmetric instances, quickly observe that there is a unique symmetric equilibrium, and in it all providers set $p_i=0$ (so in some sense, our model can be seen as ``optimistic'' towards the benefits of limiting entry).  The main result of this section is a characterization of the precise condition on $F$ that implies that consumer utility in the Limited Entry setting will be greater than in the Free Market setting; we call this the Limit-Entry condition.

\begin{definition}[Limit-Entry Condition] We say that a symmetric distribution $D = F^n$ satisfies the Limit-Entry Condition if $H_1^n(F) \leq n/h_2^n(F)$. Here, $H_1^n(F)$ is the expected inverse hazard rate of the highest of $n$ i.i.d. draws from $F$. (Recall that $h_2^n(F)$ is the expected hazard rate of the second-highest of $n$ i.i.d. draws from $F$.)
\end{definition}

\begin{theorem}\label{thm:limit} Let $D:= F^n$ and admit a symmetric equilibrium in the Free Market setting. Then the expected consumer utility at the unique symmetric equilibrium in the Limited Entry setting exceeds that at the unique symmetric equilibrium in the Free Market setting \emph{if and only if} $F$ satisfies the Limit-Entry Condition.
\end{theorem}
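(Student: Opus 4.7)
The plan is to reduce the inequality in the theorem statement to the Limit-Entry condition by writing both utilities in closed form and identifying a clean expression for the gap in expected maxima of order statistics.

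First, I would pin down both sides of the comparison. In the Free Market, the (assumed) symmetric equilibrium price is $p^* = 1/h_2^n(F)$ (by the analysis behind Theorem~\ref{thm:symmetriceq}; even without the \mhrplus\ hypothesis, a symmetric equilibrium, if it exists, must sit at this price by the first-order condition). Since all $n$ providers post the same price, each consumer purchases from the provider with the largest $v_i$, so the expected consumer utility is
\[
U_{\mathrm{FM}} \;=\; \mathbb{E}[X_{(n)}] - \tfrac{1}{h_2^n(F)},
\]
where $X_{(k)}$ denotes the maximum of $k$ i.i.d.\ draws from $F$. In the Limited Entry setting, the text already notes that the unique symmetric equilibrium has every provider posting $p_i = 0$; after one of the $n$ symmetric providers is excluded, each consumer picks the max of $n-1$ i.i.d.\ draws and pays nothing, so
\[
U_{\mathrm{LE}} \;=\; \mathbb{E}[X_{(n-1)}].
\]
Thus $U_{\mathrm{LE}} \geq U_{\mathrm{FM}}$ is equivalent to $\tfrac{1}{h_2^n(F)} \geq \mathbb{E}[X_{(n)}] - \mathbb{E}[X_{(n-1)}]$.

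The key step is then to prove the order-statistic identity
\[
\mathbb{E}[X_{(n)}] - \mathbb{E}[X_{(n-1)}] \;=\; \tfrac{H_1^n(F)}{n}.
\]
I would derive this by writing $\mathbb{E}[X_{(k)}] = \int_{0}^{\infty}\bigl(1 - F(x)^k\bigr)\,dx$ (assuming nonnegative values; otherwise split the integral as usual), subtracting to get $\mathbb{E}[X_{(n)}] - \mathbb{E}[X_{(n-1)}] = \int_{0}^{\infty} F(x)^{n-1}(1 - F(x))\,dx$, and then observing that the density of $X_{(n)}$ is $n F(x)^{n-1} f(x)$, so
\[
H_1^n(F) \;=\; \mathbb{E}\!\left[\tfrac{1-F(X_{(n)})}{f(X_{(n)})}\right] \;=\; \int_0^\infty \tfrac{1-F(x)}{f(x)}\cdot n F(x)^{n-1} f(x)\,dx \;=\; n\int_0^\infty F(x)^{n-1}(1-F(x))\,dx,
\]
which matches up to the factor of $n$.

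Plugging this identity in, the inequality $U_{\mathrm{LE}} \geq U_{\mathrm{FM}}$ becomes $\tfrac{1}{h_2^n(F)} \geq \tfrac{H_1^n(F)}{n}$, i.e.\ $H_1^n(F) \leq n/h_2^n(F)$, which is exactly the Limit-Entry Condition; and because every step above is a chain of equalities, this equivalence goes both directions, giving the ``if and only if.'' The only real content is the order-statistic identity relating the expected-inverse-hazard-rate of the top draw to the expected gap between the top two maxima; once that is in hand, the rest is bookkeeping. I do not expect a serious obstacle, though a minor one is being careful with the integration-by-parts / integral representation if $F$ has support that is not $[0,\infty)$ — I would either assume nonnegative support (as is standard in this literature) or split the integral across $(-\infty,0)$ and $[0,\infty)$ to make the identity go through verbatim.
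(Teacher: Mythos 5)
Your proposal is correct, and it arrives at the same governing identity as the paper, namely $V_1^n(F) - V_1^{n-1}(F) = H_1^n(F)/n$, but it takes a genuinely different route to get there. The paper first decomposes $V_1^{n-1}(F) = \tfrac{n-1}{n}V_1^n(F) + \tfrac{1}{n}V_2^n(F)$ via a coupling argument (take $n$ draws, discard one uniformly at random, look at the max of the rest), and then invokes the Myersonian identity $V_2^n(F) = \mathbb{E}[\varphi_F(X_1^n(F))] = V_1^n(F) - H_1^n(F)$ (expected second-price revenue equals expected virtual surplus of the winner). You instead obtain the identity directly by writing $\mathbb{E}[X_{(k)}] = \int_0^\infty\bigl(1-F(x)^k\bigr)\,dx$, subtracting to get $\int_0^\infty F(x)^{n-1}(1-F(x))\,dx$, and recognizing this as $H_1^n(F)/n$ by writing out the density $nF(x)^{n-1}f(x)$ of the top order statistic and cancelling the $f(x)$'s. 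Both derivations are valid; yours is more elementary and self-contained, while the paper's version is deliberately phrased in virtual-value language because that toolkit is a running theme of the whole paper (the authors say as much in the introduction). The rest of your proof --- computing the two equilibrium utilities, noting that the Free Market candidate price $1/h_2^n(F)$ is forced by the first-order condition from Proposition~\ref{prop:symmetric} even without \mhrplus, and observing that every step is an equivalence so the ``if and only if'' holds --- matches the paper exactly, and your remark about nonnegative support is handled by the paper's standing assumption $\vec{v}\in\mathbb{R}^n_+$.
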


Finally, while the Limit-Entry condition is relatively clean, it is not obvious how it relates to more common distributional assumptions. Our final main result shows that the Limit-Entry condition is satisfied under standard distributional assumptions.

\begin{theorem}\label{thm:mhrlimit} Let $D:=F^n$, where $F$ is MHR and has decreasing density. Then $F$ satisfies the Limit-Entry Condition.
\end{theorem}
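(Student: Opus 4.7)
The plan is to reduce the inequality $H_1^n(F) \cdot h_2^n(F) \leq n$ to a correlation inequality on $[0,1]$ and then apply Chebyshev's integral inequality. Using the standard densities of the top two order statistics of $n$ i.i.d.\ draws and the quantile substitution $u = F(x)$ with $g(u) := f(F^{-1}(u))$, I would first rewrite
\[
H_1^n(F) = n \int_0^1 \frac{(1-u)\,u^{n-1}}{g(u)}\, du, \qquad h_2^n(F) = n(n-1) \int_0^1 u^{n-2}\, g(u)\, du.
\]
Letting $d\nu(u) := (n-1)\,u^{n-2}\,du$ be a probability measure on $[0,1]$, the product simplifies to
\[
H_1^n(F) \cdot h_2^n(F) \;=\; \frac{n^2}{n-1}\, \mathbb{E}_\nu\!\left[\frac{u(1-u)}{g(u)}\right] \cdot \mathbb{E}_\nu[g(u)],
\]
so the Limit-Entry Condition becomes the claim that $\mathbb{E}_\nu[u(1-u)/g] \cdot \mathbb{E}_\nu[g] \leq (n-1)/n$.

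Next, I would observe that each hypothesis contributes exactly one monotonicity fact in the quantile variable. The decreasing-density assumption, together with the fact that $F^{-1}$ is increasing, implies that $g(u)$ is decreasing in $u$. The MHR assumption says that $h(x) = f(x)/(1-F(x))$ is increasing in $x$, which under the substitution is precisely the statement that $g(u)/(1-u)$ is increasing in $u$, i.e.\ $(1-u)/g(u)$ is decreasing in $u$. Hence $g$ and $(1-u)/g$ are both decreasing on $[0,1]$ and so are co-monotone with respect to $\nu$.

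Finally, Chebyshev's integral correlation inequality applied to these two co-monotone functions under $\nu$ gives
\[
\mathbb{E}_\nu[1-u] \;=\; \mathbb{E}_\nu\!\left[g \cdot \frac{1-u}{g}\right] \;\geq\; \mathbb{E}_\nu[g] \cdot \mathbb{E}_\nu\!\left[\frac{1-u}{g}\right],
\]
and a direct calculation gives $\mathbb{E}_\nu[1-u] = (n-1)\int_0^1 (1-u)u^{n-2}du = 1/n$. Combined with the trivial bound $u(1-u)/g \leq (1-u)/g$ (valid since $u \leq 1$ and $g > 0$), this yields $\mathbb{E}_\nu[u(1-u)/g] \cdot \mathbb{E}_\nu[g] \leq 1/n \leq (n-1)/n$ for all $n \geq 2$, completing the argument. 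The main idea to identify is the right probability measure on $[0,1]$ and the right pair of co-monotone functions so that the two hypotheses combine orthogonally through Chebyshev; once the quantile change of variables is in place, only routine calculation remains.
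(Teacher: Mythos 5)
Your proof is correct and takes a genuinely different route from the paper's. The paper's argument uses the single evaluation point $x=0$ as an intermediary: decreasing density gives $f(0) \geq \mathbb{E}[f(X_1^{n-1})]$; the identity $h_2^n(F) = n\,\mathbb{E}[f(X_1^{n-1})]$ (Lemma~\ref{lem:h2}) turns this into $1/f(0) \leq n/h_2^n(F)$; and MHR gives $H_1^n(F) \leq 1/h_F(0) = 1/f(0)$, completing the chain. Your proof instead pushes both quantities through the quantile substitution $u = F(x)$, so that the two hypotheses become monotonicity statements about the same probability measure $\nu$ on $[0,1]$: decreasing density says $g$ is non-increasing, and MHR says $(1-u)/g$ is non-increasing. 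Chebyshev's correlation inequality then combines them. One concrete payoff of your approach is that it is quantitatively sharper: you actually prove $H_1^n(F)\cdot h_2^n(F) \leq n/(n-1)$ (since you establish $\mathbb{E}_\nu[u(1-u)/g]\cdot\mathbb{E}_\nu[g] \leq 1/n$, and the prefactor is $n^2/(n-1)$), whereas the paper's comparison-to-$x=0$ gives only $\leq n$. Another benefit is structural: the quantile picture makes it transparent that the two hypotheses contribute exactly two co-monotone ingredients, with Chebyshev doing the orthogonal combination. Both proofs leave slack (yours drops a factor of $u$ in the trivial bound $u(1-u)/g \leq (1-u)/g$ and again in $1/n \leq (n-1)/n$; the paper compares the hazard rate and density to their extremal values at $0$), so neither is tight, but your bound dominates for every $n \geq 2$.
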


\begin{corollary}\label{cor:main}
Let $D:=F^n$, where $F$ is \mhrplus\ and has decreasing density. Then $D$ has a unique symmetric equilibrium in the Free Market setting, and the expected consumer utility at this unique equilibrium is exceeded by the expected consumer utility in the Limited Entry setting.
\end{corollary}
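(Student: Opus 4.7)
The plan is to derive Corollary~\ref{cor:main} by chaining the three preceding theorems; no new machinery should be needed. The hypothesis of Corollary~\ref{cor:main} is that $F$ is \mhrplus\ with decreasing density, which is exactly the hypothesis of Theorem~\ref{thm:symmetriceq}. So the first step is to invoke Theorem~\ref{thm:symmetriceq} directly to conclude that $D = F^n$ admits a unique symmetric equilibrium in the Free Market setting, in which every provider sets price $1/h_2^n(F)$. This settles the existence/uniqueness half of the corollary.

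For the utility comparison, I would next apply Theorem~\ref{thm:mhrlimit}. To invoke it I need $F$ to be MHR and to have decreasing density. Decreasing density is given; MHR follows from the fact, emphasized in the paragraph after Theorem~\ref{thm:symmetriceq}, that \mhrplus\ is a strictly stronger condition than MHR. Thus $F$ satisfies the Limit-Entry Condition $H_1^n(F) \leq n/h_2^n(F)$.

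Finally, having established in the first step that $D$ admits a symmetric equilibrium in the Free Market setting, I can apply Theorem~\ref{thm:limit}: the Limit-Entry Condition is precisely the ``if'' direction of that characterization, so the expected consumer utility at the unique symmetric equilibrium in the Limited Entry setting exceeds that in the Free Market setting, which is the conclusion of the corollary.

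There isn't really a hard step here, since the corollary is a direct composition of Theorems~\ref{thm:symmetriceq}, \ref{thm:limit}, and \ref{thm:mhrlimit}. The only nontrivial checkpoint is the implication \mhrplus\ $\Rightarrow$ MHR needed to feed Theorem~\ref{thm:mhrlimit}; this is stated (but not proved) in the \mhrplus\ discussion, so in a fully self-contained write-up one might want to verify it explicitly from the eventual formal definition of \mhrplus\ in Section~\ref{sec:design}, but otherwise the proof is a two-line application of the three theorems above.
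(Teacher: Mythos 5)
Your proposal is correct and matches the paper's intended derivation exactly: the corollary is just the composition of Theorems~\ref{thm:symmetriceq}, \ref{thm:mhrlimit}, and \ref{thm:limit}. The one checkpoint you flag, that \mhrplus\ implies MHR, is in fact already proved in the paper via Observation~\ref{obs:defs} (conditions \ref{dist4}--\ref{dist6}), so nothing further is needed.
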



In the interest of completeness, we examine whether any of our assumptions can possibly be relaxed to more standard assumptions (e.g. \mhrplus\,to MHR). In short, Proposition~\ref{prop:examples} establishes that the answer is no, suggesting that there is indeed a relevant aspect of our stronger assumptions as it relates to our conclusions.

\begin{proposition}\label{prop:examples} (Different) distributions $D:=F^n$ with the following properties all exist:
\begin{itemize}
\item $F$ is MHR, but there exists a $p$ for which $F^*_p$ is not MHR (in fact, it is anti-MHR).
\item $F$ is MHR, but $D$ has no symmetric equilibrium in the Free Market setting.
\item $D$ has a symmetric equilibrium in the Free Market setting, but does not satisfy the Limit-Entry Condition.
\item $D$ satisfies the Limit-Entry Condition, but does not have a symmetric equilibrium in the Free Market setting.
\end{itemize}
\end{proposition}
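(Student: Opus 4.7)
The plan is constructive: for each of the four bullets I would exhibit a specific distribution $F$ (and a number of providers $n$) and verify the claimed property by direct calculation. The four examples are independent, so I would present them as a short catalog of explicit distributions rather than as a single unified proof.

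For the first item, I would take an MHR distribution with bounded support --- the uniform distribution on $[0,1]$ is the cleanest candidate --- and compute $F_p^*$ from the representation $1-F_p^*(q) = \Pr[v_1 \geq M + (q-p)]$, where $M$ is the maximum of $n-1$ other i.i.d.\ draws from $F$. For the uniform with $n=2$, $v_1 - v_2$ has a symmetric triangular density on $[-1,1]$, so $F_p^*$ can be written in closed form for every $p$. By choosing $p$ near an extreme of the support, $1-F_p^*(q)$ becomes concave on the relevant window, forcing the hazard rate $f_p^*/(1-F_p^*)$ to be strictly decreasing there; i.e.\ $F_p^*$ is anti-MHR. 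For the second item, I would construct an MHR distribution for which the profit curve of a single provider, given all others at the symmetric price $p$, has its unique maximizer $q^*(p)$ satisfying $q^*(p) \neq p$ for every $p$. A natural way to engineer this is with a kink or a careful truncation forcing the best-response curve $q^*(\cdot)$ to jump across the diagonal, so that the fixed-point equation $q^*(p)=p$ has no solution.

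For the third item, I would take an $F$ far from MHR+ but still regular enough that a symmetric Free Market equilibrium exists; a (truncated) heavy-tailed distribution such as an equal-revenue or Pareto-type distribution tends to have a large expected inverse hazard rate, making it easy to arrange $H_1^n(F) > n/h_2^n(F)$ and hence violate the Limit-Entry Condition. I would then check by direct computation of $q^*(\cdot)$ that the best response to symmetric price $p$ equals $p$ at some value, yielding the desired equilibrium. For the fourth item, I would modify the construction from item (2) --- e.g., by a rescaling or a mild perturbation of the tail --- so that the numerical values of $H_1^n(F)$ and $h_2^n(F)$ end up satisfying the Limit-Entry inequality, while the local fixed-point obstruction that kills the symmetric equilibrium is preserved; the two features of $F$ are essentially decoupled, which makes this possible.

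The main obstacle is items (2) and (4): producing distributions for which \emph{no} symmetric equilibrium exists. Nonexistence is a global statement about the best-response function $q^*(\cdot)$ --- one must rule out fixed points on the entire support --- so the construction has to exploit either a discontinuity in $q^*(\cdot)$ or a uniform monotone gap between $q^*(p)$ and $p$. The bulk of the verification work therefore consists of carefully tracking the shape of $q^*(\cdot)$ for the chosen $F$ and confirming that the diagonal $q=p$ is missed everywhere, while simultaneously maintaining the side condition required by the bullet (MHR in (2), Limit-Entry in (4)).
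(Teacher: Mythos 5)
Your proposal for the first bullet is provably wrong, and the error is worth understanding. The uniform distribution on $[0,1]$ is \mhrplus: take $c=f(0)=1$; then $cf(x)=1\geq 0=-f'(x)$ and $h_F(x)=1/(1-x)\geq 1=c$. Its density is constant, hence non-increasing, so by Proposition~\ref{prop:mainstars} $F^*_p$ is MHR for \emph{every} $p$ --- the uniform can never give an anti-MHR starred distribution. You can also check directly for $n=2$: with $t=q-p$, the hazard of $F^*_p$ is $2/(1-t)$ for $t\geq 0$ and $(1+t)\big/\!\left(1-(1+t)^2/2\right)$ for $t<0$, both strictly increasing. Your intuition is reversed as well: if $1-F^*_p$ is concave then $f^*_p=-(1-F^*_p)'$ is increasing while $1-F^*_p$ is decreasing, so the hazard ratio increases --- concavity of the survival function forces MHR, not anti-MHR; you would need a convex tail, which is exactly what \mhrplus\ plus decreasing density rules out under the star operation. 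To beat Proposition~\ref{prop:mainstars} one must violate a hypothesis, and the paper does this with a piecewise-exponential family $F_{\varepsilon,k}$ whose hazard jumps from $\varepsilon$ to $1$ (so it is MHR) but whose density jumps \emph{up} at the breakpoint (so it is not decreasing density), and then verifies by direct computation that the starred survival function is convex.

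For bullets two and four you overcomplicate the nonexistence argument. Proposition~\ref{prop:symmetric} already identifies the \emph{only} possible symmetric equilibrium as $p_F=1/h_2^n(F)$, for any $F$; there is no need to ``rule out fixed points on the entire support'' or engineer a discontinuity in a best-response map. It suffices to compute a single provider's best response to all rivals charging $p_F$ and observe that it differs from $p_F$. This is exactly the paper's method: with $n=2$, $\varepsilon=0.1$, $k=2$ it gets $p_F=40/13$ and a numerically computed best response $q\approx 3.506\neq p_F$, so no symmetric equilibrium exists. Bullet four then reuses the \emph{same} instance, because any $\varepsilon>1/27$ already makes $H_1^n(F_{\varepsilon,k})<n/h_2^n(F_{\varepsilon,k})$; no separate ``rescaling or mild perturbation'' is needed. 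Bullet three stays inside the same family with different parameters ($\varepsilon=0.02$, $k=4/3$), where the best response to $p_F$ does equal $p_F$ but Limit-Entry fails. Your regular-but-not-MHR heavy-tail candidate for bullet three is not unreasonable, but you would still need the $p_F$-is-a-best-response check, and having a single MHR family witness all four bullets is both cleaner and more informative about exactly where the gap between MHR and \mhrplus\ bites.
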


Finally, we include some extensions.  In Section~\ref{sec:costs}, we extend our results to the setting where providers have identical costs for providing service.  In Section~\ref{sec:asym}, we consider the case of asymmetric distributions $D = \times_i F_i$.  We prove that if $F_i$ is \mhrplus, then the asymmetric analog, $F^*_{i,\vec{p}_{-i}}$, is MHR.  We also show that for two providers, when $F^*_{i,\vec{p}_{-i}}$ is MHR, an equilibrium exists.

\subsection{Related Work}
\label{sec:relwork}



In order to compare with the literature on procurement auctions, we will call the providers ``suppliers.'' To the best of our knowledge, this is the first work to study procurement of suppliers under mandatory purchase of consumers.  The following literature review pertains to procurement auctions \emph{without} mandatory purchase for consumers. The most relevant work is that of \citep{saban2019procurement}.  They also study procurement auctions with $n$ heterogenous goods, each owned by a different supplier, a consumer population, and a mechanism designer whose objective is to maximize consumer utility.  The simplest comparison between these works is that~\citep{saban2019procurement} study a wide variety of different procurement models, whereas we focus on depth of one particular model. For example, our work fits into their ``First-Price Auction'' model (which is only one of many models they consider). But within this model, they consider only a two-supplier setting in a simple Hotelling game~\citep{hotelling1929},\footnote{Specifically: consumers are uniformly distributed along the unit line. Each supplier offers an item with fixed value at the endpoint of the line, and consumers value the items at their value minus the distance.} whereas we study this model in significantly more depth and generality.


 


Other work in procurement \citep{anton2004regulation,mcguire1995incomplete} also studies optimal centralized allocations for consumer utility, where the designer can choose which suppliers allocate to which consumers.  We do not study this form of allocation at all.


Most prior work studies the two-supplier case without mandatory purchase.  \citep{engel2002competition} also uses a third-party mechanism designer apart from consumers and suppliers, but studies different objectives than us: (1) welfare (consumer utility plus supplier revenue) and (2) supplier revenue minimization. They do not, however, study consumer utility maximization. 


Other works allow the consumers to act as the auctioneer \citep{chen2013group,dana2012buyer} and investigate whether it is better for the consumers to have one or two suppliers; the answer differs depending on whether the consumers' information is private or not.  These papers do not have a mechanism designer acting separately from the consumers; they also only study the stylized Hotelling model.






\subsection{Brief Summary}
We study consumer utility in a market with $n$ providers under mandatory purchase, motivated by the current state of ACA exchanges (Free Market) versus employer insurance markets (Limited Entry). We find clean sufficient conditions for equilibria to exist (Theorem~\ref{thm:symmetriceq}) in the Free Market setting, and establish that conditions like these are also necessary (Proposition~\ref{prop:examples}). We also establish clean necessary and sufficient conditions for consumer utility to improve with Limited Entry over the Free Market (Theorems~\ref{thm:limit} and~\ref{thm:mhrlimit}, and Corollary~\ref{cor:main}).

We also wish to briefly note the technical highlights. Typically, establishing existence/uniqueness of market equilibria requires solving a system of non-linear equations (and establishing uniqueness). Of course, our proofs must also accomplish this, but we get a surprising amount of leverage via Myersonian virtual value theory. That is, we interpret equilibrium conditions as one price being revenue-maximizing for a related consumer distribution. Due to mandatory purchase, this interpretation (while mathematically involved) is conceptually fairly clean. This enables us to break down a complex mathematical proof into conceptually digestible chunks, and also provides insight into the right conditions to search for. We are optimistic that these tools will continue to be useful in extensions beyond those considered explictly in this paper.
\section{Notation and Preliminaries}\label{sec:prelim}

We consider the following problem from the perspective of a market regulator. We use the language of healthcare providers throughout the paper (although we again remind the reader that healthcare exchanges are just a motivating example for our stylized model). There are $n$ providers, each of whom produces a single (distinct) plan. Each individual consumer in the market has a valuation vector $\vec{v}\in \mathbb{R}^n_+$ for the plans, with $v_i$ denoting their value for plan $i$. The market consists of a continuum over valuations $\vec{v}$, which can alternatively be interpreted as a distribution $D$ (over a random consumer drawn from the market).

We assume throughout the paper that $D$ is a product distribution (that is, $D:= \times_i D_i$ for single-dimensional $D_i$). We will use $F_i$ to denote the CDF of $D_i$, and assume that each $D_i$ also has a density function, or PDF, denoted by $f_i$. For our main results, we will also assume that $D$ is symmetric (that is, $D_i = D_j$ for all $i,j$, or the valuations are identically drawn across providers). In Section~\ref{sec:asym}, we consider extensions to asymmetric distributions.

In our context, let's briefly elaborate on these assumptions. Assuming that each $D_i$ admits a density function is extremely common in past literature (e.g.~\cite{Myerson81}), and is comparable to a ``large market'' assumption that no particular individual has an oversized role. The motivation for this assumption is purely technical, since it allows for clean closed-form definitions of conditions such as regularity or Monotone Hazard Rate. Assuming that $D$ is a product distribution is also extremely common (e.g.~\cite{Myerson81,ChawlaHK07, ChawlaHMS10}), and corresponds to the property that a consumer's value for one plan does not influence the probability of their value for another. While this assumption may initially appear restrictive, numerous works establish that results proved in this setting generally extend to richer settings as well. Indeed, our results immediately extend, for free, to the ``common base-value'' model of~\cite{ChawlaMS15},\footnote{In the common base-value model, each consumer has a ``base value'' for all plans, and an idiosynchratic value for each plan separately. Their value for a plan sums these two together.} but we focus on the independent setting for ease of exposition (see Section~\ref{sec:costs} for details on this particular extension). Assuming that $D$ is symmetric corresponds to the following: \emph{individuals} may certainly have distinct values for distinct plans. The fact that $D$ is symmetric simply means that \emph{a priori} there is nothing special about one plan versus another.   \\

\noindent{\textbf{Free Market Setting:}} In the free market setting, each provider $i$ sets a price $p_i$ on their plan. A consumer drawn from $D$ purchases the plan $i^* = \arg\max_{i}\{v_i - p_i\}$. Importantly, notice that the consumer \emph{must} purchase a plan, even if $v_i < p_i$ for all $i$. So provider $i$'s payoff is equal to $p_i \cdot \Pr_{\vec{v} \leftarrow D}[i = \arg\max_j \{v_j - p_j\}]$.\footnote{Observe also that because each $D_i$ has a PDF, ties occur with probability $0$, and there is always a unique $\arg\max$. As a result, we will not be careful between $\leq$ and $<$ when discussing preferences.} A best response of provider $i$ to $\vec{p}_{-i}$, where $-i$ denotes all agents other than $i$, is the payoff-maximizing price in response to $\vec{p}_{-i}$. A price vector $\vec{p}$ is a pure equilibrium if each provider is simultaneously best responding. An equilibrium $\vec{p}$ is symmetric if $p_i = p_j$ for all $i, j$. Observe that when both $D$ and $\vec{p}$ are symmetric, the payoff to each provider is just $p_i/n$. \\

\noindent{\textbf{Limited Entry Setting:}} The focus of this paper is contrasting the Free Market setting with a ``Limited Entry'' setting. In the Limited Entry setting, the regulator does not exert total control over the market (e.g. by directly setting prices), but simply restricts entry to the market. Specifically, each provider $i$ first sets a price $p_i$ on their plan, and then the market regulator selects a subset $S$ of $k < n$ providers to enter according to a pre-specified rule based only on the relative ordering of the $p_i$s (note that we require the regulator to always kick out at least one provider, so that the Free Market setting is not a special case of Limited Entry).\footnote{If the rule can depend on the prices themselves, then the regulator can effectively set the market prices directly by kicking everyone out of the market unless they set the desired prices.} A consumer drawn from $D$ purchases the plan $i^* = \arg\max_{i \in S}\{v_i - p_i\}$. Again, the consumer \emph{must} purchase a plan in $S$, even if $v_i < p_i$ for all $i \in S$. Provider $i$'s payoff is $0$ if they are not selected to be in $S$, or equal to $p_i \cdot \Pr_{\vec{v} \leftarrow D} [i = \arg\max_{j \in S} \{v_j - p_j\}]$ otherwise. A price vector $\vec{p}$ is again an equilibrium if each provider is simultaneously best responding, and symmetric if $p_i = p_j$ for all $i, j$. 

It is not hard to see when $D$ is symmetric that among all selection rules, and all equilibria of those rules, the one which results in highest consumer utility is to take the $k=n-1$ providers who set the lowest prices (tie-breaking randomly
), due to the following observation. As a result, we will simply refer to this particular rule as ``the'' Limited Entry setting, and study only this rule for the rest of the paper.

\begin{observation}\label{obs:limited}
For all symmetric $D$, the unique equilibrium in the Limited Entry setting is $\vec{p} = \vec{0}$.  
\end{observation}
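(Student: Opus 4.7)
The plan is to prove this in two parts: first verify that $\vec{p}=\vec{0}$ is an equilibrium, then establish uniqueness. Part one is immediate: at $\vec{p}=\vec{0}$ each provider's expected payoff is $0$ (price times share, both zero); any upward deviation to $q>0$ leaves the deviator as the unique maximum, so they are kicked out and still earn $0$, while downward deviations generate negative revenue. Thus $q=0$ is a weak best response for every provider.

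For uniqueness, I would first handle the symmetric case $\vec{p}=(p,\ldots,p)$ with $p>0$. Each provider is kicked out with probability $1/n$ and otherwise shares the $(n-1)$-provider market equally by symmetry, earning expected payoff $\frac{n-1}{n}\cdot p \cdot \frac{1}{n-1} = \frac{p}{n}$. The key step is to show that undercutting to $q=p-\epsilon$ guarantees entry into the market (since the remaining $n-1$ providers are now tied at $p$ and one of \emph{them} is kicked out instead), and delivers a market share approaching $\frac{1}{n-1}$ as $\epsilon\to 0^+$. The resulting payoff approaches $\frac{p}{n-1}>\frac{p}{n}$, a strict improvement, which rules out all symmetric positive-price equilibria.

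For asymmetric $\vec p$ with max price $p^{(1)}>0$ and multiplicity $k$, I would split on $k$. If $k=1$, the unique-max provider earns $0$ and can undercut to any $q\in(0,p^{(2)})$ to earn positive revenue --- unless $p^{(2)}=0$, in which case $\vec p=(p^{(1)},0,\ldots,0)$ and some zero-priced provider instead deviates upward to small $q\in(0,p^{(1)})$, collecting roughly $q/(n-1)$ in revenue while provider $1$ continues to be kicked out. If $k\geq 2$, each tied-max provider earns $(1-1/k)\cdot p^{(1)}\cdot s$ for their conditional-on-surviving share $s$; undercutting by $\epsilon$ lifts this to approximately $p^{(1)}\cdot s$, a factor $k/(k-1)>1$ gain, provided $s>0$.

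The main (small) obstacle is the edge case $s=0$, where a tied-max provider's price is so high they never win the market even when not kicked out; under any reasonable support assumption on $D$ (certainly under the MHR / decreasing-density conditions the paper invokes elsewhere) this does not occur, so the subcase is vacuous. Overall the argument is a short undercutting case analysis rather than a deep structural result, consistent with the claim being labeled an observation.
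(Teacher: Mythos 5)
Your proof takes the same undercutting approach as the paper, which gives only a two-sentence intuition (``the losing provider earns no profit and is better off undercutting, \ldots and so on, until the prices reach $\vec{0}$''); your version fills in the case analysis behind that intuition and is correct in outline.

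There is one genuine, if small, gap: the dismissal of the $s=0$ subcase as vacuous. It is not vacuous under any of the paper's distributional assumptions. For instance, take $F$ uniform on $[0,1]$ (which is \mhrplus\ with decreasing density), $n=3$, and $\vec{p}=(3,3,0)$. Each tied-max provider, conditioned on surviving, faces a price-$0$ competitor and can never win: $v_i - 3 < 0 \leq v_j$ for all $v_i, v_j \in [0,1]$, so $s=0$, and an $\epsilon$-undercut gains nothing. The fix, however, is in the same spirit as the rest of your argument, just with a non-infinitesimal undercut: when $s=0$, the tied-max provider earns exactly $0$, so they should instead deviate to a sufficiently small $q>0$. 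Since another tied-max provider remains at $p^{(1)}>q$, entry is guaranteed; and because $D$ has a density and is symmetric, the deviator's market share is continuous in $q$ and bounded away from $0$ for small $q$ (at $q=0$ it would be at least $1/(n-1)$ among the cheapest providers). Hence revenue is strictly positive, a profitable deviation. With that patch the case analysis is complete. Note also that the same caveat applies to your $k=1$, $p^{(2)}>0$ subcase --- ``any $q \in (0, p^{(2)})$'' is too strong when $p^{(2)}$ lies far outside the support; you want $q$ small, which is what you implicitly do elsewhere.
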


This is because the losing provider earns no profit and is better off undercutting.  The same is true for the next losing provider, and so on, until the prices reach $\vec{0}$. \\
\\
\noindent{\textbf{Consumer Utility:}} The focus of this paper is understanding the expected consumer \emph{utility} in equilibrium for both settings. Specifically, the expected consumer \emph{welfare} is equal to $\mathbb{E}_{\vec{v}\leftarrow D}[v_{i^*},\ \ i^* = \arg\max_{i}\{v_i-p_i\}]$, where the argmax is taken over $i \in [n]$ in the free market setting, or $i \in S$ in the Limited Entry setting. That is, the expected welfare is simply the expected value a consumer receives for the plan they purchase. The expected \emph{revenue} is simply the sum of the providers' payoffs. Consumer utility is then just welfare minus revenue.  Recall again that consumers can \emph{not} opt out, and some consumers may indeed get negative utility. 

\subsection{Distributional Properties}
Symmetric equilibria do not always exist for symmetric distributions (Proposition~\ref{prop:examples}), and limiting entry does not universally increase or decrease consumer utility compared to the Free Market setting (also Proposition~\ref{prop:examples}). As such, the focus of this paper is in providing sufficient conditions for (e.g.) (1) equilibria to exist, and (2) limiting entry to improve consumer utility. Below are properties of single-variate distributions which we'll use. The first two are standard in the literature. \mhrplus\ is a new condition we introduce which is a proper subset of MHR (see Observation~\ref{obs:defs}), and happens to be ``the right'' restriction of MHR for our setting. For all definitions below, ``non-decreasing'' means ``non-decreasing over the support of $F$,'' and ``for all $x$'' means ``for all $x$ in the support of $F$.''
Modulo our new \mhrplus, each of these conditions are common assumptions in past work (e.g.~\cite{BulowK96,PaiV10}). In typical applications, regularity (or at least MHR) suffice for desired positive results to hold. Proposition~\ref{prop:examples} establishes, perhaps surprisingly, that MHR doesn't suffice in our setting, motivating the \mhrplus\ definition.

\begin{definition}[Regular] A one-dimensional distribution with CDF $F$ and PDF $f$ is \emph{regular} if for all $x$, $x - \frac{1-F(x)}{f(x)}$ is monotone non-decreasing.
\end{definition}

\begin{definition}[Monotone Hazard Rate (MHR)] A one-dimensional distribution with CDF $F$ and PDF $f$ is \emph{MHR} if for all $x$, the hazard rate $h_F(x) :=\frac{f(x)}{1-F(x)}$ is monotone non-decreasing. 
\end{definition}


The following condition is new to this work.  Note that $f'(x)$ denotes $\frac{d}{dx} f(x)$.  The same is true for the notation $'$ throughout.  

\begin{definition}[\mhrplus] A one-dimensional distribution with CDF $F$ and PDF $f$ is \emph{\mhrplus} if there exists a constant $c \geq 0$ such that $cf(x) \geq -f'(x)$ and $h_F(x) \geq c$ for all $x$.
\end{definition}

\begin{definition}[Decreasing Density] A one-dimensional distribution with CDF $F$ and PDF $f$ has \emph{decreasing density} if $f(\cdot)$ is non-increasing.
\end{definition}

The following observation provides several equivalent conditions for the above definitions. In particular, the second condition concerning MHR (\ref{dist4}) and the second condition concerning \mhrplus (\ref{dist6}) establish how \mhrplus distributions are MHR distributions ``plus a little extra.''

\begin{observation}\label{obs:defs}The definitions above are equivalent to the following conditions: 
\begin{enumerate}
\item A distribution is regular iff $2f(x)^2 \geq -f'(x) (1-F(x))$ for all $x$. \label{dist1}
\item A distribution is regular iff $2f(x) h_F(x) \geq -f'(x)$ for all $x$. \label{dist2}
\item A distribution is MHR iff $f(x)^2 \geq -f'(x) (1-F(x))$ for all $x$. \label{dist3}
\item A distribution is MHR iff $f(x) h_F(x) \geq -f'(x)$ for all $x$. \label{dist4}
\item A distribution is \mhrplus\ iff it is MHR and $f(x) f(0) \geq -f'(x)$ for all $x$. \label{dist5}
\item A distribution is \mhrplus\ iff it is MHR and $f(x) h_F(0) \geq -f'(x)$ for all $x$. \label{dist6}
\item A distribution is \mhrplus\ iff $f(x) f(0) \geq -f'(x)$ and $h_F(x) \geq f(0)$ for all $x$. \label{dist7}
\end{enumerate}
\end{observation}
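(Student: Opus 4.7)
The seven conditions break into three natural blocks: (1)--(2) characterize regularity, (3)--(4) characterize MHR, and (5)--(7) characterize \mhrplus. My plan is to dispatch each block by (i) differentiating the defining monotone expression (virtual value for regular, hazard rate for MHR) to obtain a rational-function inequality whose numerator is the quantity of interest, then (ii) converting between the two listed forms by multiplying or dividing through by $1-F(x)$, which is strictly positive on the interior of the support and therefore preserves the direction of the inequality. This gives a uniform template for (1) versus (2), and for (3) versus (4).

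For the regularity block, differentiating the Myersonian virtual value $\varphi(x) = x - (1-F(x))/f(x)$ produces a rational expression whose denominator $f(x)^2$ is positive, so $\varphi$ is non-decreasing iff its numerator $2f(x)^2 + (1-F(x))f'(x)$ is non-negative, which is exactly (1). Dividing through by $1-F(x)$ and substituting $h_F(x) = f(x)/(1-F(x))$ gives (2). The MHR block is structurally identical: differentiating $h_F$ yields numerator $f'(x)(1-F(x)) + f(x)^2$, giving (3), and dividing by $1-F(x)$ gives (4).

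For the \mhrplus\ block (5)--(7), the crux is pinning down the existentially quantified constant $c$ appearing in the definition. Evaluating $h_F(x) \geq c$ at $x = 0$ forces $c \leq h_F(0)$, while the inequality $cf(x) \geq -f'(x)$ becomes strictly tighter as $c$ grows, so the \mhrplus\ definition is equivalent to the extremal choice $c = h_F(0)$ satisfying both inequalities simultaneously. Under the standard convention $F(0) = 0$, $h_F(0) = f(0)$, which recovers $f(x)f(0) \geq -f'(x)$ and yields (5) and (7); (6) is the same statement with $h_F(0)$ written in place of $f(0)$. The embedded MHR claim in (5) and (6) is obtained by multiplying the two \mhrplus\ inequalities: $f(x) h_F(x) \geq cf(x) \geq -f'(x)$, which is exactly characterization (4) of MHR.

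The one mildly subtle step is the monotonicity-in-$c$ argument justifying that fixing $c = f(0)$ loses no generality; everything else is routine rearrangement exploiting $1 - F(x) > 0$ on the support. I do not anticipate any serious obstacle, since the observation is essentially a dictionary between equivalent analytic forms of the three distributional properties, and the proof reduces to bookkeeping once the two derivative computations and the $c = f(0)$ reduction are in hand.
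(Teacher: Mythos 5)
Your proof takes essentially the same route as the paper: conditions (1)--(4) by differentiating the virtual value and the hazard rate and clearing the positive denominators, and (5)--(7) by reducing the existential constant $c$ to $h_F(0)=f(0)$ and extracting MHR from the chain $f(x)h_F(x)\geq cf(x)\geq -f'(x)$ (condition (4)). One sentence is backwards, though: since $f(x)\geq 0$, the constraint $cf(x)\geq -f'(x)$ becomes \emph{weaker}, not tighter, as $c$ grows --- that is exactly why one may push $c$ up to its cap without loss. Relatedly, the cap imposed by ``$h_F(x)\geq c$ for all $x$'' is a priori $\inf_x h_F(x)$, not $h_F(0)$; identifying the two requires first deducing MHR from the \mhrplus{} inequalities (which you do have, via the multiplication step), so that deduction should logically precede the claim that $c=h_F(0)$ is the extremal choice. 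With the direction of monotonicity corrected and the MHR step moved up front, the argument is complete and matches the paper's.
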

\begin{proof}
The first four conditions follow immediately from taking the derivative of a non-decreasing function, and ensuring that it is $\geq 0$ (and plugging in the definition of $h_F(\cdot)$). The fifth/sixth conditions are equivalent as $1-F(0) = 1$, so we just need to confirm that they follow from the definition of \mhrplus\ and the earlier conditions. Indeed, if a distribution is \mhrplus, then it is clearly MHR (by (\ref{dist4})). Also, as $h_F(0) \geq c$, and $c f(x) \geq -f'(x)$ for all $x$, then $f(x) h_F(0) \geq -f'(x)$ for all $x$. Similarly, if a distribution satisfies (\ref{dist6}), define $c:= h_F(0) = f(0)$. Then clearly $cf(x) \geq -f'(x)$ immediately from (\ref{dist6}), and also because the distribution is MHR, we have $h_F(x) \geq h_F(0) = c$ for all $x$, so the distribution is \mhrplus.

Finally to see that (\ref{dist7}) implies \mhrplus, observe that $f(0) \geq 0$ is the desired $c$ for the definition of \mhrplus. Also, any distribution that satisfies (\ref{dist6}) clearly has $f(x) f(0) \geq -f'(x)$ (as $f(0) = h_F(0)$). And, as distributions satisfying (\ref{dist6}) are MHR, we also have $h_F(x) \geq h_F(0) = f(0)$, as desired.
\end{proof}

We will use condition (\ref{dist3}) for MHR and (\ref{dist7}) for \mhrplus several times throughout the proofs in Sections~\ref{sec:design} and \ref{sec:asym}.  See Figure~\ref{fig:distclasses} for examples of distributions in each class.

\begin{figure}[h!]
\centering
\includegraphics[scale=.9]{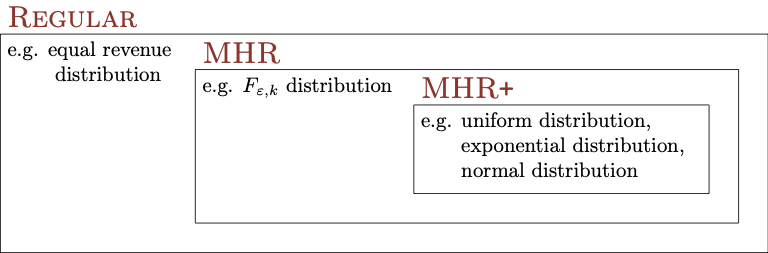}  
\caption{Recall that $\mhrplus \subset \text{MHR} \subset \text{regular}$.  The above shows examples of regular-only, MHR-but-not-\mhrplus, and \mhrplus distributions.  Note that essentially every MHR distribution is also \mhrplus except for those explicitly constructed so as not to be.  The $F_{\varepsilon,k}$ distribution can be found in Appendix~\ref{app:examples}.}
\label{fig:distclasses}
\end{figure}

Finally, we'll use the following notation for many of our theorem statements. 

\begin{definition}[Expected Order Statistics]
For a single-variate distribution with CDF $F$, define:
\begin{itemize}
\item $X_i^n(F)$ to be a random variable which is the $i^{th}$ highest of $n$ i.i.d. draws from $F$.
\item $V_i^n(F)$ to be the expected value of the $i^{th}$ highest of $n$ i.i.d. draws from $F$. That is, $V_i^n(F) := \mathbb{E}[X_i^n(F)]$.
\item $h_i^n(F)$ to be the expected hazard rate of the $i^{th}$ highest of $n$ i.i.d. draws from $F$. That is, $h_i^n(F) := \mathbb{E}\left[\frac{f(X_i^n(F))}{1-F(X_i^n(F))}\right]$. Note that the definition inside the expectation is intentional: we first find the $i^{th}$ highest sample, and then compute its hazard rate with respect to the original $F,f$.  
\item $H_i^n(F)$ to be the expected inverse hazard rate of the $i^{th}$ highest of $n$ i.i.d. draws from $F$. That is, $H_i^n(F) := \mathbb{E}\left[\frac{1-F(X_i^n(F))}{f(X_i^n(F))}\right]$.
\end{itemize}
\end{definition}

\subsection{Virtual Value Preliminaries}
A tool that we'll repeatedly use throughout our results is the Myersonian theory of virtual values~\cite{Myerson81}.

\begin{definition}[Virtual value] For a single-dimensional distribution with CDF $F$ and PDF $f$, define the \emph{virtual value} with respect to $F$ as $\varphi_F(\cdot)$, with $\varphi_F(v):= v - \frac{1-F(v)}{f(v)}$.
\end{definition}

Note also that $\varphi_F(v) = v - \frac{1}{h_F(v)}$. The inverse of $\varphi_F(v)$ is well-defined when $\varphi_F(v)$ is non-decreasing, and unique when strictly increasing.

\begin{theorem}[\cite{Myerson81}]\label{thm:myerson} The following conditions hold
\begin{itemize}
\item Let $F$ be regular. Then $\varphi_F(\cdot)$ is monotone non-decreasing, and $\arg\max_p\{p \cdot (1-F(p))\} = \varphi_F^{-1}(0)$. Observe that $\varphi_F^{-1}(0)$ is the set of all $v$ such that $v = 1/h_F(v)$. 

\item If $F$ is not regular, it is still the case that $q:= \arg\max_p \{p \cdot (1-F(p))\}$ satisfies $\varphi_F(q) = 0$.\footnote{For readers not familiar with this particular claim, it follows by observing that $\varphi_F(q) = 0$ is exactly the first-order condition for maximizing the revenue curve of $F$.}

\item Finally, for all $n$ and $F$, $\mathbb{E}[X_2^n(F)] = \mathbb{E}[\varphi_F(X_1^n(F))]$.\footnote{This follows from the equivalence of expected virtual surplus and expected revenue. The expected revenue of the second-price auction with $n$ bidders is the LHS, and the expected virtual value of the winner is the RHS.}
\end{itemize}
\end{theorem}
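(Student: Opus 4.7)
The plan is to handle the three bullets in order, using the revenue curve $R(p) := p \cdot (1-F(p))$ as the central object for the first two, and Myerson's revenue-equivalence for the third.

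First I would differentiate the revenue curve and rewrite the derivative in virtual-value form. Assuming $F$ admits a density $f$ that is positive on the support, we have
\[
R'(p) = (1-F(p)) - p\,f(p) = -f(p)\cdot\Bigl(p - \tfrac{1-F(p)}{f(p)}\Bigr) = -f(p)\,\varphi_F(p).
\]
Since $f(p) > 0$ on the support, the first-order condition $R'(q)=0$ is equivalent to $\varphi_F(q)=0$. In particular, any optimizer $q \in \arg\max_p\{p(1-F(p))\}$ interior to the support must satisfy $\varphi_F(q)=0$, which proves the second bullet. (Boundary optimizers should be ruled out by the usual argument that $R(0)=0$ and $R$ tends to $0$ at the top of the support, so an interior maximizer exists whenever $R$ is not identically zero; this is a routine check.)

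Next I would upgrade this to the first bullet under regularity. Regularity is defined precisely as $\varphi_F$ being non-decreasing, which is the first claimed conclusion. Given this monotonicity, $R'(p) = -f(p)\varphi_F(p)$ is nonnegative on $\{p : \varphi_F(p) \le 0\}$ and nonpositive on $\{p : \varphi_F(p) \ge 0\}$, so $R$ is non-decreasing up to any root of $\varphi_F$ and non-increasing thereafter. This forces $\arg\max_p R(p) = \varphi_F^{-1}(0)$. Finally, the observation that $\varphi_F^{-1}(0)$ is the solution set of $v = 1/h_F(v)$ is just an algebraic rewrite of $\varphi_F(v) = v - 1/h_F(v) = 0$.

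For the third bullet, I would invoke Myerson's revenue-equivalence in its cleanest form: for any Bayesian incentive-compatible, individually rational mechanism over i.i.d.\ values drawn from $F$, expected revenue equals expected virtual surplus of the allocation. Apply this to the second-price auction with $n$ bidders drawn from $F$: expected revenue equals the expected second-highest bid, $\mathbb{E}[X_2^n(F)]$, while the allocation always awards the item to the highest bidder, whose virtual value is $\varphi_F(X_1^n(F))$, giving expected virtual surplus $\mathbb{E}[\varphi_F(X_1^n(F))]$. Equating the two yields the identity. (Alternatively, for a self-contained derivation one can verify the identity directly by writing both sides as integrals against the densities of $X_1^n$ and $X_2^n$ and using integration by parts with $u = x$ and $dv = nF(x)^{n-1}f(x)\,dx$; the boundary terms vanish under mild tail conditions, reducing both sides to the same integral.)

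The step I expect to require the most care is the first bullet's claim that the $\arg\max$ equals all of $\varphi_F^{-1}(0)$ (not just a subset), since under regularity $\varphi_F$ could be constant equal to zero on an interval, in which case $R$ is constant there and every point of that interval is indeed a maximizer; this matches the set-valued statement. Edge cases where the support is unbounded or $f(0)=0$ (so $\varphi_F(0) = -\infty$) need a brief sanity check that the optimizer lies in the interior, but no deeper argument is required.
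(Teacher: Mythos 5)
Your proposal is correct and follows exactly the justifications the paper itself gives (in footnotes) for this cited result: the first two bullets via the sign analysis of $R'(p) = -f(p)\varphi_F(p)$ as the first-order condition for the revenue curve, and the third via revenue equivalence applied to the second-price auction. No substantive differences to report.
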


\begin{observation}Let $F$ be MHR. Then $\arg\max_p\{p \cdot (1-F(p))\}$ is unique. 
\end{observation}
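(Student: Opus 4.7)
The plan is to reduce the uniqueness claim to showing that the virtual value function $\varphi_F$ has a unique zero, and then to leverage MHR to establish that $\varphi_F$ is strictly increasing. Since MHR implies regularity, Theorem~\ref{thm:myerson} tells us that $\arg\max_p\{p\cdot(1-F(p))\} = \varphi_F^{-1}(0)$, so it suffices to show that the set $\varphi_F^{-1}(0)$ is a singleton.

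First I would recall that $\varphi_F(p) = p - \tfrac{1}{h_F(p)}$, and observe that MHR means $h_F$ is monotone non-decreasing. Since $h_F \geq 0$, the function $-1/h_F(\cdot)$ is then also monotone non-decreasing (wherever $h_F > 0$). Thus $\varphi_F$ is the sum of the strictly increasing function $p \mapsto p$ and the non-decreasing function $p \mapsto -1/h_F(p)$, hence strictly increasing. A strictly increasing function has at most one zero, giving uniqueness of $\varphi_F^{-1}(0)$ whenever it is nonempty.

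For existence of a zero (so that the $\arg\max$ is indeed a singleton rather than empty), I would note that at the lower endpoint $p=0$ of the support we have $\varphi_F(0) = -1/h_F(0) \leq 0$, while for large $p$ in the support, $1/h_F(p) \leq 1/h_F(0)$ is bounded, so $\varphi_F(p) \to +\infty$ (or reaches a positive value at the right endpoint of the support). Since $F$ has a density, $\varphi_F$ is continuous on the interior of the support, and the intermediate value theorem produces a zero.

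Combining these two steps, $\varphi_F^{-1}(0)$ is a single point, and Theorem~\ref{thm:myerson} then identifies it as the unique maximizer of $p \cdot (1-F(p))$. There is no real technical obstacle here: the only mild subtlety is handling the degenerate case $h_F(0)=0$, which can be dispensed with by noting that MHR combined with $h_F(0)=0$ forces a point mass at $0$ or an atom-free distribution where the zero of $\varphi_F$ still arises by continuity from the interior of the support.
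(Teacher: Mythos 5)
Your argument is correct and takes essentially the same route as the paper's: the paper observes that $v$ is strictly increasing while $1/h_F(v)$ is weakly decreasing under MHR, so $v = 1/h_F(v)$ has at most one solution, which is precisely your observation that $\varphi_F$ is strictly increasing. The additional discussion of existence and the $h_F(0)=0$ caveat are harmless but unnecessary here (the paper assumes $F$ has a density, so there are no atoms, and treats existence of the maximizer as given).
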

\begin{proof}
Observe that $v$ is strictly increasing in $v$, and $1/h_F(v)$ is weakly decreasing in $v$. Therefore, $v = 1/h_F(v)$ cannot have multiple solutions.
\end{proof}

\section{Best Responding in the Free Market Setting}
\label{sec:design}
In this section, we expand on the mathematics behind what it means to best-respond in the Free Market setting. Importantly, recall that every consumer \emph{must} select a provider, even if their utility for each is negative. Our focus is on the symmetric case ($D$ is symmetric, searching for a symmetric equilibrium). In Section~\ref{sec:asym} we consider extensions to the asymmetric case.

Consider the search for a symmetric equilibrium. The question we need to ask is ``given that the $n-1$ other providers are setting price $p$, is $p$ a best response for the remaining provider?'' To resolve this, we first need to understand more precisely the payoff received by the remaining provider for setting price $q$ while the other $n-1$ set price $p$.

\begin{defn}[Star Operation] Let $F^*_p(q)$ be such that when all providers $j \neq i$ are setting price $p$, and provider $i$ sets price $q$, the probability that the consumer purchases from provider $i$ is $1-F^*_p(q)$. That is, the payoff to provider $i$ in this circumstance is $q(1-F^*_p(q))$. 
\end{defn}

Our choice of notation suggests that we will reason about best-responding as a single-item revenue problem, with the consumer's value distribution defined by $F^*_p$. Our goal will be to find sufficient conditions for there to exist a $p$ such that $p$ itself is the revenue-maximizing price for the distribution $F^*_p$. Our plan is roughly as follows:
\begin{itemize}
\item Write an expression for $\varphi_{F^*_p}(\cdot)$. 
\item Observe that $\varphi_{F^*_p}(p) = 0$ is \emph{necessary} for $p$ to possibly be a symmetric equilibrium, by first-order conditions (Theorem~\ref{thm:myerson}). Show that this equation has a unique solution.
\item If $F^*_p$ is regular or MHR, then first-order conditions suffice for $p = \varphi^{-1}_{F^*_p}(0)$ to be a best response, and therefore $p$ is indeed a symmetric equilibrium (but this is not \emph{necessary} for $p$ to be a symmetric equilibrium).
\item Prove that if $F$ is \mhrplus\ with decreasing density, then $F^*_p$ is MHR.
\end{itemize}

Let's quickly highlight some aspects of this plan. Typically, finding a closed form for potential equilibria, and establishing that sufficient conditions hold is a matter of solving systems of non-linear equations. Often, this process may be mathematically engaging, but not offer insight connecting the sufficient conditions to the conclusions. The final step of our outline (that \mhrplus\ $F$ implies MHR $F^*_p$) is still mostly ``just math;'' however, the rest of the outline leverages existing theory of Myersonian virtual values to make the rest of the math more intuitive.

Let's now begin by writing an analytical expression for $F^*_p(q)$, $f^*_p(q)$, and $(f^*)'_p(q)$. This will let us (a) compute the virtual value $\varphi_{F^*_p}(q)$ and (b) check whether $F^*_p$ is regular or MHR. Recall that $1-F^*_p(q)$ is the probability that the consumer chooses to purchase from a provider priced at $q$ when the other $n-1$ are priced at $p$.  Below, observe that $g_p(q,x)$ is the density of the maximum of $n-1$ draws from $F$, after adding $p$ and subtracting $q$.  

\begin{proposition}\label{prop:stars}Let $g_p(q,x):= (n-1)f(x-q+p)(F(x-q+p))^{n-2}$. Let also $M:= \max\{0, q-p\}$.
\begin{itemize}
\item $F^*_p(q) = \int_M^\infty F(x) g_p(q,x)dx$.
\item $1-F^*_p(q) = \int_M^\infty (1-F(x)) g_p(q,x)dx+F(M+p-q)^{n-1}$.
\item $f^*_p(q) = \int_M^\infty f(x) g_p(q,x)dx$.
\item $(f^*_p)'(q) = \int_M^\infty (f'(x)) g_p(q,x) dx+f(0)g_p(q,M)$.
\end{itemize}
\end{proposition}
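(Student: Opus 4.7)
The entire proposition is a direct calculation; the plan is to derive $1-F^*_p(q)$ from first principles and then obtain the other three formulas via integration by parts and differentiation under the integral sign.

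First I would write the probability that a random consumer purchases from provider $i$ (priced at $q$) after conditioning on $v_i = x$: such a consumer prefers $i$ iff $v_j \le x - q + p$ for each of the $n-1$ other providers, which requires $x \ge q-p$ and, conditional on this, occurs with probability $F(x-q+p)^{n-1}$. Integrating gives $1-F^*_p(q) = \int_M^\infty f(x)\, F(x-q+p)^{n-1}\, dx$, where $M = \max\{0, q-p\}$ arises because the integrand vanishes for $x < q-p$. Next I would integrate by parts with $u = F(x-q+p)^{n-1}$ and $dv = f(x)\, dx$; since $du = g_p(q,x)\, dx$ and $v = -(1-F(x))$, the integral becomes $\int_M^\infty (1-F(x)) g_p(q,x)\, dx$ plus a boundary term at $x = M$. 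That boundary term equals $(1-F(M)) F(M-q+p)^{n-1}$, which reduces to $F(M+p-q)^{n-1}$ in both subcases $q \le p$ (where $M = 0$, $1-F(0) = 1$) and $q > p$ (where $M = q-p$ and $F(0)=0$ forces both sides to vanish), yielding the second bullet.

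The formula for $F^*_p(q)$ then follows by subtracting from $1$ and using the elementary identity $\int_M^\infty g_p(q,x)\, dx = 1 - F(M+p-q)^{n-1}$, obtained by recognizing $g_p(q,x)$ as the $x$-derivative of $F(x-q+p)^{n-1}$. The boundary terms cancel and the surviving $(1 - (1-F(x)))$ factor leaves $\int_M^\infty F(x) g_p(q,x)\, dx$, as claimed.

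For $f^*_p(q)$ and $(f^*_p)'(q)$ I would differentiate under the integral, leveraging the key identity $\partial_q g_p(q,x) = -\partial_x g_p(q,x)$, which holds because $g_p$ depends on $q$ and $x$ only through $x-q+p$. Integrating by parts then moves the resulting $\partial_x$ onto the leading factor $F$ (respectively $f$), producing $f(x)$ (respectively $f'(x)$) inside the integral. When $q \le p$ the lower limit is fixed at $0$, so the only boundary contribution comes from $x = 0$: for $F^*_p$ this vanishes because $F(0)=0$, giving $f^*_p(q) = \int_M^\infty f(x) g_p(q,x)\, dx$ cleanly; for $f^*_p$ it produces exactly the stated $f(0)\, g_p(q,M)$. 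When $q > p$ the moving lower limit contributes an extra Leibniz term $\pm F(M) g_p(q,M)$ or $\pm f(M) g_p(q,M)$, which either vanishes (since $g_p(q,q-p) = (n-1)f(0)F(0)^{n-2}$) or exactly cancels the corresponding integration-by-parts boundary term, so the final expressions remain valid.

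The main technical obstacle, and the one requiring the most care, is the bookkeeping of boundary contributions when the lower limit $M(q)$ itself depends on $q$; the identity $\partial_q g_p = -\partial_x g_p$ is the engine that keeps this manageable, because it lets me trade $q$-derivatives for $x$-derivatives without ever expanding $g_p$ into its component $f$, $F$, and $f'$ pieces.
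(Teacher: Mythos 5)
Your proposal is correct and follows essentially the same strategy as the paper's proof: a direct computation of one of the CDF/survival forms by conditioning, followed by Leibniz differentiation under the integral combined with the observation $\partial_q g_p = -\partial_x g_p$ and integration by parts for the density and its derivative. The only cosmetic difference is the entry point — the paper conditions on the max $X := \max_{i<n}\{v_i\}+q-p$ to write $F^*_p(q)=\Pr[v_n\le X]=\int_M^\infty F(x)g_p(q,x)\,dx$ directly and then obtains the survival form by algebra, whereas you condition on $v_n=x$ to get $\int_M^\infty f(x)F(x-q+p)^{n-1}\,dx$ and apply an extra integration by parts to reach the $g_p$-form — but this is a reordering of the same ingredients, and your boundary-term bookkeeping (the Leibniz term at $x=M$ either vanishing via $F(0)=0$ or cancelling the IBP boundary term) matches the paper's case analysis exactly.
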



Note that the definitions in Proposition~\ref{prop:stars} are referred to many times throughout the proofs of Propositions~\ref{prop:symmetric} and \ref{prop:mainstars}; the reader may want to keep them handy.  The proof of Proposition~\ref{prop:stars} appears in Appendix~\ref{app:proofs}.  

The following describes a condition that must be met as a result of first-order conditions in order to possibly have a symmetric equilibrium in the Free Market setting.  Note that this holds for \emph{any} $F$, even those which are not MHR or regular.

\begin{proposition}\label{prop:symmetric}
Let $D:= F^n$. The only possible symmetric equilibrium in the Free Market setting is $p_F:= \frac{1}{h_2^n(F)}$. If $F^*_{p_F}$ is regular, then $\frac{1}{h_2^n(F)}$ is a symmetric equilibrium.
\end{proposition}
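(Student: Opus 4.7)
The plan is to recast the symmetric equilibrium condition as a Myersonian first-order condition on the distribution $F^*_p$, and then simply evaluate. Concretely, for any candidate symmetric price $p$, the remaining provider's payoff as a function of his own price $q$ is $q(1-F^*_p(q))$. By Theorem~\ref{thm:myerson} (second bullet), if $p$ is a best response to itself then $p$ must satisfy $\varphi_{F^*_p}(p)=0$. So the first goal is to compute $\varphi_{F^*_p}(p)$ in closed form; since $\varphi_{F^*_p}(p) = p - (1-F^*_p(p))/f^*_p(p)$, I only need to evaluate $1-F^*_p(p)$ and $f^*_p(p)$.

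The key observation is that when $q=p$, the cutoff $M=\max\{0,q-p\}$ equals $0$, so the integrals in Proposition~\ref{prop:stars} extend over the whole support. From the second formula, symmetry immediately gives $1-F^*_p(p) = 1/n$ (each of the $n$ symmetric providers wins with equal probability), and this can also be checked by the substitution $u=F(x)$ in the integral form for $F^*_p(p)$. For the density, the third formula yields
\[
f^*_p(p) = (n-1)\int_0^\infty f(x)^2 F(x)^{n-2}\,dx.
\]
The main trick is to recognize this integrand: the density of the second-highest of $n$ i.i.d.\ draws from $F$ is $n(n-1)(1-F(x))F(x)^{n-2}f(x)$, so
\[
h_2^n(F) \;=\; \mathbb{E}\!\left[\tfrac{f(X_2^n(F))}{1-F(X_2^n(F))}\right] \;=\; n(n-1)\int_0^\infty f(x)^2 F(x)^{n-2}\,dx \;=\; n\cdot f^*_p(p).
\]
Notice that this quantity does not depend on $p$, which is the key cancellation that makes the proof clean.

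Combining these evaluations gives $\varphi_{F^*_p}(p) = p - (1/n)/f^*_p(p) = p - 1/h_2^n(F)$. This is zero if and only if $p = 1/h_2^n(F) = p_F$, so $p_F$ is the unique candidate for a symmetric equilibrium. For the second statement, if $F^*_{p_F}$ is regular, then Theorem~\ref{thm:myerson} (first bullet) says that $\arg\max_q q(1-F^*_{p_F}(q)) = \varphi^{-1}_{F^*_{p_F}}(0)$; since $p_F$ lies in this set by our computation, $p_F$ is a best response to the other providers all pricing at $p_F$, which is exactly the definition of a symmetric equilibrium.

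I do not anticipate a major obstacle: the entire proof hinges on the single algebraic identification $n f^*_p(p) = h_2^n(F)$, which is just order-statistic bookkeeping. The only thing to be a bit careful about is that under mere regularity, $\varphi^{-1}_{F^*_{p_F}}(0)$ could be an interval rather than a single point, but this only affects uniqueness of the best response, not the fact that $p_F$ is \emph{a} best response; the uniqueness of the symmetric equilibrium price itself follows from uniqueness of the solution to $\varphi_{F^*_p}(p)=0$, which we obtained above.
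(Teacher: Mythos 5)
Your proposal is correct and follows essentially the same route as the paper: reduce the symmetric-equilibrium condition to the Myersonian first-order condition $\varphi_{F^*_p}(p)=0$, use symmetry to get $1-F^*_p(p)=1/n$, and identify $n f^*_p(p)$ with $h_2^n(F)$ via the density of the second order statistic (the paper performs the identical identification by multiplying and dividing the denominator integral by $1-F(x)$). Your explicit handling of the regular case, including the remark that $\varphi^{-1}_{F^*_{p_F}}(0)$ may be a set but that membership suffices for $p_F$ to be a best response, is consistent with how the paper invokes Theorem~\ref{thm:myerson}.
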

\begin{proof}
By Theorem~\ref{thm:myerson} and the definition of $F^*_p$, in order for $p$ to be a best response to all other providers setting price $p$, we must have $\varphi_{F^*_p}(p) = 0$. Note that this does not guarantee that $p$ is a best response; this is just a necessary first-order condition.

Observe that, using Proposition \ref{prop:stars}, many of the terms in $F^*_p(q)$ (and $f^*_p(q)$) simplify when $p= q$, so we get:
\begin{align*}
\varphi_{F^*_p}(p) &= p - \frac{1-F^*_p(p)}{f^*_p(p)}\\
&= p - \frac{\int_0^\infty (1-F(x))(n-1)f(x) F(x)^{n-2}dx}{\int_0^\infty f(x) (n-1)f(x) F(x)^{n-2}dx}.
\end{align*}

Let's first examine the numerator. The numerator integrates over all $x$, the density $f(x)$, times the probability that exactly one of $n-1$ other draws from $F$ exceed $x$ (this is $(n-1)(1-F(x))F(x)^{n-2}$). This is exactly the probability that one of $n$ draws is the second-highest, which is just $1/n$. Another way to see that $1-F^*_p(p) = 1/n$ is just that $1-F^*_p(p)$, by definition, is the probability that a particular one of $n$ providers is the consumer's favorite. But as $D$ is symmetric, and the price $p$ is the same for all providers, this is just $1/n$. So now we wish to examine the denominator, with an extra factor of $n$ from the numerator:

$$\int_0^\infty n(n-1) f(x)^2 F(x)^{n-2} dx = \int_0^\infty n(n-1) h_F(x) f(x) F(x)^{n-2} (1-F(x))dx.$$

All we have done above is multiply and divide by $1-F(x)$. But now the integral is interpretable: we are integrating over all $x$, the number of ways to choose an ordered pair $(a,b)$ of $n$ draws ($n(n-1)$), times the density of $v_a$ at $x$ (this is $f(x)$), times the probability that $v_b$ exceeds $x$ ($1-F(x)$), times the probability that the remaining $n-2$ items do not exceed $x$ ($F(x)^{n-2}$), times the hazard rate at $x$. This is \emph{exactly} computing the expected hazard rate of the second-highest of $n$ draws from $F$! Therefore, we immediately conclude that:

$$\varphi_{F^*_p}(p) = p - \frac{1}{h_2^n(F)},$$

and therefore $\varphi_{F^*_p}(p) = 0$ iff $p = 1/h_2^n(F)$. Importantly, note that we have proven that \emph{for all $p$}, even those which are not equilibria, or otherwise related to $F$, that $\varphi_{F^*_p}(p) = p - \frac{1}{h_2^n(F)}$. 
\end{proof}

So at this point, we know know the unique candidate for a symmetric equilibrium (because it is the only candidate which satisfies first-order conditions of Theorem~\ref{thm:myerson}). If we can find sufficient conditions for $F^*_p$ to be regular (or MHR), then these first-order conditions suffice for $1/h_2^n(F)$ to indeed be a symmetric equilibrium. We identify such sufficient conditions below (remember that Proposition~\ref{prop:examples} establishes that MHR alone does not suffice, so some stronger conditions are necessary):

\begin{proposition}\label{prop:mainstars} Let $F$ be \mhrplus\ and have decreasing density. Then for all $p$, $F^*_p$ is MHR.
\end{proposition}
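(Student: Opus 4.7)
The plan is to verify the MHR characterization of Observation~\ref{obs:defs}(\ref{dist3}) for $F^*_p$: that $(f^*_p(q))^2 \geq -(f^*_p)'(q)(1-F^*_p(q))$ for all $q$. Plugging in the formulas from Proposition~\ref{prop:stars} and abbreviating $A := \int_M^\infty f(x) g_p(q,x)\,dx$, $B := \int_M^\infty (1-F(x)) g_p(q,x)\,dx$, $C := -\int_M^\infty f'(x) g_p(q,x)\,dx$, $E := F(M+p-q)^{n-1}$, and $G := f(0) g_p(q,M)$, we have $f^*_p(q) = A$, $1-F^*_p(q) = B+E$, and $-(f^*_p)'(q) = C-G$, so the target inequality reads $A^2 \geq (C-G)(B+E)$.

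The first move is to extract two pointwise consequences of \mhrplus\ from Observation~\ref{obs:defs}(\ref{dist7}): $-f'(x) \leq f(0) f(x)$ and $f(x) \geq f(0)(1-F(x))$. Integrating each against the non-negative weight $g_p(q,\cdot)$ yields $C \leq f(0)A$ and $B \leq A/f(0)$, whose product gives the fundamental estimate $CB \leq A^2$. When $q \geq p$ the boundary terms collapse: $F(0)=0$ forces $E = 0$, and $g_p(q,M) = 0$ when $n \geq 3$ (the $n=2$ subcase gives $(C-G)(B+E) = CB - GB \leq CB \leq A^2$), so this case is immediate.

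The harder case is $q < p$, where $M = 0$, $\tau := p - q > 0$, and both $E = F(\tau)^{n-1}$ and $G = f(0)(n-1) f(\tau) F(\tau)^{n-2}$ are strictly positive. Expanding $(C-G)(B+E) = CB + CE - GB - GE$, the estimate $CB \leq A^2$ no longer suffices. If $C \leq G$ the right-hand side is non-positive; otherwise, it suffices to show $CE \leq G(B+E)$. Using $C \leq f(0)A$ and plugging in the forms of $E$ and $G$, this reduces to $A \cdot F(\tau) \leq (n-1) f(\tau)(B+E)$, and unfolding the integrals defining $A$ and $B+E$ turns this into
\[
\int_0^\infty f(v)\, F(v+\tau)^{n-2} \bigl[\,f(v+\tau)\,F(\tau) - F(v+\tau)\,f(\tau)\,\bigr]\,dv \;\leq\; 0.
\]

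This final integral inequality is the main obstacle, but it reduces cleanly to monotonicity of the reverse hazard rate of $F$: the bracket has the sign of $f(v+\tau)/F(v+\tau) - f(\tau)/F(\tau)$, so for every $v \geq 0$ we need $y \mapsto f(y)/F(y)$ to be non-increasing, equivalently $F$ to be log-concave, equivalently $f(y)^2 \geq f'(y)F(y)$. This is precisely where the decreasing-density hypothesis enters: $f'(y) \leq 0$ gives $f'(y)F(y) \leq 0 \leq f(y)^2$ for free, so $F$ is log-concave, the bracket is non-positive at every $v \geq 0$, and the integral inequality holds.
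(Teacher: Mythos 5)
Your proposal is correct and follows the same high-level strategy as the paper: apply the two pointwise \mhrplus\ estimates $-f' \leq f(0)f$ and $f \geq f(0)(1-F)$ under the convolution against $g_p$ to obtain $CB \leq A^2$ (in your notation), then show that the boundary contributions satisfy $CE \leq G(B+E)$, which after expanding $A$, $B+E$, $E$, $G$ is exactly the paper's reduction to $A\,F(\tau) \leq (n-1)f(\tau)(B+E)$. Where you genuinely differ is in verifying this last inequality. The paper carries out two separate pointwise substitutions inside the integral, $f(x+\tau)\leq f(\tau)$ (decreasing density) and $F(\tau)\leq F(x+\tau)$ (monotone CDF), whereas you package both at once by observing that the bracket's sign is governed by monotonicity of the reverse hazard rate $f/F$, equivalently log-concavity of $F$, which decreasing density supplies via $f'(y)F(y)\leq 0\leq f(y)^2$. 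This is a cleaner lens on the same mechanism, and it also exposes that for this particular step the hypothesis is slightly stronger than necessary (log-concavity of $F$, rather than $f'\leq 0$ pointwise, would do). A small further improvement in your write-up: at the $M=q-p$ boundary the paper asserts that $g_p(q,q-p)=0$, which in fact fails for $n=2$ (it equals $f(0)$); your treatment---noting that $E=0$ and $G\geq 0$ give $(C-G)B\leq CB\leq A^2$ regardless of whether $G$ vanishes---is the correct and more careful argument. Your case split into $C\leq G$ versus $C>G$ is harmless but unnecessary, since showing $CE\leq G(B+E)$ handles both branches uniformly.
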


\begin{proof}
Lets first develop intuition for why \mhrplus\ is a convenient condition for reasoning about the starred distribution. Observe that each of the starred CDF/PDF/etc. functions are convolutions of the original CDF/PDF/etc. with $g_p(q,x)$. Unfortunately, just knowing that, for example, $f'(x)(1-F(x)) \leq f(x)^2$ for all $x$ (which is guaranteed by MHR from Observation \ref{obs:defs} (\ref{dist3})) is not enough for us to reason about these convolutions. But \mhrplus\ buys us something stronger, which is exactly what's needed for the first half of the proof: not only is $-f'(x)(1-F(x)) \leq f(x)^2$ for all $x$, but in fact $-f'(x) \leq cf(x)$ everywhere, which does allow us to make direct substitutions into the convolution.   To see this, consider attempting to manipulate $-(f^*_p)'(q)$ using only Obs.~\ref{obs:defs} (\ref{dist4}) for MHR---that $f(x)h_F(x) \geq -f'(x)$ for all $x$---instead of what we use from the definition for \mhrplus.  
The second step of the proof is dealing with the additional terms outside of the integral. Surprisingly, \mhrplus\ also turns out to be the right condition to reason transparently about these additional terms, although more creativity is required here than in step one. We now proceed with the proof.


First, observe that if $F$ has decreasing density, then $f(0) > 0$ (so we may divide by $f(0)$). Next, because $F$ is \mhrplus, recall by Observation~\ref{obs:defs} condition (\ref{dist7}) that we have $f(x) f(0) \geq -f'(x)$, and $h_F(x) \geq f(0)$.  Since $h_F(x) = f(x)/(1-F(x))$, this implies that $f(x) \geq f(0) (1-F(x))$. Therefore, we get:

\begin{align*}
1-F^*_p(q) &= \int_M^\infty (1-F(x)) g_p(q,x) dx + F(M + p-q)^{n-1} && \text{Prop. \ref{prop:stars}: Def. of $1-F^*_p(q)$}\\
&\leq \int_M^\infty \frac{f(x)}{f(0)} g_p(q,x) dx + F(M+p-q)^{n-1} && \text{Obs. \ref{obs:defs} (\ref{dist7}): $f(x) \geq f(0) (1-F(x))$} \\
&= f^*_p(q)/f(0) + F(M+p-q)^{n-1}. && \text{Prop. \ref{prop:stars}: Def. of $f^*_p(q)$}
\end{align*}

Similarly, we can write:

\begin{align*}
-(f^*_p)'(q) &= -\int_M^\infty f'(x) g_p(q,x) dx - f(0)g_p(q,M)    && \text{Prop. \ref{prop:stars}: Def. of $(f^*_p)'(q)$}\\
&\leq \int_M^\infty f(x)f(0)g_p(q,x)dx - f(0) g_p(q,M)   && \text{Obs. \ref{obs:defs} (\ref{dist7}): $f(x) f(0) \geq -f'(x)$}\\
&\leq f(0) f^*_p(q) - f(0) g_p(q,M). && \text{Prop. \ref{prop:stars}: Def. of $f^*_p(q)$}
\end{align*}

Therefore, we get:

$$(1-F^*_p(q)) \cdot (-f^*_p)'(q) \leq (f^*_p(q))^2 -f(0)g_p(q,M)\cdot (1-F^*_p(q)) -F(M+p-q)^{n-1}\int_M^\infty f'(x) g_p(q,x)dx.$$

The above inequality completes the ``step one'' referenced in the proof summary.  Our remaining task is to show that:

$$-f(0)g_p(q,M)\cdot (1-F^*_p(q)) -F(M+p-q)^{n-1}\int_M^\infty f'(x) g_p(q,x)dx \leq 0,$$

as then we will have established that $(1-F^*_p(q)) \cdot (-f^*_p)'(q) \leq (f^*_p(q))^2$, and therefore $F^*_p$ is MHR. Observe that this is clearly true when $M=q-p$, as the entire term above is $0$---both $g_p(q,q-p) = 0$ (as can be seen by plugging it in to its definition or reasoning about the probability of ties at 0, which has probability 0) and $F(0) = 0$. So the remaining case is when $M =0$. Here, we derive the following, starting with the justification and then showing the equations.

The first inequality follows by definition of $F$ being \mhrplus from Observation \ref{obs:defs} (\ref{dist7}): $f(x) f(0) \geq -f'(x)$. The second follows by dividing both sides by $f(0)g_p(g,0)$, which is strictly positive. The third line follows by evaluating the definition of $g_p(q,x)$ from Proposition~\ref{prop:stars}.

The fourth line then follows by multiplying both sides by $F^{n-1}(p-q)$, which is positive. The fifth line follows because when $M = 0$, $p > q$ and $F$ has decreasing density. The penultimate line follows as $F(p-q) \leq F(x+p-q)$ for all $x \geq 0$. 

The final line follows from the following reasoning. Recall that $1-F_p^*(q)$ denotes the probability that the consumer will choose to purchase a specific plan when that plan has price $q$ and all other plans have price $p$. The probability that this occurs conditioned on having value $x$ for the specific plan is the probability that the consumer has utility $v-p \leq x-q$ for every other plan, or value $v \leq x + p-q$, which occurs with probability exactly $F^{n-1}(x+p-q)$, and the previous term simply integrates this times $f(x)$ over all $x$.

\begin{align*}
-\int_0^\infty f'(x) g_p(q,x)dx &\leq \int_0^\infty f(0)f(x) g_p(q,x) dx\\
\Rightarrow \frac{-\int_0^\infty f'(x) g_p(q,x)dx}{f(0)g_p(q,0)}&\leq \int_0^\infty f(x)g_p(q,x)/g_p(q,0)dx\\
&= \int_0^\infty f(x) \frac{(n-1)f(x+p-q)F^{n-2}(x+p-q)}{(n-1)f(x)F^{n-2}(p-q)}dx\\
\Rightarrow  \frac{-F^{n-1}(p-q)\int_0^\infty f'(x) g_p(q,x)dx}{f(0)g_p(q,0)} & \leq F(p-q)\int_0^\infty f(x) \frac{f(x+p-q)F^{n-2}(x+p-q)}{f(x)}dx\\
&\leq F(p-q)\int_0^\infty f(x) F^{n-2}(x+p-q)dx\\
&\leq \int_0^\infty f(x) F^{n-1}(x+p-q)dx\\
&= 1-F_p^*(q).
\end{align*}

Finally, observe that this inequality is exactly what we want, as:
\begin{align*}
\frac{-F^{n-1}(p-q)\int_0^\infty f'(x) g_p(q,x)dx}{f(0)g_p(q,0)} &\leq 1-F^*_p(q) \\
\Rightarrow -F^{n-1}(p-q)\int_0^\infty f'(x) g_p(q,x)dx &\leq f(0)g_p(q,M)\cdot (1-F^*_p(q))\\
\Rightarrow -f(0)g_p(q,M)\cdot (1-F^*_p(q)) -F(M+p-q)^{n-1}\int_M^\infty f'(x) g_p(q,x)dx &\leq 0
\end{align*}
because again, $M=0$ in this case.
\end{proof}

And now, we can wrap up the proof of Theorem~\ref{thm:symmetriceq}, which claims that whenever $F$ is \mhrplus, the unique symmetric equilibrium in the Free Market setting for $D:= F^n$ is for each provider to set price $1/h_2^n(F)$.

\begin{proof}[Proof of Theorem~\ref{thm:symmetriceq}]
Proposition~\ref{prop:symmetric} establishes that $1/h_2^n(F)$ is a symmetric equilibrium as long as $F^*_{1/h_2^n(F)}$ is regular. Proposition~\ref{prop:mainstars} proves something even stronger: that $F^*_p$ is MHR for all $p$, as long as $F$ is \mhrplus\ with decreasing density. The two propositions together complete the proof.
\end{proof}

Note that Theorem~\ref{thm:symmetriceq} accomplishes several tasks:
\begin{itemize}
\item It establishes that a symmetric equilibrium exists subject to \mhrplus\ (which is not generally true without some assumptions, Proposition~\ref{prop:examples}).
\item It provides a clean closed form for this symmetric equilibrium.
\item It establishes uniqueness of this equilibrium (even stronger: this is the only possible equilibrium for all $F$). This is important because it lets us reason about ``\emph{the} utility in the Free Market setting'' without needing to worry about exactly which equilibrium we should be analyzing.
\end{itemize}

\section{Comparing Consumer Utilities}
\label{sec:eval}
In this section, we derive a Limit-Entry condition, which dictates when consumer utility is higher in the Limited Entry setting versus the Free Market. Note that our condition is well-defined even when no symmetric equilibrium exists in the Free Market setting. Let's first recall the Limit-Entry condition from Section~\ref{sec:intro}, which a distribution satisfies when $$H_1^n(F) \leq n/h_2^n(F),$$ where again, $H_1^n(F)$ is the expected inverse hazard rate of the highest of $n$ i.i.d. draws from $F$ and $h_2^n(F)$ is the expected hazard rate of the second-highest of $n$ i.i.d. draws from $F$. Recall that Theorem~\ref{thm:limit} states that consumer utility is higher in the Limited Entry setting versus the Free Market setting \emph{if and only if} the Limit-Entry condition holds. The main result of this section is a proof of Theorems~\ref{thm:limit}, and~\ref{thm:mhrlimit}.

\begin{proof}[Proof of Theorem~\ref{thm:limit}]
Let's first compute the expected consumer utility in the Limited Entry setting. 

\begin{lemma}The expected consumer utility at the unique equilibrium in the Limited Entry setting is $V_1^{n-1}(F)$.
\end{lemma}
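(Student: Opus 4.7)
The plan is essentially a direct computation once we invoke Observation~\ref{obs:limited}. First, I would note that by Observation~\ref{obs:limited}, the unique symmetric equilibrium in the Limited Entry setting has every provider setting $p_i = 0$. Therefore total revenue (the sum of providers' payoffs) is $0$, so consumer utility in this equilibrium equals consumer welfare, and it only remains to compute the expected welfare.

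Second, I would compute the expected welfare. The regulator's selection rule depends only on the relative ordering of the prices, which are all equal at equilibrium; so the set $S$ of $n-1$ providers admitted to the market is chosen by tie-breaking and is independent of the realized valuation vector $\vec{v} \sim D = F^n$. Conditioned on any fixed $S$ with $|S| = n-1$, the consumer purchases $\arg\max_{i \in S}\{v_i - 0\}$, and realizes utility (and welfare) $\max_{i \in S} v_i$. Since the $v_i$ are i.i.d.\ from $F$, for any fixed $S$ of size $n-1$ the random variable $\max_{i \in S} v_i$ has the same distribution as $X_1^{n-1}(F)$, so its expectation is $V_1^{n-1}(F)$ by definition. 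Averaging over the (independent) random choice of $S$ does not change this, giving $\mathbb{E}[\text{welfare}] = V_1^{n-1}(F)$.

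Combining the two steps, consumer utility equals $V_1^{n-1}(F) - 0 = V_1^{n-1}(F)$, as claimed. There is no real obstacle here; the only thing worth being a little careful about is that the regulator's rule is defined to depend only on the relative ordering of prices, not on the realized values, which is exactly what guarantees independence between $S$ and $\vec v$ and lets the tie-breaking disappear cleanly from the calculation.
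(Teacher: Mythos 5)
Your proof is correct and follows essentially the same route as the paper's: invoke Observation~\ref{obs:limited} to get all prices equal to zero (hence zero revenue), then observe that the consumer's welfare is the maximum of $n-1$ i.i.d.\ draws from $F$, i.e.\ $V_1^{n-1}(F)$. Your extra care about the independence of the selected set $S$ from the realized values $\vec v$ is a nice touch but does not change the substance of the argument.
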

\begin{proof}There are a total of $n-1$ providers, and recall from Observation~\ref{obs:limited} that the unique equilibrium has all prices set to $0$. Therefore, the consumer's expected payment is zero. The consumer picks their favorite plan, with value simply the maximum of $n-1$ i.i.d. draws from $F$. Together, we see that the consumer's expected utility at the unique symmetric equilibrium of the Limited Entry setting is $V_1^{n-1}(F)$.
\end{proof}

Now, let's compute the expected consumer utility in the Free Market setting. 

\begin{lemma} The expected consumer utility at the unique symmetric equilibrium (when it exists) in the Free Market setting is $V_1^n(F) - 1/h_2^n(F)$.
\end{lemma}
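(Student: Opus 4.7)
The proof is short and rests entirely on what Theorem~\ref{thm:symmetriceq} and Proposition~\ref{prop:symmetric} have already established. My plan is to simply unpack the definition of consumer utility (welfare minus revenue) at the symmetric equilibrium price $p_F := 1/h_2^n(F)$, and observe that each of the two terms has an immediate closed form.

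First, I would invoke Proposition~\ref{prop:symmetric}: whenever a symmetric equilibrium exists in the Free Market, every provider sets price exactly $p_F = 1/h_2^n(F)$. Since all $n$ prices are equal to $p_F$, a consumer with valuation vector $\vec{v}$ maximizes $v_i - p_i = v_i - p_F$ by choosing $i^* = \arg\max_i v_i$, so the purchased plan is simply the consumer's favorite. Hence the purchased value equals $\max_i v_i = X_1^n(F)$.

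Next, I compute the two components. The expected welfare is
\[
\mathbb{E}_{\vec v \leftarrow D}\!\left[v_{i^*}\right] \;=\; \mathbb{E}[X_1^n(F)] \;=\; V_1^n(F),
\]
by definition of $V_1^n(F)$. Since the consumer must purchase some plan and every plan costs $p_F$, the expected revenue (equivalently, the expected payment made by a random consumer) is deterministically $p_F = 1/h_2^n(F)$. Subtracting gives the expected consumer utility
\[
V_1^n(F) \;-\; \frac{1}{h_2^n(F)}.
\]

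There is essentially no obstacle here: the only thing that could have been subtle is that a symmetric equilibrium need not exist in general (this is precisely the ``when it exists'' qualification in the statement), but Proposition~\ref{prop:symmetric} pins down the unique candidate price so the computation above is unambiguous. The conclusion is exactly the claim of the lemma.
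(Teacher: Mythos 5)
Your proof is correct and follows essentially the same route as the paper's: both use Proposition~\ref{prop:symmetric} to pin the symmetric equilibrium price at $1/h_2^n(F)$, note that mandatory purchase makes the expected payment exactly that price, and observe that with equal prices the consumer buys their favorite plan, so expected welfare is $V_1^n(F)$. No gaps.
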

\begin{proof}
There are a total of $n$ providers, and the unique symmetric equilibrium (when it exists) has all prices set to $1/h_2^n(F)$. Therefore, the consumers expected payment is $1/h_2^n(F)$ (because the consumer must purchase a plan, even with negative utility for everything). The consumer's value for their favorite plan is the maximum of $n$ i.i.d. draws from $F$. Therefore, the consumer's expected utility at the unique symmetric equilibrium of the Free Market setting is $V_1^{n}(F) - 1/h_2^n(F)$.
\end{proof}

We therefore see that the expected utility is higher in the Limited Entry setting versus Free Market if and only if $V_1^{n-1}(F) \geq V_1^n(F) - 1/h_2^n(F)$. The remainder of the proof is rewriting this condition, using Myersonian virtual value theory in yet another way. We produce the steps below, and justify each step afterwards. Two of the three steps follow from basic algebra or a coupling argument. The middle step (line three) makes use of virtual value theory.

\begin{align*}
V_1^{n-1}(F) &\geq V_1^n(F) - \frac{1}{h_2^n(F)}\\
\Leftrightarrow \frac{n-1}{n}V_1^n(F)+ \frac{1}{n} V_2^n(F) &\geq V_1^n - \frac{1}{h_2^n(F)}\\
\Leftrightarrow \frac{n-1}{n}V_1^n(F) + \frac{1}{n} \left(V_1^n(F) - H_1^n(F)\right) &\geq V_1^n(F) - \frac{1}{h_2^n(F)}\\
\Leftrightarrow H_1^n(F) &\leq \frac{n}{h_2^n(F)}.
\end{align*}

The first equivalence follows by a coupling argument. One way to draw the highest of $n-1$ draws from $F$, or $X_2^{n-1}(F)$, is to take $n$ draws from $F$, remove one uniformly at random, and then examine the highest remaining.  With probability $1/n$, the highest of the $n$ draws is excluded, so the highest remaining is $X_2^n(F)$.  The rest of the time, a different draw is excluded and the highest of $n$ remains, giving $X_1^n(F)$.  Hence in expectation, $V_1^{n-1}(F) = \frac{n-1}{n}V_1^n(F) + \frac{1}{n} V_2^n(F) $.



The second equivalence follows from Theorem~\ref{thm:myerson}, as $V_2^{n}(F) = \mathbb{E}[\varphi_F(X_1^n(F))]$.  More familiarly, this is the fact that a second-price auction is revenue-maximizing, and that the revenue is equal to the virtual welfare of the highest-valued bidder in the iid setting.  Recall that $\varphi_F(v) = v - 1/h_F(v)$; then $\mathbb{E}[\varphi_F(X_1^n(F))] = V_1^n(F) - H_1^n(F)$.

The final equivalence follows by subtracting $V_1^n(F)$ from both sides and multiplying by $-1$.
\end{proof}

Finally, we prove Theorem~\ref{thm:mhrlimit}. Recall that Theorem~\ref{thm:mhrlimit} asserts that whenever $F$ is MHR with decreasing density, it satisfies the Limit-Entry condition. Recall that MHR alone is not enough to guarantee that there is an equilibrium in the Free Market setting for the Limited Entry setting to improve over, but that the condition is well-defined anyway. When $F$ is further \mhrplus, there is an equilibrium in both settings, and the consumer utility is always higher with Limited Entry (Corollary~\ref{cor:main}).

\begin{proof}[Proof of Theorem~\ref{thm:mhrlimit}]
The proof will follow from the steps below (justification for each step is provided afterwards). If $F$ is MHR with decreasing density, then:
\begin{align*}
f(0) &\geq \mathbb{E}[f(X_1^{n-1})]\\
\Leftrightarrow \frac{1}{f(0)} &\leq \frac{1}{\mathbb{E}[f(X_1^{n-1})]}\\
\Rightarrow \frac{1}{h_F(0)} & \leq \frac{n}{n\mathbb{E}[f(X_1^{n-1})]}\\
\Rightarrow H_1^n(F) &\leq \frac{n}{h_2^n(F)}.
\end{align*}

The first line follows because $F$ has decreasing density. Therefore $f(0) \geq f(x)$ for all $x$, and certainly $f(0) \geq \mathbb{E}[f(X)]$ for any non-negative random variable $X$ (including $X_1^{n-1}$). The second line follows by simple algebra. The third line makes two steps. On the LHS, we observe that $f(0) = h_F(0)$, so the left-hand sides are actually identical between the second and third lines. On the right-hand side, we just multiply the numerator and denominator by $n$. The final implication again makes two steps. On the left-hand side, we observe that as $F$ is MHR, $1/h_F(0) \geq 1/h_F(x)$ for all $x \geq 0$. Therefore, $1/h_F(0) \geq \mathbb{E}[1/h_F(X)]$ for any non-negative random variable $X$ (including $X_1^{n}$). On the right-hand side, we have used the equality $n\mathbb{E}[f(X_1^{n-1})] = h_2^n(F)$, which will be proved shortly (and complete the proof). 

The last line above completes the proof (once we establish the equality): we have shown that if $F$ is MHR with decreasing density, then the Limit-Entry Condition is satisfied. The remaining task is to prove Lemma~\ref{lem:h2}, below.

\begin{lemma}\label{lem:h2} $h^2_n(F) = \mathbb{E}[n\cdot f(X_1^{n-1}(F))]$.
\end{lemma}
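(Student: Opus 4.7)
The plan is to verify this equality by a direct computation using the standard formulas for order-statistic densities from an i.i.d. sample, observing that a single cancellation renders the two sides into the same integral. This is essentially the same algebraic manipulation already used inside the proof of Proposition~\ref{prop:symmetric} (when simplifying the denominator of $\varphi_{F^*_p}(p)$), so I expect no real obstacle — it is really a bookkeeping exercise rather than a new idea.

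First I would write down the density of the second-highest of $n$ i.i.d.\ draws from $F$, namely $n(n-1)f(x)F(x)^{n-2}(1-F(x))$, and unfold the definition of $h_2^n(F)$ as
\[
h_2^n(F) = \int_0^\infty \frac{f(x)}{1-F(x)} \cdot n(n-1)\, f(x)\, F(x)^{n-2}\,(1-F(x))\, dx.
\]
The factor $(1-F(x))$ in the density cancels exactly with the denominator of the hazard rate, leaving $h_2^n(F) = \int_0^\infty n(n-1)\, f(x)^2 F(x)^{n-2}\, dx$.

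Next I would write down the density of the highest of $n-1$ i.i.d.\ draws from $F$, namely $(n-1) f(x) F(x)^{n-2}$, and unfold $\mathbb{E}[n\cdot f(X_1^{n-1}(F))]$ as
\[
n\cdot \mathbb{E}[f(X_1^{n-1}(F))] = n\int_0^\infty f(x)\cdot (n-1)\,f(x)\,F(x)^{n-2}\, dx = \int_0^\infty n(n-1)\, f(x)^2 F(x)^{n-2}\, dx.
\]
The two resulting integrals are identical, completing the proof. The same calculation can alternatively be phrased combinatorially: both quantities count, at each $x$, the number of ordered pairs $(a,b)$ of indices ($n(n-1)$ of them) such that the $a$-th draw has density $f(x)$, the $b$-th draw exceeds $x$ with "weight" $f(x)$ (from turning the hazard rate into an additional $f$-factor), and the remaining $n-2$ draws are below $x$, which is precisely the recognition used in Proposition~\ref{prop:symmetric}. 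No distributional assumption on $F$ beyond the existence of a density is needed.
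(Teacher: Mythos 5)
Your proof is correct and is essentially identical to the paper's: both unfold the definition of $h_2^n(F)$ via the density of the second-highest order statistic, cancel the $(1-F(x))$ factor against the hazard rate, and recognize the resulting integral as $n$ times the expectation of $f$ under the density of $X_1^{n-1}(F)$.
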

\begin{proof}

\begin{align*}
h_2^n(F) &= \int_0^\infty n(n-1) f(x) h_F(x) F(x)^{n-2} (1-F(x))dx\\
&=\int_0^\infty n(n-1) f(x) \cdot f(x) F(x)^{n-2}dx\\
&= \mathbb{E}[n\cdot f(X_1^{n-1}(F))].
\end{align*}

The first line is simply the definition of $h_2^n(F)$. The second line just rewrites $h_F(x) (1-F(x)) = f(x)$. The third line observes that $(n-1) f(x) F(x)^{n-2}dx$ is the density of $X_1^{n-1}(F)$. Indeed, there are $n-1$ ways to choose a provider $a$ from $n-1$, $f(x)$ is the density of $v_a$ at $x$, and $F(x)^{n-2}$ is the probability that all $n-2$ other providers have $v_i \leq x$. So we are integrating the density of $X_1^{n-1}(F)$ at $x$, times $f(x)$ from $0$ to $\infty$. This exactly computes the expected value of $f(X_1^{n-1})$. The extra factor of $n$ is carried through.
\end{proof}
\end{proof}
\section{Extension: Non-Zero Costs and Common Base-Value}
\label{sec:costs}
In this section, we establish that our previous results hold verbatim when the provider faces a non-zero cost to provide for the consumer, or in the Common Base-Value model of~\cite{ChawlaMS15}. This section uses standard tricks from auction design to redefine the buyer's value distribution as the seller's profit distribution.Usually, these tricks adjust the distribution of study (e.g. by subtracting a constant). In our setting, due to the mandatory purchase, they actually have no impact on the distribution at all, and our results are independent of the (symmetric) per-consumer cost. We begin by defining non-zero cost.

\begin{definition}[Non-zero cost] Keep all aspects of the model the same, and additionally define a cost $c$ that each provider must pay per consumer covered. That is, if a provider covers a $y$ fraction of the market, and sets price $p$, that provider's payoff is $(p-c)y$.
\end{definition}

To see why our model is independent of the cost $c$, consider the following thought experiment. Instead of having the providers pay cost $c$ per consumer, tell every consumer that they must pay an additional $c$, no matter which provider they choose. Importantly, because purchase is mandatory, this additional cost $c$ doesn't affect the consumer's decisions at all. So consumer decisions are exactly the same as in the zero-cost model. Moreover, because providers are paid an additional $c$ by the consumer to cover the cost, their payoff at a given price vector $\vec{p}$ is exactly the same as if there were zero cost. We again emphasize that the key difference between our setting and typical single-item settings (where cost $c$ with value distribution $F$ is equivalent to zero cost with value distribution $F$ subtracting $c$) is the mandatory purchase. In single-item sale, the ``don't purchase'' option becomes more attractive when the ``purchase'' option has cost increased by $c$. But in our setting, there is no ``don't purchase'' option, and all options uniformly increase in cost by $c$.

Finally, it is easy to see that any strategy profile $\vec{p}$ in the above-described thought experiment (where the consumer pays an additional $c$, independent of choice) is identical to the strategy profile $\vec{p} + \vec{c}$ (i.e. increase all prices by $c$) if the providers pay cost $c$ per consumer covered. This is again because the consumer's choice is invariant under increasing all prices by the same fixed amount. This allows us to conclude the following:

\begin{proposition}\label{prop:costs} All previous results stated when the providers have zero cost hold verbatim when providers have non-zero cost. Specifically:
\begin{itemize}
\item If $D:=F^n$ has a symmetric equilibrium in the Free Market setting with zero cost, then for all $c$, $D:=F^n$ has a symmetric equilibrium in the Free Market setting with cost $c$ (and the equilibrium adds $c$ to all prices).
\item If $D:=F^n$ admits a symmetric equilibrium in the Free Market setting with zero cost, then for any $c$, the expected consumer utility at the unique symmetric equilibrium in the Limited Entry setting exceeds that at the unique symmetric equilibrium in the Free Market setting if and only if $F$ satisfies the Limit-Entry Condition.
\end{itemize}
\end{proposition}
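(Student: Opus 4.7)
The plan is to formalize the thought experiment preceding the proposition into a rigorous isomorphism between the cost-$c$ game and the zero-cost game, and then transfer both the equilibrium statement and the utility comparison across this isomorphism. Fix any $c \geq 0$, and consider the map $\vec{p}\mapsto\vec{p}-\vec{c}$ on price profiles. Consumer behavior is invariant under this map: the rule $\arg\max_i\{v_i-p_i\}$ depends only on the differences $p_i-p_j$, which are unchanged by subtracting $c$ from every coordinate. Hence the quantities $1-F^*_p(q)$ (probability of selection by a consumer) coincide between the two games under the corresponding shift. Provider payoffs also match up: in the cost-$c$ game, provider $i$ posting $p_i+c$ earns $(p_i+c-c)\cdot\Pr[i \text{ selected}]=p_i\cdot\Pr[i \text{ selected}]$, which is exactly the zero-cost payoff at price $p_i$. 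Therefore best-response correspondences coincide, and $\vec{p}$ is a (symmetric) equilibrium in the zero-cost game if and only if $\vec{p}+\vec{c}$ is a (symmetric) equilibrium in the cost-$c$ game.

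The first bullet then follows immediately by applying Theorem~\ref{thm:symmetriceq} in the zero-cost setting and translating the equilibrium by $\vec{c}$. For the Limited Entry setting, I will separately check that the unique symmetric equilibrium is $\vec{p}=\vec{c}$: the undercutting dynamic of Observation~\ref{obs:limited} still applies, except that the price floor is now $c$ rather than $0$, since below $c$ a selected provider earns a strictly negative payoff and prefers to exit.

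For the second bullet, I will write out consumer utility in each market setting as a function of $c$ and observe a common additive shift. The welfare term is invariant in $c$, because the consumer's choice and its value are unchanged. The payment term rises by exactly $c$ in each setting, because every provider's equilibrium price shifts up by $c$. So the consumer's expected utility at the symmetric equilibrium in the cost-$c$ Free Market setting equals $V_1^n(F)-c-1/h_2^n(F)$, and in the cost-$c$ Limited Entry setting equals $V_1^{n-1}(F)-c$. The $-c$ terms cancel when we compare the two, and Theorem~\ref{thm:limit} transfers verbatim: Limited Entry beats Free Market at cost $c$ exactly when the Limit-Entry Condition holds.

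There is no real mathematical obstacle here; the argument is essentially a change-of-variables exercise once the shift bijection is set up. The only subtle point is confirming that the Limited Entry undercutting argument terminates at the correct floor of $c$, but this is immediate from the payoff formula $(p_i-c)\Pr[i\text{ selected}]$: any selected provider priced below $c$ would strictly prefer to raise its price to $c$ (or exit), and iteratively the losing providers drive the remaining prices down to exactly $c$.
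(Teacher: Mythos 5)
Your argument is correct and is essentially the same as the paper's: both reduce the cost-$c$ game to the zero-cost game via the observation that, under mandatory purchase, consumer choices are invariant to adding a uniform constant to all prices, so $\vec{p}+\vec{c}$ in the cost-$c$ game replicates $\vec{p}$ in the zero-cost game (you phrase this as a change-of-variables bijection; the paper phrases it as a thought experiment where the consumer pays an additional $c$, but the content is identical). You also make explicit two small points the paper leaves implicit — that the Limited Entry undercutting dynamic bottoms out at $\vec{p}=\vec{c}$ rather than $\vec{0}$, and that the $-c$ shift in consumer utility is common to both settings and cancels in the comparison — which is a welcome bit of added care.
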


It is easy to see that all other results for the zero-cost setting extend to any cost $c$ by the above two bullets. The same conclusions hold for an extension to Common Base-Value distributions as well.   Note that the Common Base-Value setting may be relevant in our motivating example of health insurance, where a consumer has a base-value for being covered by any insurance at all.

\begin{definition}[Common Base-Value~\cite{ChawlaMS15}] A consumer's valuation vector $\vec{v}$ is drawn from a common base-value distribution $D:= F_0 \times F^n$ if they first draw $\langle w_0,\ldots, w_n\rangle\leftarrow D$ and then set $v_i:= w_0 + w_i$.   
\end{definition}

Similarly to the above reasoning on costs, observe that a consumer with values $\vec{w}$ or with values $\vec{v} = \vec{w} + w_0 \cdot \vec{1}$ will purchase \emph{exactly the same plan} at prices $\vec{p}$ (again, because ``not purchase'' is not an option). Therefore, the common base-value $w_0$ can simply be treated as $0$, because it does not impact any consumer choices, and all previous results immediately extend to this model as well.

\begin{proposition}\label{prop:costs} All previous results stated when the consumers are drawn from a product distribution hold verbatim when consumers are instead drawn from a common base-value distribution. Specifically:
\begin{itemize}
\item If $D:=F^n$ has a symmetric equilibrium in the Free Market setting, then for all base-value distributions $F_0$, $D':=F_0 \times F^n$ has a symmetric equilibrium in the Free Market setting. (It's the same equilibrium.)
\item If $D:=F^n$ admits a symmetric equilibrium in the Free Market setting, then for all base-value distributions $F_0$, the expected consumer utility at the unique symmetric equilibrium in the Limited Entry setting for $D':=F_0 \times F^n$ exceeds that at the unique symmetric equilibrium in the Free Market setting for $D':=F_0 \times F^n$ if and only if $F$ satisfies the Limit-Entry Condition.
\end{itemize}
\end{proposition}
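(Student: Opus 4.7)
The plan is to leverage the key observation already sketched before the statement: because purchase is mandatory, only the \emph{differences} $v_i - p_i$ matter for consumer decisions, and adding the same $w_0$ to every coordinate of $\vec{v}$ does not change any such difference. I would formalize this as a ``coupling'' between instances drawn from $D := F^n$ and instances drawn from $D' := F_0 \times F^n$, where we couple $w_0$ arbitrarily and draw $\langle w_1,\ldots,w_n\rangle$ identically, then set $\vec{v} := \langle w_1,\ldots,w_n\rangle + w_0 \cdot \vec{1}$. Under this coupling, for every realization and every price vector $\vec{p}$ (either in the Free Market or the Limited Entry setting), the chosen plan $i^\ast = \arg\max_{i \in S}\{v_i - p_i\} = \arg\max_{i \in S}\{w_i - p_i\}$ is identical, where $S = [n]$ or $S$ is the surviving set in Limited Entry. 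This is the whole content of the extension; the rest is bookkeeping.

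First I would prove the first bullet. Because the chosen plan is identical under the coupling for every $\vec{p}$, the market share $\Pr_{\vec{v} \leftarrow D'}[i = \arg\max_j\{v_j - p_j\}]$ as a function of $\vec{p}$ is \emph{identical} to the corresponding quantity under $D$. Hence each provider's payoff function $p_i \cdot \Pr[\cdots]$ is the same function of $\vec{p}$ under both distributions, so the best-response correspondences coincide. Consequently the sets of pure equilibria (and of symmetric pure equilibria) under $D$ and $D'$ are identical, establishing that $1/h_2^n(F)$ remains the unique symmetric equilibrium of the Free Market setting under $D'$ whenever it was one under $D$. The same argument trivially applies in the Limited Entry setting, where Observation~\ref{obs:limited} gives $\vec{p} = \vec{0}$ as the unique symmetric equilibrium under $D'$ as well.

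Next I would handle the utility comparison for the second bullet. Under the coupling, for any price vector $\vec{p}$ the provider that each consumer buys from is unchanged, so the expected revenue under $D'$ equals that under $D$. However, the expected welfare picks up exactly one additive term:
\begin{equation*}
\mathbb{E}_{\vec{v} \leftarrow D'}[v_{i^\ast}] \;=\; \mathbb{E}[w_0] + \mathbb{E}_{\vec{w} \leftarrow F^n}[w_{i^\ast}],
\end{equation*}
so the expected consumer utility under $D'$ equals the expected consumer utility under $D$ plus $\mathbb{E}[w_0]$. Crucially, $\mathbb{E}[w_0]$ does not depend on which setting (Free Market vs.\ Limited Entry) we are in, nor on $\vec{p}$. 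Therefore the utility \emph{gap} between the two settings under $D'$ equals the utility gap under $D$, and Theorem~\ref{thm:limit} applied to $D$ immediately gives the ``if and only if'' characterization via the Limit-Entry Condition on $F$.

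The main obstacle, such as it is, is purely conceptual: one needs to be careful that the mandatory-purchase assumption is what makes the additive $w_0$ shift a true \emph{invariance} rather than merely a translation of the value distribution (which, in the standard single-item auction literature, would instead change whether a consumer participates). Once this is clarified, the proof is essentially a one-line reduction via the coupling described above, and no new inequalities or distributional computations are required beyond noting that $\mathbb{E}[w_0]$ cancels in any difference of utilities between the two settings.
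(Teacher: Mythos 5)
Your argument is exactly the one the paper gives: couple $\langle w_1,\ldots,w_n\rangle$ identically and observe that adding a common $w_0$ to all coordinates leaves every choice $\arg\max_{i\in S}\{v_i-p_i\}$ unchanged because purchase is mandatory, so best-response sets (hence equilibria) coincide and the constant $\mathbb{E}[w_0]$ shift in welfare cancels when comparing settings. Correct and the same approach as the paper.
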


\section{Extension: Asymmetric Distributions} \label{sec:asym}
In this section, we extend our results on equilibria in the Free Market setting to asymmetric distributions. Here, we only prove existence results for certain kinds of equilibria rather than closed form solutions, hence it would not be possible to extend the Limit-Entry Condition without a significantly different approach. To be clear: we are in exactly the same model as the rest of the paper, but no longer assuming that $D$ is symmetric. We will still aim for sufficient conditions for a symmetric equilibrium to exist, but our extensions hold for restricted cases. Many of the lemmas in this section hold for general $n$, and we will state them as such.

The structure of this section will closely parallel that of Section~\ref{sec:design}, and the proofs follow similar intuition with additional technical work. We begin by updating the definition of the star operation.

\begin{definition}[Star Operation] Let $F^*_{i,\vec{p}_{-i}}(q)$ be such that when each provider $j \neq i$ sets price $p_j$, and provider $i$ sets price $q$, the probability that the consumer purchases from provider $i$ is $1-F^*_{i,\vec{p}_{-i}}(q)$. That is, the payoff to provider $i$ in this circumstance is $q(1-F^*_{i,\vec{p}_{-i}}(q))$. 
\end{definition}

As in Section~\ref{sec:design}, we now compute $F^*_{i,\vec{p}_{-i}}(q)$ (and the related quantities).  A proof is provided in Appendix~\ref{proof:asymApp}.

\begin{proposition}\label{prop:asymstars}Let $g_{i,\vec{p}_{-i}}(q,x):= \sum_{j \neq i} f_j(x-q+p_j) \prod_{k \notin \{i,j\}} F_k(x-q+p_k)$. Let also $M:= \max\{0, q-\min_{j \neq i}\{p_j\}\}$. Then:
\begin{itemize}
\item $F^*_{i,\vec{p}_{-i}}(q) = \int_M^\infty F_i(x) g_{i,\vec{p}_{-i}}(q,x)dx$.
\item $1-F^*_{i,\vec{p}_{-i}}(q) = \int_M^\infty (1-F_i(x)) g_{i,\vec{p}_{-i}}(q,x)dx+\prod_{j \neq i} F_j(M + \min_{j \neq i}\{p_j\} - q)$.
\item $f^*_{i,\vec{p}_{-i}}(q) = \int_M^\infty f_i(x) g_{i,\vec{p}_{-i}}(q,x)dx$.
\item $(f^*_{i,\vec{p}_{-i}})'(q) = \int_M^\infty (f'_i(x)) g_{i,\vec{p}_{-i}}(q,x) dx+f_i(0)g_{i,\vec{p}_{-i}}(q,M)$.
\end{itemize}
\end{proposition}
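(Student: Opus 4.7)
The plan is to derive all four identities by the same two-step recipe: first identify $g_{i,\vec{p}_{-i}}(q,\cdot)$ as a density, then repeatedly apply Leibniz's rule together with integration by parts. The key observation is that $g_{i,\vec{p}_{-i}}(q,x) = \frac{d}{dx}\prod_{j \neq i} F_j(x - q + p_j)$, so it is exactly the density (in $x$) of the random variable $Y := \max_{j \neq i}(v_j + q - p_j)$, whose support is $[q - \min_{j \neq i} p_j, \infty)$. A consumer purchases from provider $i$ precisely when $v_i - q \geq v_j - p_j$ for every $j \neq i$, which is equivalent to $v_i \geq Y$.

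I would first handle the two identities for $F^*$ and $1 - F^*$ by conditioning on $Y$. On $\{Y \geq 0\}$ we have $\Pr[v_i \geq Y \mid Y = y] = 1 - F_i(y)$, while on $\{Y < 0\}$ we have $\Pr[v_i \geq Y \mid Y = y] = 1$ because $v_i \geq 0$. Integrating the density $g_{i,\vec{p}_{-i}}(q,\cdot)$ of $Y$ against these two indicators and splitting at $0$ produces the integral $\int_M^\infty (1 - F_i(x)) g_{i,\vec{p}_{-i}}(q,x) dx$ together with a boundary contribution $\Pr[Y \leq 0]$. This boundary equals $\prod_{j \neq i} F_j(p_j - q)$ when $M = 0$ (i.e.\ $q \leq \min_{j \neq i} p_j$) and is automatically zero when $M > 0$, since then the analogous product contains a factor $F_j(0) = 0$ for $j = \arg\min_{j'} p_{j'}$. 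This yields the expression for $1 - F^*_{i,\vec{p}_{-i}}(q)$; subtracting from $1$ and using $\int_M^\infty g_{i,\vec{p}_{-i}}(q,x) dx = 1 - \Pr[Y \leq 0]$ then gives the expression for $F^*_{i,\vec{p}_{-i}}(q)$.

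Next I would obtain the formula for $f^*_{i,\vec{p}_{-i}}(q)$ by differentiating $F^*_{i,\vec{p}_{-i}}(q) = \int_M^\infty F_i(x) g_{i,\vec{p}_{-i}}(q,x) dx$ under the integral sign. The crucial structural fact is that $g_{i,\vec{p}_{-i}}(q,x)$ depends on $q$ and $x$ only through the shift $x - q$, so $\partial_q g = -\partial_x g$. Integrating by parts then converts the interior derivative into $\int_M^\infty f_i(x) g_{i,\vec{p}_{-i}}(q,x) dx$ plus a boundary term $F_i(M) g_{i,\vec{p}_{-i}}(q,M)$; this boundary term is cancelled by the $-M'(q) F_i(M) g_{i,\vec{p}_{-i}}(q,M)$ Leibniz contribution in the regime $M'(q) = 1$, and vanishes in the regime $M = 0$ because $F_i(0) = 0$. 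The formula for $(f^*_{i,\vec{p}_{-i}})'(q)$ follows from one further application of the same procedure to the integral expression just obtained for $f^*_{i,\vec{p}_{-i}}(q)$; here the boundary no longer fully cancels against the Leibniz term, which is exactly why the residual $+\, f_i(0) g_{i,\vec{p}_{-i}}(q,M)$ correction appears (surviving in the $M = 0$ regime and vanishing in the $M = q - \min_{j \neq i} p_j$ regime).

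The main obstacle is the careful bookkeeping across the two regimes $M = 0$ and $M = q - \min_{j \neq i} p_j$: in each regime one must verify that the Leibniz boundary term, the integration-by-parts boundary term, and the surviving integrand collapse into the single unified expression stated in the proposition. This is mechanical but error-prone; the cleanest strategy is to treat each regime separately and only then observe that the two resulting expressions admit the common closed form claimed. The entire argument parallels the symmetric derivation underlying Proposition~\ref{prop:stars}, with the only substantive change being that $g_{i,\vec{p}_{-i}}(q,x)$ is now a sum over $j \neq i$ of heterogeneous product terms rather than a single symmetric expression, which does not affect the outer structure of the derivation.
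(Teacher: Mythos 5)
Your proposal follows essentially the same route as the paper: identify $g_{i,\vec{p}_{-i}}(q,\cdot)$ as the density of $Y:=\max_{j\neq i}\{v_j+q-p_j\}$, condition on $Y$ for the first two bullets, and obtain the last two by Leibniz's rule plus integration by parts using $\partial_q g = -\partial_x g$, with the same regime-by-regime boundary bookkeeping. One point deserves attention: in the $M=0$ regime your derivation gives the boundary term $\Pr[Y\le 0]=\prod_{j\neq i}F_j(p_j-q)$, with each factor evaluated at its own shift, whereas the proposition (and the paper's proof) states $\prod_{j\neq i}F_j\bigl(M+\min_{j'\neq i}\{p_{j'}\}-q\bigr)$, which evaluates every factor at the common point $\min_{j'}p_{j'}-q$. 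These coincide only in the symmetric case $p_j\equiv p$, $F_j\equiv F$ (where both reduce to $F(M+p-q)^{n-1}$ as in Proposition~\ref{prop:stars}); in general your expression is the correct CDF of $Y$ at $0$, so your proof actually establishes a corrected form of the second bullet rather than the literal statement. This does not reflect a gap in your argument, but you should state explicitly that the additive term must read $\prod_{j\neq i}F_j(M+p_j-q)$.
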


Like in Section~\ref{sec:design}, we establish that if $F_i$ is \mhrplus\ and decreasing density, then $F_{i,\vec{p}_{-i}}^*$ is MHR, for all $\vec{p}_{-i}$. 

\begin{proposition}\label{prop:mainstars2} Let $F_i$ be \mhrplus\ and decreasing density. Then for all $\vec{p}_{-i}$, $F^*_{i,\vec{p}_{-i}}$ is MHR.
\end{proposition}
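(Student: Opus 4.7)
The plan is to mirror the two-step proof of Proposition~\ref{prop:mainstars}. The goal is to verify the MHR characterization from Observation~\ref{obs:defs} (\ref{dist3}), namely that $(1-F^*_{i,\vec{p}_{-i}}(q))\cdot(-f^*_{i,\vec{p}_{-i}})'(q)\le (f^*_{i,\vec{p}_{-i}}(q))^2$ for every $q$. To do so, I would first apply the two consequences of \mhrplus from Observation~\ref{obs:defs} (\ref{dist7})---namely $f_i(x)\ge f_i(0)(1-F_i(x))$ and $-f'_i(x)\le f_i(0)f_i(x)$---to replace $(1-F_i(x))$ and $-f'_i(x)$ pointwise inside the integrals of Proposition~\ref{prop:asymstars}. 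Because the kernel $g_{i,\vec{p}_{-i}}(q,\cdot)\ge 0$ regardless of its sum-of-products structure, these substitutions go through just as in the symmetric case and yield the asymmetric analogs
\[
1-F^*_{i,\vec{p}_{-i}}(q)\le \tfrac{1}{f_i(0)}f^*_{i,\vec{p}_{-i}}(q)+\prod_{j\ne i}F_j\bigl(M+\textstyle\min_{j\ne i}p_j-q\bigr),
\]
\[
-(f^*_{i,\vec{p}_{-i}})'(q)\le f_i(0)\,f^*_{i,\vec{p}_{-i}}(q)-f_i(0)\,g_{i,\vec{p}_{-i}}(q,M).
\]
Multiplying these and collecting terms, MHR reduces to the claim that the two ``boundary'' contributions $-f_i(0)g_{i,\vec{p}_{-i}}(q,M)(1-F^*_{i,\vec{p}_{-i}}(q))$ and $-\prod_{j\ne i}F_j(M+\min_{j\ne i}p_j-q)\int_M^\infty f'_i(x)g_{i,\vec{p}_{-i}}(q,x)\,dx$ sum to a nonpositive quantity.

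I would then split on $M$, just as in Proposition~\ref{prop:mainstars}. When $M=q-\min_{j\ne i}p_j>0$, the quantity $M+\min_{j\ne i}p_j-q$ equals $0$, so the product $\prod_{j\ne i}F_j(0)$ vanishes (the factor corresponding to $j^\ast=\arg\min_{j\ne i}p_j$ is $F_{j^\ast}(0)=0$). Thus the second boundary term is identically $0$, while the first is nonpositive because $f_i(0)$, $g_{i,\vec{p}_{-i}}(q,M)$, and $1-F^*_{i,\vec{p}_{-i}}(q)$ are each nonnegative. This case is immediate. The substantive work, and the only genuine departure from the symmetric argument, is the $M=0$ case ($q\le\min_{j\ne i}p_j$), which I expect to be the main obstacle.

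For the $M=0$ case, I would set $H(x):=\prod_{j\ne i}F_j(x+p_j-q)$, so that $g_{i,\vec{p}_{-i}}(q,x)=H'(x)$, the product equals $H(0)$, and $f^*_{i,\vec{p}_{-i}}(q)=\int_0^\infty f_i(x)H'(x)\,dx$ while $1-F^*_{i,\vec{p}_{-i}}(q)=\int_0^\infty f_i(x)H(x)\,dx$. Applying \mhrplus one more time to bound $-\int_0^\infty f'_i(x)g_{i,\vec{p}_{-i}}(q,x)\,dx\le f_i(0)f^*_{i,\vec{p}_{-i}}(q)$ and then dividing through by $f_i(0)H'(0)$, the required boundary inequality collapses to
\[
H(0)\int_0^\infty f_i(x)H'(x)\,dx\le H'(0)\int_0^\infty f_i(x)H(x)\,dx,
\]
which follows (via $f_i\ge 0$) from the pointwise inequality $H(0)H'(x)\le H'(0)H(x)$ for $x\ge 0$. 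This last pointwise statement is exactly the assertion that $H$ is log-concave, and it is the direct asymmetric analog of the step ``$F(p-q)\cdot f(x+p-q)\le f(p-q)\cdot F(x+p-q)$'' in the symmetric proof. Log-concavity of $H$ follows from log-concavity of each $F_j$ (preserved under products and translations), which in turn is implied by non-increasing density of $F_j$ via $f_j^2\ge F_j f'_j$ (when $f'_j\le 0$). Verifying this log-concavity step---extending the decreasing-density argument used for $F_i$ alone in the symmetric case to the full product $H$---is the one place where the asymmetric proof demands extra care; once it is in hand, the remaining algebra carries over from Proposition~\ref{prop:mainstars} verbatim.
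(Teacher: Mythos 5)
Your proposal is correct, and its first two-thirds (the two \mhrplus-derived bounds, the multiplication, the reduction to the two boundary terms, and the case $M=q-\min_{j\neq i}p_j$) coincide with the paper's proof in Appendix~\ref{proof:asymApp} --- indeed your handling of that first case is slightly more careful, since $g_{i,\vec{p}_{-i}}(q,M)$ need not vanish there and one really does need the sign argument you give for the first boundary term. Where you genuinely depart from the paper is the crux, the $M=0$ case. The paper proceeds by dividing by $f_i(0)g_{i,\vec{p}_{-i}}(q,0)$ and then manipulating the ratio $g_{i,\vec{p}_{-i}}(q,x)/g_{i,\vec{p}_{-i}}(q,0)$ term by term, distributing $\prod_{k\neq i}F_k(p_k-q)$ into the numerator sum while stripping the product factors out of the denominator sum --- a ratio-of-sums step that is immediate when $n=2$ (each sum has one term) but is delicate to justify as written for general $n$. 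You instead package everything into the single function $H(x)=\prod_{j\neq i}F_j(x+p_j-q)$, observe $g_{i,\vec{p}_{-i}}(q,\cdot)=H'$, and reduce the boundary inequality to $H(0)H'(x)\leq H'(0)H(x)$, i.e.\ log-concavity of $H$, which follows from log-concavity (hence from decreasing density) of each $F_j$ since log-concavity is preserved under translation and products. This is cleaner, avoids the division entirely (so the degenerate cases $H(0)=0$ or $H'(0)=0$ are handled uniformly), and isolates exactly the structural fact the paper's term-by-term computation is implicitly using. One shared caveat: like the paper's proof (which explicitly invokes ``for all $j$, $F_j$ has decreasing density''), your argument needs decreasing density (or at least log-concavity of the CDF) for \emph{all} marginals $F_j$, $j\neq i$, not only for $F_i$ as the proposition's hypothesis literally states; this reflects the intended reading that every marginal satisfies the assumptions, and is not a defect specific to your write-up.
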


The proof parallels that of Proposition~\ref{prop:mainstars}, and can also be found in Appendix~\ref{proof:asymApp}.  Finally, we use this proposition to establish that an equilibrium exists for all two-provider instances with \mhrplus\ marginals. 

\begin{theorem}
Let $F_1,F_2$ be \mhrplus. Then there exists a pure equilibrium for $D:= F_1 \times F_2$.
\end{theorem}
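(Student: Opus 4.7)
The approach is to recast pure equilibrium existence as a fixed-point problem for the best-response map, and invoke Brouwer's theorem on a compact set. By Proposition~\ref{prop:mainstars2}, each $F^*_{i,\vec{p}_{-i}}$ is MHR, so the observation following Theorem~\ref{thm:myerson} implies that the revenue function $q\mapsto q(1-F^*_{i,\vec{p}_{-i}}(q))$ admits a unique maximizer. Hence the best-response map
\[
\tau_i(\vec{p}_{-i}) := \arg\max_q\, q\bigl(1-F^*_{i,\vec{p}_{-i}}(q)\bigr)
\]
is a single-valued function, and a pure equilibrium is precisely a fixed point of $\Phi(p_1,p_2):=(\tau_1(p_2),\tau_2(p_1))$.

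Next, I would establish continuity of $\tau_i$ via Berge's maximum theorem. Dominated convergence applied to the integral formulas in Proposition~\ref{prop:asymstars} gives joint continuity of the revenue function in $(q,\vec{p}_{-i})$; combined with uniqueness of the maximizer under MHR, this yields continuity of $\tau_i$, and therefore of $\Phi$.

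The main obstacle is identifying a compact convex set $K\subseteq\mathbb{R}_+^2$ that $\Phi$ maps into itself. The plan is to leverage the \mhrplus\ hypothesis to establish a uniform lower bound on the hazard rate $h_{F^*_{i,\vec{p}_{-i}}}(q)$ in terms of a positive constant depending only on $F_i$ (the natural candidate being $f_i(0)$, obtained by pushing the \mhrplus\ inequalities $h_{F_i}(x)\geq f_i(0)$ and $f_i(0)f_i(x)\geq -f_i'(x)$ from Observation~\ref{obs:defs} (\ref{dist7}) through the integral definitions of Proposition~\ref{prop:asymstars}, in the same spirit as the analogous symmetric argument). Once such a bound $h_{F^*_{i,\vec{p}_{-i}}}(q)\geq c>0$ is in hand, the first-order condition $\tau_i=1/h_{F^*_{i,\vec{p}_{-i}}}(\tau_i)$ immediately yields $\tau_i\leq 1/c$ uniformly in $\vec{p}_{-i}$, so $K:=[0,1/c]^2$ is invariant under $\Phi$. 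A fallback route, should the direct hazard-rate bound prove too loose, is to implicitly differentiate the FOC and show $\tau_i$ is Lipschitz in the opponent's price with constant strictly below $1$, which also forces best-response iteration into a compact cube.

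Once invariance of $K$ is secured, Brouwer's fixed point theorem applied to the continuous self-map $\Phi:K\to K$ produces a fixed point, which by construction satisfies the mutual best-response condition and hence is a pure equilibrium. The boundedness step is the genuine technical crux: unlike in the symmetric setting there is no a priori closed-form candidate to target, and the \mhrplus\ structure (rather than just MHR) appears essential for preventing the best-response dynamics from escaping to infinity as $\vec{p}_{-i}$ grows.
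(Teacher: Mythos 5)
Your high-level plan---recast as a Brouwer fixed-point problem, establish continuity of the (single-valued, by MHR) best-response map, then find a compact invariant cube---matches the paper exactly, and the continuity step (Berge plus dominated convergence vs.\ the paper's implicit function theorem) is an equivalent route. However, your mechanism for establishing the invariant cube has a genuine gap. The proposed uniform lower bound $h_{F^*_{i,\vec{p}_{-i}}}(q)\geq c>0$ independent of $\vec{p}_{-i}$ is false: fix $q$ and send $p_j\to\infty$. Then $1-F^*_{i,\vec{p}_{-i}}(q)\to 1$ (the consumer almost surely prefers $i$) while $f^*_{i,\vec{p}_{-i}}(q)=\int_M^\infty f_i(x)g_{i,\vec{p}_{-i}}(q,x)\,dx\to 0$, so the hazard rate degenerates to $0$. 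The culprit is precisely the boundary term $\prod_{j\neq i}F_j(M+\min_j p_j - q)$ in the formula for $1-F^*_{i,\vec{p}_{-i}}(q)$ from Proposition~\ref{prop:asymstars}: it prevents the pointwise \mhrplus\ inequalities from giving $1-F^*\leq f^*/f_i(0)$ globally. Concretely, for $F_i=F_j=\text{Exp}(1)$ and $q\leq p$ one computes $h_{F^*_p}(q)=\frac{e^{-(p-q)}/2}{1-e^{-(p-q)}/2}\to 0$, and the best-response $\tau_i(p)\approx p-\ln(p/2)\to\infty$, so there is no $\vec{p}_{-i}$-free price bound of the form $1/c$.

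Your fallback (uniform Lipschitz constant strictly below $1$) also does not hold: in the same exponential example one gets $\partial\tau_i/\partial p_j=\frac{\tau_i+1}{\tau_i+2}\nearrow 1$, so the slope is strictly less than $1$ pointwise but not uniformly bounded away from $1$. What the paper actually needs, and proves, is weaker and different: (i) $\tau_i$ is weakly \emph{increasing} in $p_j$ (Corollary~\ref{cor:almost}, from the translation identity $g_{i,p+\varepsilon}(q+\varepsilon,x)=g_{i,p}(q,x)$ plus MHR of $F^*$), so the worst case on $[0,T]^2$ is $p_j=T$; and (ii) $\frac{1-F^*_{i,T\cdot\vec 1}(T)}{f^*_{i,T\cdot\vec 1}(T)}$ is \emph{independent of $T$}, because under mandatory purchase a uniform price shift leaves the consumer's choice unchanged, so it equals $\frac{1-F^*_{i,\vec 0}(0)}{f^*_{i,\vec 0}(0)}$ (Lemma~\ref{lem:there!}). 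Taking $T$ at least this $T$-free constant, and using Observation~\ref{obs:mostx}, gives the invariant cube. Your approach pushes the \mhrplus\ inequalities in the wrong place; the mandatory-purchase translation invariance, not a hazard-rate floor, is the key structural property that closes the compactness argument.
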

\begin{proof}
The proof will proceed by applying Brouwer's fixed point theorem. There are two high-level steps. First, we must establish that the best-response function for player $i$ (responding to a price set by the other player) is well-defined and continuous (Lemma~\ref{lem:brcont}). Second, we must establish that it maps a compact domain to itself. The first step is fairly straight-forward, while the second step requires some creativity (and indeed the second step is messy to extend beyond two providers). We establish the first half of the proof for general $n$. We begin first by establishing that the best response function $q_i(\cdot)$ is continuous for all $i$ when $F^*_{i,\vec{p}_{-i}}$ is MHR.

\begin{lemma}\label{lem:brcont} Let $F^*_{i,\vec{p}_{-i}}$ be MHR for all $\vec{p}_{-i}$. Then the best response function $q_i(\cdot)$ which takes as input $\vec{p}_{-i}$ and outputs the best response price for provider $i$ is continuous.
\end{lemma}
\begin{proof}
Because $F^*_{i,\vec{p}_{-i}}$ is MHR for all $\vec{p}_{-i}$, we can actually write a closed form for the best response function $q_i(\vec{p}_{-i})$. Indeed, because there is a unique $q$ satisfying the first order conditions of $\varphi_{F^*_{i,\vec{p}_{-i}}}(q) = 0$, this is the best response. We can therefore write:
\begin{align*}
q_i(\vec{p}_{-i}) &= (\varphi_{F^*_{i,\vec{p}_{-i}}})^{-1}(0),
\end{align*}
recalling from Proposition~\ref{prop:asymstars} that:
$$\varphi_{F^*_{i,\vec{p}_{-i}}}(q):=q-\frac{1- \int_M^\infty F_i(x) g_{i,\vec{p}_{-i}}(q,x)dx}{\int_M^\infty f_i(x) g_{i,\vec{p}_{-i}}(q,x)dx},$$
where $M= \max\{0,q-\min_{j \neq i}\{p_j\}\}$ and $g_{i,\vec{p}_{-i}}(q,x)= \sum_{j \neq i} f_j(x-q+p_j)\prod_{k \notin \{i,j\}} F_k(x-q+p_k)$. As $F^*_{i,\vec{p}_{-i}}$ is MHR, observe that $\varphi_{F^*_{i,\vec{p}_{-i}}}(\cdot)$ is monotone strictly increasing, and also continuous (in fact, it is also differentiable). Therefore, the inverse is well-defined, and also continuous (in fact, differentiable). 

Observe that our function $q_i(\cdot)$ is exactly the function which takes as input $\vec{p}_{-i}$ and outputs the $q$ such that $\varphi_{F^*_{i,\vec{p}_{-i}}}(q) = 0$. As we have noted that $\varphi_{F^*_{i,\vec{p}_{-i}}}(\cdot)$ is continuous
, $q_i(\cdot)$ is continuous as well, by the implicit function theorem.
\end{proof}

Now, we want to consider the function $\vec{q}(\cdot)$ which takes as input a price vector $\vec{p}$ and outputs $\langle q_i(\vec{p}_{-i})\rangle_{i \in [n]}$. This function is continuous, since each coordinate is continuous. We just want to show that it maps a compact region to itself, and then Brouwer's fixed point theorem will establish that an equilibrium exists. 

So our plan is to find a value $T$ such that for all $i$, $p_i \leq T$, $q_i(\vec{p})\leq T$ as well (from here, we will abuse notation and let $q_i(\vec{p})$ denote the $i^{th}$ coordinate of our vector-valued function $\vec{q}(\vec{p})$, which is also equal to the afore-defined $q_i(\vec{p}_{-i})$).  We begin with the following observation.


\begin{observation} \label{obs:mostx} Let $F^*_{i,\vec{p}_{-i}}$ be MHR, and let $\frac{1-F^*_{i,\vec{p}_{-i}}(x)}{f^*_{i,\vec{p}_{-i}}(x)} \leq x$. Then $q_i(\vec{p}) \leq x$.
\end{observation}
\begin{proof}
Clearly, $\varphi_{F^*_{i,\vec{p}_{-i}}}(x) = x - \frac{1-F^*_{i,\vec{p}_{-i}}(x)}{f^*_{i,\vec{p}_{-i}}(x)} \geq 0$. Because $\varphi_{F^*_{i,\vec{p}_{-i}}}(\cdot)$ is monotone increasing, it means that $q_i(\vec{p}) = (\varphi_{F^*_{i,\vec{p}_{-i}}})^{-1}(0) \leq x$.
\end{proof}

\begin{corollary}\label{cor:almost}Let $n=2$. Then $\partial q_i(\vec{p})/\partial p_j\geq 0$ when $i \neq j$.
\end{corollary}
\begin{proof}
Crucially, observe that $g_{i,p+\varepsilon}(q+\varepsilon,x) = g_{i,p}(q,x)$ for all $x,p,q,\varepsilon$. Therefore, $\frac{1-F^*_{i,p}(q)}{f^*_{i,p}(q)} =\frac{1-F^*_{i,p+\varepsilon}(q+\varepsilon)}{f^*_{i,p+\varepsilon}(q+\varepsilon)}$. This should be intuitive: if all prices increase by $\varepsilon$, then all of the probabilities of sale stay the same. This immediately implies that:
$$q_i(\vec{p})  =\frac{1-F^*_{i,p}(q_i(\vec{p}))}{f^*_{i,p}(q_i(\vec{p}))}
=\frac{1-F^*_{i,p+\varepsilon}(q_i(\vec{p})+\varepsilon)}{f^*_{i,p+\varepsilon}(q_i(\vec{p})+\varepsilon)}
\leq \frac{1-F^*_{i,p+\varepsilon}(q_i(\vec{p}))}{f^*_{i,p+\varepsilon}(q_i(\vec{p}))} $$
The last inequality follows as $1/h_{F_{i,p+\varepsilon}^*}(q_i(\vec{p}) + \varepsilon) \leq 1/h_{F_{i,p+\varepsilon}^*}(q_i(\vec{p}))$ since $F^*_{i,p}$ is MHR for any $p$. Let's see now what the last line implies. It means certainly that $\varphi_{F^*_{i,p+\varepsilon}}(q_i(\vec{p})) \leq 0$, and therefore $(\varphi_{F^*_{i,p+\varepsilon}})^{-1}(0) \geq q_i(\vec{p})$. We therefore conclude that $q_i(\vec{p})$ is weakly increasing with respect to the non-$i$ coordinate, proving the corollary.
\end{proof}

Corollary~\ref{cor:almost} establishes $q_i(\vec{p})$ is non-decreasing in the other provider's price, and further, that when $\vec{p} \in [0,T]^2$, then $q_i(\vec{p})$ is maximized when $p_j = T$ for $j \neq i$, and that $q_i(\vec{p}) \leq \frac{1-F^*_{i,T}(T)}{f^*_{i,T}(T)}$.  Then to conclude that we map a compact region to itself, we just need to show that $\frac{1-F^*_{i,T}(T)}{f^*_{i,T}(T)} \leq T$.


\begin{lemma}\label{lem:there!}For any $n$, let $F^*_{i,\vec{p}_{-i}}$ be MHR for all $i,\vec{p}_{-i}$. Let also $T \geq\frac{1-F^*_{i,\vec{0}}(0)}{f^*_{i,\vec{0}}(0)}$. Then $T \geq \frac{1-F^*_{i,T\cdot \vec{1}}(T)}{f^*_{i,T\cdot \vec{1}}(T)}$.
\end{lemma}
\begin{proof}
Observe that in fact $\frac{1-F^*_{i,T\cdot \vec{1}}(T)}{f^*_{i,T\cdot \vec{1}}(T)}$ is independent of $T$. To see this, recall that $F^*_{i,T\cdot \vec{1}}(T)$ concerns only the probability that the consumer purchases provider $i$ when all prices are $T$ (and $f^*_{i,T\cdot \vec{1}}(T)$ is its derivative with respect to $p_i$). Observe, however, that because the buyer must purchase an option even when their utility is negative for everything, that this probability is completely independent of $T$. Therefore, this is actually equal to $\frac{1-F^*_{i,\vec{0}}(0)}{f^*_{i,\vec{0}}(0)}$, which is just some finite number. Whenever $T \geq \frac{1-F^*_{i,\vec{0}}(0)}{f^*_{i,\vec{0}}(0)}$, the condition is satisfied.
\end{proof}

Finally, we may now simply set $T:= \max_{i \in \{1,2\}}\{\frac{1-F^*_{i,\vec{0}}(0)}{f^*_{i,\vec{0}}(0)}\}$ to guarantee that it is at least as large as both. By the work above, $\vec{q}(\cdot)$ maps $[0,T]^2$ to $[0,T]^2$, and is continuous. Therefore it has a fixed point by Brouwer's fixed point theorem, which is exactly a pure equilibrium.
\end{proof}

\bibliographystyle{plainnat}
\bibliography{local}

\newpage
\appendix
\section{Omitted Proofs from Section~\ref{sec:design}}\label{app:proofs}
\begin{numberedproposition}{\ref{prop:stars}} Let $g_p(q,x):= (n-1)f(x-q+p)(F(x-q+p))^{n-2}$. Let also $M:= \max\{0, q-p\}$.
\begin{itemize}
\item $F^*_p(q) = \int_M^\infty F(x) g_p(q,x)dx$.
\item $1-F^*_p(q) = \int_M^\infty (1-F(x)) g_p(q,x)dx+F(M+p-q)^{n-1}$.
\item $f^*_p(q) = \int_M^\infty f(x) g_p(q,x)dx$.
\item $(f^*_p)'(q) = \int_M^\infty (f'(x)) g_p(q,x) dx+f(0)g_p(q,M)$.
\end{itemize}
\end{numberedproposition}
\begin{proof}

For the first bullet, $F^*_p(q)$ denotes the probability that the consumer does not have highest utility for the lone provider ($n$) setting price $q$. Observe that $g_p(q,x)$ is the derivative of $F(x-q+p)^{n-1}$ with respect to $x$. If $X:= \max_{i \leq n-1}\{v_i\} +q-p$ denotes the random variable that draws $n-1$ times from $F$ and takes the maximum, then adds $q-p$, then $F(x-q+p)^{n-1}$ is the CDF of $X$, so $g_p(q,x)$ is the PDF of $X$. Observe also that the consumer will not purchase from the lone provider setting price $q$ iff $v_n-q \leq \max_{i \leq n-1}\{v_i\} - p \Leftrightarrow v_n \leq \max_{i \leq n-1} \{v_i\}+q-p \Leftrightarrow v_n \leq X$. 

One way to compute the probability that $v_n \leq X$ is to integrate over all possible values of $X$ ($x$), the density of $X$ at $x$ times the probability that $v_n$ does not exceed $x$. When $x < 0$, the probability that $v_n$ does not exceed $x$ is $0$. When $x < q-p$, the density of $X$ at $x$ is $0$. So we may restrict the integral to the range $[M,\infty)$. Finally, $F(x)$ is exactly the probability that $v_n$ does not exceed $x$, and $g_p(q,x)$ is exactly the density of $X$ at $x$, yielding the first bullet.

For the second bullet, observe that $\int_M^\infty g_p(q,x)dx$ is integrating the density of some random variable from $M$ to $\infty$. This random variable is supported on $q-p$ to $\infty$. So if $M = q-p$, the integral is $1$, and we have that:

$$1-F^*_p(q) = 1 - \int_M^\infty F(x) g_p(q,x)dx = \int_M^\infty (1-F(x)) g_p(q,x)dx$$

If $M = 0$, then the integral isn't necessarily $1$, but is instead $1-F(p-q)^{n-1}$, so an additional $F(p-q)^{n-1}$ needs to be added. Observe that $F(M+p-q)^{n-1} = 0$ when $M \neq 0$, and is $F(p-q)^{n-1}$ otherwise, as desired.

For the third bullet, we apply Leibniz' integral rule, and take the derivative of $F^*_p(q)$ with respect to $q$. The derivative of $\infty$ with respect to $q$ is $0$, and the derivative of $M$ with respect to $q$ is $\mathbb{I}(M > 0)$. So we get that:

$$f^*_p(q) = \int_M^\infty F(x) \frac{\partial g_p(q,x)}{\partial q} dx - \mathbb{I}(M > 0) F(M)g_p(q,M).$$

Doing integration by parts, with $u:= F(x)$ and $dv:= \frac{\partial g_p(q,x)}{\partial q}dx$, and observing, crucially, that $\frac{\partial g_p(q,x)}{\partial q} = - \frac{\partial g_p(q,x)}{\partial x}$ (so $du = f(x) dx$ and $v = -g_p(q,x)$), we get:

$$f^*_p(q) = F(x)g_p(q,x) |^\infty_M +\int_M^\infty f(x) g_p(q,x) dx - \mathbb{I}(M > 0) F(M)g_p(q,M).$$
$$f^*_p(q) = \int_M^\infty f(x) g_p(q,x) dx + F(M)g_p(q,M)- \mathbb{I}(M > 0) F(M)g_p(q,M).$$

Finally, observe that if $M> 0$, the terms at the right cancel. if $M = 0$, both terms are $0$ (and still cancel). 

For the fourth bullet, we again apply Leibniz' integral rule, and take the derivative with respect to $q$. We get:

$$(f^*)'_p(q) = \int_M^\infty f(x) \frac{\partial g_p(q,x)}{\partial q} dx -\mathbb{I}(M > 0)f(M) g_p(q,M).$$

Again doing integration by parts, with $u:= f(x)$ and $dv:= \frac{\partial g_p(q,x)}{\partial q}dx$ (so $du = f'(x)dx$, $v= -g_p(q,x)$), we get:

$$(f^*)'_p(q) = f(x)g_p(q,x) |^\infty_M +\int_M^\infty f'(x) g_p(q,x) dx - \mathbb{I}(M > 0) f(M)g_p(q,M).$$
$$(f^*)'_p(q) = \int_M^\infty f(x) g_p(q,x) dx + f(M)g_p(q,M)- \mathbb{I}(M > 0) f(M)g_p(q,M).$$

Finally, observe that if $M > 0$, then the right two terms cancel. If $M=0$, then we're left with $f(0)g_p(q,M)$. Observe that when $M \neq 0$, $g_p(q,M) = 0$, so the added term in the proposition statement is correct.
\end{proof}

\section{Omitted Proofs from Section~\ref{sec:asym}}\label{proof:asymApp}

\begin{numberedproposition}{\ref{prop:asymstars}}Let $g_{i,\vec{p}_{-i}}(q,x):= \sum_{j \neq i} f_j(x-q+p_j) \prod_{k \notin \{i,j\}} F_k(x-q+p_k)$. Let also $M:= \max\{0, q-\min_{j \neq i}\{p_j\}\}$. Then:
\begin{itemize}
\item $F^*_{i,\vec{p}_{-i}}(q) = \int_M^\infty F_i(x) g_{i,\vec{p}_{-i}}(q,x)dx$.
\item $1-F^*_{i,\vec{p}_{-i}}(q) = \int_M^\infty (1-F_i(x)) g_{i,\vec{p}_{-i}}(q,x)dx+\prod_{j \neq i} F_j(M + \min_{j \neq i}\{p_j\} - q)$.
\item $f^*_{i,\vec{p}_{-i}}(q) = \int_M^\infty f_i(x) g_{i,\vec{p}_{-i}}(q,x)dx$.
\item $(f^*_{i,\vec{p}_{-i}})'(q) = \int_M^\infty (f'_i(x)) g_{i,\vec{p}_{-i}}(q,x) dx+f_i(0)g_{i,\vec{p}_{-i}}(q,M)$.
\end{itemize}
\end{numberedproposition}

\begin{proof}
For the first bullet, $F^*_{i,\vec{p}_{-i}}(q)$ denotes the probability that the consumer does not have highest utility for the provider $i$ setting price $q$. Observe that $g_{i,\vec{p}_{-i}}(q,x)$ is the derivative of $\prod_{k \neq i} F_k(x-q+p_k)$ with respect to $x$. If $X:= \max_{j \neq i}\{v_j +q - p_j\} +q-p$ denotes the random variable that draws from each of $F_j$ ($j \neq i$), then adds $q - p_j$ and takes the maximum, then $\prod_{k \neq i} F_k(x - q + p_k)$ is the CDF of $X$, so $g_{i,\vec{p}_{-i}}(q,x)$ is the PDF of $X$. Observe also that the consumer will not purchase from provider $i$ iff $v_i-q \leq \max_{j \neq i}\{v_j - p_i\}  \Leftrightarrow v_i \leq \max_{j \neq i} \{v_j+ q - p_j\} \Leftrightarrow v_i \leq X$. 

One way to compute the probability that $v_i \leq X$ is to integrate over all possible values of $X$ ($x$), the density of $X$ at $x$ times the probability that $v_i$ does not exceed $x$. When $x < 0$, the probability that $v_i$ does not exceed $x$ is $0$. When $x < \min_{j \neq i} \{p_j - q\}$, the density of $X$ at $x$ is $0$. So we may restrict the integral to the range $[M,\infty)$. Finally, $F_i(x)$ is exactly the probability that $v_i$ does not exceed $x$, and $g_{p,\vec{p}_{-i}}(q,x)$ is exactly the density of $X$ at $x$, yielding the first bullet.

For the second bullet, observe that $\int_M^\infty g_{i,\vec{p}_{-i}}(q,x)dx$ is integrating the density of some random variable from $M$ to $\infty$. This random variable is supported on $q-\min_{j \neq i}\{p_j\}$ to $\infty$. So if $M = q-\min_{j \neq i}\{p_j\}$, the integral is $1$, and we have that:

$$1-F^*_{i,\vec{p}_{-i}}(q) = 1 - \int_M^\infty F_i(x) g_{i,\vec{p}_{-i}}(q,x)dx = \int_M^\infty (1-F_i(x)) g_{i, \vec{p}_{-i}}(q,x)dx$$

If $M = 0$, then the integral isn't necessarily $1$, but is instead $1-\prod_{j \neq i} F_j(\min_{j \neq i}\{p_j\} - q)$, so an additional $\prod_{j \neq i} F_j(\min_{j \neq i}\{p_j\} - q)$ term needs to be added. Observe that when $M = 0$, this is exactly the term added in the statement. When $M = q - \min_{j \neq i} \{p_j\}$, the added term is $0$ (and should be $0$, by above).

For the third bullet, we apply Leibniz' integral rule, and take the derivative of $F^*_{i,\vec{p}_{-i}}(q)$ with respect to $q$. The derivative of $\infty$ with respect to $q$ is $0$, and the derivative of $M$ with respect to $q$ is $\mathbb{I}(M > 0)$. So we get that:

$$f^*_{i,\vec{p}_{-i}}(q) = \int_M^\infty F_i(x) \frac{\partial g_{i,\vec{p}_{-i}}(q,x)}{\partial q} dx - \mathbb{I}(M > 0) F_i(M)g_{i,\vec{p}_{-i}}(q,M).$$

Doing integration by parts, with $u:= F_i(x)$ and $dv:= \frac{\partial g_{i,\vec{p}_{-i}}(q,x)}{\partial q}dx$, and observing, crucially, that $\frac{\partial g_{i,\vec{p}_{-i}}(q,x)}{\partial q} = - \frac{\partial g_{i,\vec{p}_{-i}}(q,x)}{\partial x}$ (so $du = f_i(x) dx$ and $v = -g_{i,\vec{p}_{-i}}(q,x)$), we get:

$$f^*_{i,\vec{p}_{-i}}(q) = F_i(x)g_{i,\vec{p}_{-i}}(q,x) |^\infty_M +\int_M^\infty f_i(x) g_{i,\vec{p}_{-i}}(q,x) dx - \mathbb{I}(M > 0) F_i(M)g_{i,\vec{p}_{-i}}(q,M).$$
$$f^*_i(q) = \int_M^\infty f_i(x) g_{i,\vec{p}_{-i}}(q,x) dx + F_i(M)g_{i,\vec{p}_{-i}}(q,M)- \mathbb{I}(M > 0) F_i(M)g_{i,\vec{p}_{-i}}(q,M).$$

Finally, observe that if $M> 0$, the terms at the right cancel. if $M = 0$, both terms are $0$ (and still cancel). 

For the fourth bullet, we again apply Leibniz' integral rule, and take the derivative with respect to $q$. We get:

$$(f^*)'_{i,\vec{p}_{-i}}(q) = \int_M^\infty f_i(x) \frac{\partial g_{i,\vec{p}_{-i}}(q,x)}{\partial q} dx -\mathbb{I}(M > 0)f_i(M) g_{i,\vec{p}_{-i}}(q,M).$$

Again doing integration by parts, with $u:= f_i(x)$ and $dv:= \frac{\partial g_{i,\vec{p}_{-i}}(q,x)}{\partial q}dx$ (so $du = f'_i(x)dx$, $v= -g_{i,\vec{p}_{-i}}(q,x)$), we get:

$$(f^*)'_i(q) = f_i(x)g_{i,\vec{p}_{-i}}(q,x) |^\infty_M +\int_M^\infty f'_i(x) g_{i,\vec{p}_{-i}}(q,x) dx - \mathbb{I}(M > 0) f_i(M)g_{i,\vec{p}_{-i}}(q,M).$$
$$(f^*)'_{i,\vec{p}_{-i}}(q) = \int_M^\infty f_i(x) g_{i,\vec{p}_{-i}}(q,x) dx + f_i(M)g_{i,\vec{p}_{-i}}(q,M)- \mathbb{I}(M > 0) f_i(M)g_{i,\vec{p}_{-i}}(q,M).$$

Finally, observe that if $M > 0$, then the right two terms cancel. If $M=0$, then we're left with $f_i(0)g_{i,\vec{p}_{-i}}(q,M)$. Observe that when $M \neq 0$, $g_{i,\vec{p}_{-i}}(q,M) = 0$, so the added term in the proposition statement is correct.
\end{proof}

We continue with another missing proof that parallels Section~\ref{sec:asym}.

\begin{numberedproposition}{\ref{prop:mainstars2}} Let $F_i$ be \mhrplus\ and decreasing density. Then for all $\vec{p}_{-i}$, $F^*_{i,\vec{p}_{-i}}$ is MHR.
\end{numberedproposition}

\begin{proof}
First, observe that if $F_i$ has decreasing density, then $f_i(0) > 0$ (so we may divide by $f_i(0)$). Next, recall by Observation~\ref{obs:defs} that we have $f_i(x) f_i(0) \geq -f'_i(x)$, and $h_{F_i}(x) \geq f_i(0) \Rightarrow f_i(x) \geq f_i(0) (1-F_i(x))$. Therefore, we get:

\begin{align*}
1-F^*_{i,\vec{p}_{-i}}(q) &= \int_M^\infty (1-F_i(x)) g_{i,\vec{p}_{-i}}(q,x) dx + \prod_{j \neq i} F_j(M + \min_{j \neq i}\{p_j\} - q)\\
&\leq \int_M^\infty \frac{f_i(x)}{f_i(0)} g_{i,\vec{p}_{-i}}(q,x) dx + \prod_{j \neq i} F_j(M + \min_{j \neq i}\{p_j\} - q)\\
&= f^*_{i,\vec{p}_{-i}}(q)/f_i(0) + \prod_{j \neq i} F_j(M + \min_{j \neq i}\{p_j\} - q).
\end{align*}

Similarly, we can write:

\begin{align*}
-(f^*_{i,\vec{p}_{-i}})'(q) &= -\int_M^\infty f'_i(x) g_{i,\vec{p}_{-i}}(q,x) dx - f_i(0)g_{i,\vec{p}_{-i}}(q,M)\\
&\leq \int_M^\infty f_i(x)f_i(0)g_{i,\vec{p}_{-i}}(q,x)dx - f_i(0) g_{i,\vec{p}_{-i}}(q,M)\\
&\leq f_i(0) f^*_{i,\vec{p}_{-i}}(q) - f_i(0) g_{i,\vec{p}_{-i}}(q,M).
\end{align*}

Therefore, we get:

\begin{align*}
(1-F^*_{i,\vec{p}_{-i}}(q)) \cdot (-f^*_{i,\vec{p}_{-i}})'(q) &\leq (f^*_{i,\vec{p}_{-i}}(q))^2 + f_i(0)g_{i,\vec{p}_{-i}}(q,M)\cdot (1-F^*_{i,\vec{p}_{-i}}(q)) \\&\quad- \prod_{j \neq i} F_j(M+\min_{j \neq i}\{p_j\} - q) \int_M^\infty f'_i(x) g_{i,\vec{p}_{-i}}(q,x)dx.
\end{align*}

So if we can show that

$$ f_i(0)g_{i,\vec{p}_{-i}}(q,M)\cdot (1-F^*_{i,\vec{p}_{-i}}(q))- \prod_{j \neq i} F_j(M+\min_{j \neq i}\{p_j\} - q) \int_M^\infty f'_i(x) g_{i,\vec{p}_{-i}}(q,x)dx\leq 0,$$

then we will have established that $(1-F^*_{i,\vec{p}_{-i}}(q)) \cdot (-f^*_{i,\vec{p}_{-i}})'(q) \leq (f^*_{i,\vec{p}_{-i}}(q))^2$, and therefore $F^*_{i,\vec{p}_{-i}}$ is MHR. Observe that this is clearly true when $M=q-\min_{j \neq i}\{p_j\}$, as the entire term above is $0$ (because $g_{i,\vec{p}_{-i}}(q,q-\min_{j \neq i} \{p_j\}) = 0$ and $F_j(0) = 0$ for all $j$). So the remaining case is when $M =0$. Here, we derive the following (justification for each equation follows). 

\begin{align*}
-\int_0^\infty f'_i(x) g_{i,\vec{p}_{-i}}(q,x)dx &\leq \int_0^\infty f_i(0)f_i(x) g_{i,\vec{p}_{-i}}(q,x) dx\\
\Rightarrow \frac{-\int_0^\infty f'_i(x) g_{i,\vec{p}_{-i}}(q,x)dx}{f_i(0)g_{i,\vec{p}_{-i}}(q,0)}&\leq \int_0^\infty f_i(x)g_{i,\vec{p}_{-i}}(q,x)/g_{i,\vec{p}_{-i}}(q,0)dx\\
&= \int_0^\infty f_i(x) \frac{ \sum_{j \neq i} f_j(x-q+p_j) \prod_{k \notin \{i,j\}} F_k(x-q+p_k)}{ \sum_{j \neq i} f_j(p_j-q) \prod_{k \notin \{i,j\}} F_k(p_k-q)}dx
\end{align*}
\begin{align*}
\Rightarrow  &\frac{-\prod_{k \neq i}F_k(p_k-q)\int_0^\infty f'_i(x) g_{i,\vec{p}_{-i}}(q,x)dx}{f_i(0)g_{i,\vec{p}_{-i}}(q,0)}\\
 & \leq \int_0^\infty f_i(x) \frac{ \sum_{j \neq i} f_j(x-q+p_j)F_j(p_j-q) \prod_{k \notin \{i,j\}} F_k(x-q+p_k)}{ \sum_{j \neq i} f_j(p_j-q)}dx\\
&\leq \int_0^\infty f_i(x) \frac{ \sum_{j \neq i} f_j(x-q+p_j)\prod_{k \neq i} F_k(x-q+p_k)}{ \sum_{j \neq i} f_j(p_j-q)}dx\\
&\leq \int_0^\infty f_i(x) \prod_{k \neq i} F_k(x-q+p_k)dx\\
&= 1-F_p^*(q).
\end{align*}

The first inequality follows by definition of $F$ being \mhrplus. The second follows by dividing both sides by $f_i(0)g_{i,\vec{p}_{-i}}(q,0)$, which is positive. The third line follows by evaluating the definition of $g_{i,\vec{p}_{-i}}(q,x)$.

The fourth line then follows by multiplying both sides by $\prod_{k\neq i}F_k(p_k-q)$, which is positive. The fifth line follows as for all $j$, $F_j(p_j-q) \leq F_j(x-q+p_j)$. The penultimate line follows because for all $j$, $F_j$ has decreasing density (and therefore $f_j(x-q+p_j) \leq f_j(p_j-q)$ for all $j$). 

The final line follows from the following reasoning. Recall that $1-F_{i,\vec{p}_{-i}}^*(q)$ denotes the probability that the buyer will choose to purchase item $i$ when item $i$ has price $q$ and all other items have price $\vec{p}_{-i}$. The probability that this occurs conditioned on $v_i = x$ is exactly $\prod_{k \neq i}F_k(x+p_k-q)$, and the previous term simply integrates this times $f_i(x)$ over all $x$.

Finally, observe that this inequality is exactly what we want, as (the first line below is exactly what we just proved above, and the final line is our remaining task).
\begin{align*}
\frac{-F^{n-1}(p-q)\int_0^\infty f'_i(x) g_p(q,x)dx}{f_i(0)g_p(q,0)} &\leq 1-F^*_p(q) \\
\Rightarrow -F^{n-1}(p-q)\int_0^\infty f'_i(x) g_p(q,x)dx &\leq f_i(0)g_p(q,M)\cdot (1-F^*_p(q))\\
\Rightarrow -f_i(0)g_p(q,M)\cdot (1-F^*_p(q)) -F(M+p-q)^{n-1}\int_M^\infty f'_i(x) g_p(q,x)dx &\leq 0.
\end{align*}

\end{proof}
\section{Calculations for Examples}
\label{app:examples}


Recall that $F^*_{p}(q)$ denotes the probability that a consumer drawn from $D=F^n$ does not have the highest utility for the lone provider $(n)$ setting price $q$, when the other providers $1,\dots,n-1$ are setting price p. Then, the expected payoff of the lone provider (n) in this circumstance is $q \cdot (1-F^*_{p}(q))$.\\

Consider the following class of examples, which we'll refer to as $F_{\varepsilon,k}$. Note that the hazard rate is the most interesting part of the distribution, and the CDF ensures such a distribution is well-defined.
\normalsize

$h_{F_{\varepsilon,k}}(x) = \varepsilon$ \ \  x.p. $\frac{1}{k}$ \ \ \ \ \ \ \ \ \ \ \ \ \ \ \ \ \  
$F_{\varepsilon,k}(x) = 1-e^{-\varepsilon x}$ \ \ \ \ \  \ \ \ \ \ \  
$f_{\varepsilon,k}(x) = \varepsilon e^{-\varepsilon x}$ \ \ \ \  for $x \leq \frac{ln(k)}{\varepsilon}$\\

$h_{F_{\varepsilon,k}}(x) = 1$ \ \  x.p. $1-\frac{1}{k}$
\ \ \ \ \ \ \ \ \ 
$F_{\varepsilon,k}(x) = 1-(\frac{1}{k})^{1 - \frac{1}{\varepsilon}}e^{-x}$ \ \ 
\ \ \ \ 
$f_{\varepsilon,k}(x) = (\frac{1}{k})^{1 - \frac{1}{\varepsilon}}e^{-x}$ \ \ for $x > \frac{ln(k)}{\varepsilon}$\\


\begin{observation}
For all $\varepsilon \in (0,1], k \geq 1$, $F_{\varepsilon,k}$ is MHR.
\end{observation}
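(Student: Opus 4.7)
The plan is a direct verification from the given piecewise definitions of $F_{\varepsilon,k}$ and $f_{\varepsilon,k}$, followed by checking that the resulting hazard rate is non-decreasing across the breakpoint $x^* := \ln(k)/\varepsilon$.

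First, I would compute $h_{F_{\varepsilon,k}}(x) = f_{\varepsilon,k}(x)/(1-F_{\varepsilon,k}(x))$ on each of the two intervals. For $x \leq x^*$, we have $1 - F_{\varepsilon,k}(x) = e^{-\varepsilon x}$ and $f_{\varepsilon,k}(x) = \varepsilon e^{-\varepsilon x}$, so the ratio is identically $\varepsilon$. For $x > x^*$, we have $1 - F_{\varepsilon,k}(x) = (1/k)^{1-1/\varepsilon} e^{-x}$ and $f_{\varepsilon,k}(x) = (1/k)^{1-1/\varepsilon} e^{-x}$, so the ratio is identically $1$. So on each piece the hazard rate is constant (and trivially non-decreasing).

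Next, I would confirm that $F_{\varepsilon,k}$ is a bona fide distribution by checking continuity of $F_{\varepsilon,k}$ at $x^*$: from below, $1-F_{\varepsilon,k}(x^*) = e^{-\varepsilon x^*} = e^{-\ln k} = 1/k$; from above, $1-F_{\varepsilon,k}(x^*) = (1/k)^{1-1/\varepsilon} e^{-x^*} = (1/k)^{1-1/\varepsilon}(1/k)^{1/\varepsilon} = 1/k$. So the two pieces agree at $x^*$, and hence we may interpret the hazard rate at $x^*$ as either boundary value.

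Finally, the only thing that could go wrong is the behavior at the junction $x^*$, where $h$ jumps from $\varepsilon$ (value on the left piece) to $1$ (value on the right piece). But since $\varepsilon \in (0,1]$, we have $\varepsilon \leq 1$, so this jump is non-negative, i.e., $h_{F_{\varepsilon,k}}$ is non-decreasing on all of $[0,\infty)$. That is exactly the MHR condition, completing the proof. There is really no main obstacle here; the entire argument amounts to a short routine check that the hypothesis $\varepsilon \leq 1$ is precisely what makes the hazard-rate step go in the right direction.
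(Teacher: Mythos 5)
Your proof is correct, and since the paper states this observation without any proof, your direct verification (constant hazard rate $\varepsilon$ on the first piece, constant hazard rate $1$ on the second, continuity of the CDF at $\ln(k)/\varepsilon$, and the upward jump $\varepsilon \le 1$ at the junction) is exactly the routine check the authors leave implicit. No gaps.
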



\begin{lem}
There exists $\varepsilon \in (0,1), \ k > 1$, \ such that $F^*_{p,(\varepsilon,k)}$ is anti-MHR.
\end{lem}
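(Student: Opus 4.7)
The plan is to work with $n=2$ (the smallest nontrivial case, which already exhibits the desired pathology) and to exploit the two-regime structure of $F_{\varepsilon,k}$ directly. With $n=2$, Proposition~\ref{prop:stars} reduces to $g_p(q,x)=f(x-q+p)$, and one can write, with $u := q-p$,
\[
1-F^*_p(q) \;=\; G(u), \qquad G(u) \;:=\; \int_0^\infty f(v)\bigl(1-F(v+u)\bigr)\,dv,
\]
so that $h_{F^*_p}(q) = -G'(u)/G(u) = \bigl(\int f(v)f(v+u)\,dv\bigr)\big/\bigl(\int f(v)(1-F(v+u))\,dv\bigr)$. Showing $F^*_p$ is anti-MHR is then equivalent to showing this ratio is (non-)decreasing in $u$ on the relevant interval.

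The second step is to exploit the fact that $F_{\varepsilon,k}$ has a density which jumps upward at $x_0 := \ln(k)/\varepsilon$ by a multiplicative factor of $1/\varepsilon$ (from $\varepsilon/k$ on the left to $1/k$ on the right), while its hazard rate itself is the monotone step function $\varepsilon \mathbf{1}[x\le x_0] + 1 \cdot \mathbf{1}[x>x_0]$. The key intuition is that although MHR requires only the hazard rate to be nondecreasing, the starred survival function is a convolution-like integral against $f$, so a sharp jump in $f$ induces a sharp but \emph{opposite-direction} change in the behavior of the numerator $-G'$ versus the denominator $G$. Concretely, I would fix $\varepsilon$ small and $k$ of order $1/\varepsilon^2$ (so that $x_0$ is moderate), and choose $p$ of order $x_0$.

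The main computation is to evaluate $G(u)$ and $-G'(u)$ by splitting each integral into the four regions cut out by $v \lessgtr x_0$ and $v+u \lessgtr x_0$. In each piece the integrand is an explicit product of exponentials, and the integrals admit closed forms in $\varepsilon, k, u$. Computing the resulting ratio $-G'(u)/G(u)$ at two carefully chosen values $u_1 < u_2$ lying on opposite sides of the break in $v+u=x_0$, the plan is to show that the ratio at $u_2$ is strictly smaller than at $u_1$: as $u$ increases past $x_0 - v$ for typical $v$ in the flat region, the numerator loses its dominant ``density$\times$density'' contribution (of order $\varepsilon^2 x_0 = \varepsilon \ln k$) much faster than the denominator loses its ``density$\times$survival'' contribution (of order $\varepsilon x_0 \cdot 1$), so the hazard rate collapses. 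Showing this collapse is strict and giving it over an interval then yields the anti-MHR conclusion.

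\textbf{Main obstacle.} The hardest part is not deriving any closed-form expression (each piece is elementary once one splits on the regions) but rather choosing parameters $\varepsilon, k, p$ so that the leading-order asymptotics in $\varepsilon$ of $-G'(u)$ fall faster than those of $G(u)$ \emph{monotonically on a nontrivial interval}, rather than just at two isolated points. I would likely identify the interval where both $v$ and $v+u$ straddle $x_0$ in opposite ways; over this interval the derivative of $-G'/G$ in $u$ has a clean sign coming from the jump in $f$, and a direct calculation yields anti-MHR. Once the construction is identified, a numerical check with explicit $(\varepsilon, k, p)$ (e.g.\ $\varepsilon$ small, $k$ moderately large) provides a concrete witness that makes the calculation checkable.
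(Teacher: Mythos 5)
Your plan sets up the same basic machinery as the paper — work with the two-regime structure of $F_{\varepsilon,k}$, split the integrals at the transition point $x_0 = \ln(k)/\varepsilon$, and show the starred hazard rate decreases. The $G(u)$ reformulation is correct for $n=2$ (and, for $u=q-p<0$, the boundary term in Proposition~\ref{prop:stars} is exactly what you absorb by letting $1-F(v+u)=1$ for $v+u<0$). But you are missing the one algebraic observation that makes the proof close cleanly, and it is precisely the observation that dissolves the ``main obstacle'' you flagged.

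Within each regime the hazard rate is a \emph{constant} $c_i$, so there $1-F = f/c_i$ and $-f' = c_i f$ hold as exact identities. Splitting at $x_0$ and writing $I_1 := \int_{v+u<x_0} f(v)f(v+u)\,dv$, $I_2 := \int_{v+u>x_0} f(v)f(v+u)\,dv$, you get (for $u>0$, ignoring the $M=0$ boundary term which the paper handles separately)
\[
-G'(u) = I_1 + I_2,\qquad G(u) = \frac{I_1}{c_1}+\frac{I_2}{c_2},\qquad G''(u)=c_1 I_1 + c_2 I_2,
\]
and hence
\[
G''(u)\,G(u) - \bigl(G'(u)\bigr)^2 = \Bigl(\tfrac{c_2}{c_1}+\tfrac{c_1}{c_2}-2\Bigr) I_1 I_2 \;>\; 0
\]
whenever $c_1\neq c_2$ and both $I_1,I_2>0$. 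Since $G''G > (G')^2$ is exactly the statement that $h_{F^*_p}(q) = -G'(u)/G(u)$ has negative derivative, this gives anti-MHR at every $u$ in the range where both $I_1,I_2$ are positive, with no asymptotics in $\varepsilon,k$ and no need to compare two isolated points. This is the paper's argument (it keeps $g_p(q,x)$ symbolic so it works for all $n$, not just $n=2$, and it carries an extra bounded boundary term $a$ when $M=0$). Your route would get there too, but the closed-form exponentials you propose to compute and compare asymptotically are a detour around what is really a two-term rearrangement/AM--GM gap; once you substitute $1-F=f/c_i$ and $-f'=c_if$ regime-by-regime, the pointwise strict inequality falls out and the ``monotonicity on a nontrivial interval'' issue never arises.

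Finally, note that the lemma asks you to \emph{exhibit} parameters: the paper then picks $c_1=\varepsilon$, $c_2=1$, $F(x_1)=1/k$, and (for the subsequent non-equilibrium example) a concrete $(\varepsilon,k)$. Your plan should end by fixing $\varepsilon\in(0,1)$, $k>1$, $p$ so that $I_1,I_2>0$ on the relevant $q$-range; with the identity above, essentially any such choice works, rather than the delicately tuned $k\sim 1/\varepsilon^2$ you anticipated needing.
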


\begin{proof}

We start by defining an even more general class of MHR distributions. In particular, let $c_1 < c_2$ be any two positive constants, and let $F$ be any distribution that satisfies the following properties:

\ \ \ \ \ \ \ \ $1-F(x) = \frac{f(x)}{c_1}$ and $-f'(x) = c_1f(x)$ for $x \in [\max\{0, q-p\}, x_1]$  

and \ \ $1-F(x) = \frac{f(x)}{c_2}$ and $-f'(x) = c_2f(x) $ for $x \in [x_1, \infty)$\\

Clearly, \textbf{$F$ is MHR} because $\forall x$, $-f'(x)[1-F(x)] \leq f(x)^2$. We now compute $-f^{'*}_p(q)[1-F^*_p(q)] $ to see whether $F^*_p(q)$ is MHR or not.\\

$1-F^*_p(q) = \int_{M}^{x_1} g_p(q,x)[1-F(x)]dx + F^{n-1}(M+p-q)$ 

$= \int_{M}^{x_1} g_p(q,x)\frac{f(x)}{c_1}dx + \int_{x_1}^{\infty} g_p(q,x)\frac{f(x)}{c_2}dx + F^{n-1}(M+p-q) $\\

$-f^{'*}_p(q) = \int_{M}^{\infty} g_p(q,x)[-f'(x)]dx - f(0)g_p(q,M)$

$= \int_{M}^{x_1} g_p(q,x)c_1f(x)dx + \int_{x_1}^{\infty} g_p(q,x)c_2f(x)dx - f(0)g_p(q,M)$\\

$-f^{'*}_p(q)[1-F^*_p(q)] $ 

$= \int_{M}^{x_1} g_p(q,x)\frac{f(x)}{c_1}dx \cdot \int_{M}^{x_1} g_p(q,x)c_1f(x)dx$

$+ \int_{M}^{x_1} g_p(q,x)\frac{f(x)}{c_1}dx \cdot \int_{x_1}^{\infty} g_p(q,x)c_2f(x)dx$

$+ \int_{x_1}^{\infty} g_p(q,x)\frac{f(x)}{c_2}dx \cdot \int_{M}^{x_1} g_p(q,x)c_1f(x)dx$

$+ \int_{x_1}^{\infty} g_p(q,x)\frac{f(x)}{c_2}dx \cdot \int_{x_1}^{\infty} g_p(q,x)c_2f(x)dx$

$+ [F^{n-1}(M+p-q)][-f^{'*}_p(q)] - [1-F^*_p(q)][f(0)g_p(q,M)] - [F^{n-1}(M+p-q)][f(0)g_p(q,M)] $\\

Note this last term can be reformulated as:
$-a = [F^{n-1}(M+p-q)][-f^{'*}_p(q)] - [1-F^*_p(q)][f(0)g_p(q,M)] - [F^{n-1}(M+p-q)][f(0)g_p(q,M)]$ 
$\geq -2[f(0)g_p(q,M)] = -2c_1 g_p(q,M)$ and is thus lower bounded by a constant. We can now rewrite the product as follows:\\

$-f^{'*}_p(q)[1-F^*_p(q)] $ 

$= \int_{M}^{x_1} g_p(q,x)f(x)dx \cdot \int_{M}^{x_1} g_p(q,x)f(x)dx$

$+ \frac{c_2}{c_1} \int_{M}^{x_1} g_p(q,x)f(x)dx \cdot \int_{x_1}^{\infty} g_p(q,x)f(x)dx$

$+ \frac{c_1}{c_2} \int_{x_1}^{\infty} g_p(q,x)f(x)dx \cdot \int_{M}^{x_1} g_p(q,x)f(x)dx$

$+ \int_{x_1}^{\infty} g_p(q,x)f(x)dx \cdot \int_{x_1}^{\infty} g_p(q,x)f(x)dx$

$-a$ \ \ \ \  (assume worst case $a \geq 0$)\\

$-f^{'*}_p(q)[1-F^*_p(q)] $ 

$= \int_{M}^{x_1} g_p(q,x)f(x)dx \cdot \int_{M}^{\infty} g_p(q,x)f(x)dx$

$+ (\frac{c_2}{c_1}-1) \int_{M}^{x_1} g_p(q,x)f(x)dx \cdot \int_{x_1}^{\infty} g_p(q,x)f(x)dx$

$+ (\frac{c_1}{c_2}-1) \int_{x_1}^{\infty} g_p(q,x)f(x)dx \cdot \int_{M}^{x_1} g_p(q,x)f(x)dx$

$+ \int_{x_1}^{\infty} g_p(q,x)f(x)dx \cdot \int_{M}^{\infty} g_p(q,x)f(x)dx$

$-a $\\

$-f^{'*}_p(q)[1-F^*_p(q)] $ 

$= \int_{M}^{\infty} g_p(q,x)f(x)dx \cdot \int_{M}^{\infty} g_p(q,x)f(x)dx$

$+ (\frac{c_2}{c_1} + \frac{c_1}{c_2}-2) \int_{M}^{x_1} g_p(q,x)f(x)dx \cdot \int_{x_1}^{\infty} g_p(q,x)f(x)dx - a$

$= f^*_p(q)^2 
+ (b \int_{M}^{\infty} g_p(q,x)f(x)dx \cdot \int_{M}^{\infty} g_p(q,x)f(x)dx - a) $ \ \ \ \ \ (with $b= \frac{c_2}{c_1} + \frac{c_1}{c_2}-2>1$, because $\frac{c_2}{c_1} > 1$)\\

Note that $b$ can be made as large as possible for fixed $a$,
and so we get the following:

$-f^{'*}_p(q)[1-F^*_p(q)] $ 
$> \int_{M}^{\infty} g_p(q,x)f(x)dx \cdot \int_{M}^{\infty} g_p(q,x)f(x)dx$ \ \ 
$ = f^*_p(q)^2$ \\

Thus, for any MHR distribution $F$ defined as above, \textbf{$F_p$* is anti-MHR}.\\

We now provide an example distribution ($F_{p,(\varepsilon, k)}$) that satisfies the conditions of the distribution class characterized by $F$. Indeed, the distribution $F_{p,(\varepsilon, k)}$ is just a special case of the distribution $F$. By setting $c_1 = \varepsilon$, $c_2 = 1$, and $F(x_1) = \frac{1}{k}$ for the distribution $F$, we obtain the distribution $F_{p,(\varepsilon, k)}$. 
Thus, MHR $F$ can, in fact, lead to non-MHR (and even lead to anti-MHR) $F^*_{p}$. In other words, non-decreasing $h$ can lead to non-monotone (and even lead to non-increasing) $h^*_{p}$ because the MHR condition $-f^{'*}_p(q)[1-F^*_p(q)] \leq f^*_p(q)^2$ is violated, as shown in the example above where $F = F_{\varepsilon,k}$ and $F^*_{p} = F^*_{p,(\varepsilon,k)}$.\\

Therefore, there exists $\varepsilon\in (0,1), k > 1$, \ such that $F_{\varepsilon,k}$ is MHR but $F^*_{p,(\varepsilon, k)}$ is \textbf{anti-MHR}.\\
\end{proof}


We now define a setting where the consumer is drawn from $D=F_{\varepsilon, k}^n$ where $F_{\varepsilon, k}$ is the distribution we defined above.
Assume providers $1,\ldots, n-1$ set price $p$, and provider $n$ sets price $q$. The consumer has $n-1$ external options valued at $\max_{i\neq n} v_i - p$ and will only purchase from the lone provider ($n$) at price $q$ if the utility from provider $n$ is better than the utility from  providers $1,\ldots, n-1$.\\
    
   Provider $n$ has payoff $q \cdot \Pr_{\vec{v} \leftarrow D}[n = \arg\max_i \{v_i - p_i\}]$ with $p_i = p, \forall j\neq n$ and $p_n = q$ and sets the revenue maximizing price arg$\max_q (q \int f_{\varepsilon, k}(x)\prod_{i\neq n} Pr[v_i-p<x-q])$
    
    $=$ arg$\max_q (q \int f_{\varepsilon, k}(x)\prod_{i\neq n} Pr[v_i<-q+x+p]dx)$
    
    $=$ arg$\max_q (q \int f_{\varepsilon, k}(x)\prod_{i\neq n}F_{\varepsilon, k}(-q+x+p)dx)$
    
    $=$ arg$\max_q (q \int f_{\varepsilon, k}(x)F_{\varepsilon, k}^{n-1}(-q+x+p)dx)$\\
    
    We now restrict our attention to the case where there are only two providers ($n=2$). Consumer valuations are drawn from  $D= F^2_{\varepsilon, k}$. Thus, provider n (i.e. $2$) sets the revenue maximizing price arg$\max_q (q \int f_{\varepsilon, k}(x)F_{\varepsilon, k}^{n-1}(-q+x+p)dx)$ = arg$\max_q (q \int f_{\varepsilon, k}(x)F_{\varepsilon, k}(-q+x+p)dx)$\\
    
    Consider the case where $q \geq p$. Whenever $x < q-p$, provider $n$ will never be chosen since $Pr[max_{i\neq n}v_i-p < x-q]=0$, so we get the following integral bounds:
    arg$\max_q [q\cdot \int_{q-p}^{\infty} f_{\varepsilon, k}(x)F_{\varepsilon, k}(-q+x+p)dx]$
    = arg$\max_q$ \ \ $q \cdot  \
    ( \frac{1}{2} e^{\varepsilon (-q+p)}
    - \frac{1}{k^2} [\frac{1}{1+\varepsilon} - \frac{1}{2}] [e^{-\varepsilon (-q+p)} - e^{(-q+p)}]$ )\\
    
    We can now conclude that, in the two-provider scenario, for all $\varepsilon \in (0,1), k \geq 1$, when provider $1$ uses the price $p$, provider $2$ best responds by setting the revenue maximizing price arg$\max_q \ \ q \cdot 
    (\frac{1}{2} e^{\varepsilon (-q+p)}
    - \frac{1}{k^2} [\frac{1}{1+\varepsilon} - \frac{1}{2}] [e^{-\varepsilon (-q+p)} - e^{(-q+p)}]$ )\\
    
    We now use specific instances of this best-response result to prove the lemmas below.\\
    

\begin{lem}
When $\varepsilon=0.1$, $k=2$, $n=2$, the MHR distribution $D = F_{\varepsilon,k}^n = F_{0.1,2}^2$ has no symmetric equilibrium. 
    \end{lem}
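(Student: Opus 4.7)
The plan is to invoke Proposition~\ref{prop:symmetric}, which identifies $p_F := 1/h_2^n(F)$ as the \emph{only possible} symmetric equilibrium price, and then show that $p_F$ is not a best response to itself. Concretely, I aim to exhibit a price $q^* \neq p_F$ with $q^* \cdot (1 - F^*_{p_F}(q^*)) > p_F/2$; the existence of such a profitable deviation immediately rules out any symmetric equilibrium, since Proposition~\ref{prop:symmetric} already rules out every other candidate.

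First I would pin down $p_F$ numerically. The hazard rate of $F_{0.1,2}$ is piecewise constant: $h_F(x) = 0.1$ on $[0, x_1]$ with $x_1 = (\ln 2)/0.1$, and $h_F(x) = 1$ on $(x_1, \infty)$. Since the second-highest of two i.i.d.\ draws is simply the minimum, and $\Pr[\min(V_1, V_2) > x_1] = (1 - F(x_1))^2 = 1/4$, we obtain $h_2^2(F) = (3/4)(0.1) + (1/4)(1) = 0.325$, hence $p_F = 1/0.325 \approx 3.077$. By the symmetry of $D = F_{0.1,2}^2$, at the candidate symmetric profile both providers earn payoff $p_F \cdot (1 - F^*_{p_F}(p_F)) = p_F/2 \approx 1.538$.

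Next I would construct the deviation. The excerpt already derives a closed form for the best-response objective $q \cdot (1 - F^*_p(q))$ in the regime $q \geq p$, expressed as a combination of the exponentials $e^{\varepsilon(p-q)}$, $e^{\varepsilon(q-p)}$, and $e^{p-q}$. Substituting $p = p_F$, $\varepsilon = 0.1$, and $k = 2$, and respecting the piecewise structure at the boundary $q = p_F + x_1$ where the integration region crosses the hazard-rate breakpoint and the closed form switches branches, I would locate a $q^* > p_F$ at which the resulting expression strictly exceeds $p_F/2$. Conceptually, the previous lemma guarantees that $F^*_{p_F}$ is anti-MHR, so although the stationary condition $\varphi_{F^*_{p_F}}(p_F) = 0$ is met, $\varphi_{F^*_{p_F}}$ is not monotone and the revenue curve of $F^*_{p_F}$ develops a second, larger peak in the tail; this second peak is exactly the profitable deviation we need.

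The main obstacle is purely computational: one must carefully evaluate a piecewise combination of exponentials at specific numerical values and verify the strict inequality $q^* \cdot (1 - F^*_{p_F}(q^*)) > p_F/2$ by direct arithmetic, either by solving the first-order condition on the correct branch of $\pi$ or by tabulating $\pi$ at a small grid of test points that straddle the breakpoint $p_F + x_1$. The conceptual ingredient driving the failure is already in hand from the anti-MHR property established in the preceding lemma; all that remains is to extract a concrete numerical witness from that qualitative statement.
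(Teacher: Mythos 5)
Your proposal mirrors the paper's proof: both isolate $p_F = 40/13$ via Proposition~\ref{prop:symmetric} and then show the best response to $p_F$ is some $q^* \neq p_F$ (the paper finds $q^* \approx 3.506$ with revenue $\approx 1.53855$, versus $p_F/2 \approx 1.53846$). The only caution is that this margin is on the order of $10^{-4}$, so the numerical verification (and in particular your fallback ``small grid of test points'') must be done carefully; also the branch switch you flag at $q = p_F + (\ln 2)/\varepsilon \approx 10.0$ is never actually reached, since $q^*$ lies far below it.
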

    
    \begin{proof}
    When there are only two providers ($n=2$), our setting then becomes that of a consumer drawn from  $D= F^2_{\varepsilon, k}$. In particular for $k=2$, i.e. when the consumer is drawn from  $D= F^2_{\varepsilon, 2}$, provider $2$ is best responding by setting the revenue maximizing price arg$\max_q \ \ q \cdot
    (\frac{1}{2} e^{\varepsilon (-q+p)}
    - \frac{1}{k^2} [\frac{1}{1+\varepsilon} - \frac{1}{2}] [e^{-\varepsilon (-q+p)} - e^{(-q+p)}]$ ) = arg$\max_q \ \ q \cdot  ( \frac{1}{2} e^{\varepsilon (-q+p)}
    - \frac{1}{4} [\frac{1}{1+\varepsilon} - \frac{1}{2}] [e^{-\varepsilon (-q+p)} - e^{(-q+p)}]$ )\\

    Using the formula from Proposition $3.2$, we compute the hazard rate of the second highest value
    $h_2^n(F_{\varepsilon, k}) = \frac{3}{4} \cdot \varepsilon + \frac{1}{4} = \frac{3\varepsilon+1}{4} = \frac{3\cdot 0.1 + 1}{4} = \frac{13}{40}$ which gives the unique potential symmetric equilibrium price
    $p = 1/h_2^n(F_{\varepsilon, k}) = \frac{40}{13}$\\

    Provider $2$ is now best responding to this potential symmetric equilibrium price
    $p = \frac{40}{13}$ by setting the revenue maximizing price defined above as arg$\max_q \ \ q \cdot  ( \frac{1}{2} e^{\varepsilon (-q+\frac{40}{13})}
    - \frac{1}{4} [\frac{1}{1+\varepsilon} - \frac{1}{2}] [e^{-\varepsilon (-q+\frac{40}{13})} - e^{(-q+\frac{40}{13})}]$ ). Solving for this gives $q=3.50618 \neq p$, at a revenue of $1.53855$.
    
    The prices are not equal, thus we don't have \textbf{symmetric equilibrium}\\
    
    Therefore, when $\varepsilon=0.1$, $k=2$, $n=2$, the MHR distribution $D = F_{\varepsilon,k}^n$ has no symmetric equilibrium. \\
\end{proof}


\begin{observation}If $F$ is MHR, then $H_{2}^{n}(F)$ is weakly decreasing in $n$. If $F$ is anti-MHR, then $H_{2}^{n}(F)$ is weakly increasing in $n$.
\end{observation}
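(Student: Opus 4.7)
The plan is to write $H_2^n(F) = \mathbb{E}[\phi(X_2^n(F))]$, where $\phi(x) := 1/h_F(x) = (1-F(x))/f(x)$. Observe that $F$ is MHR if and only if $\phi$ is (weakly) non-increasing, and $F$ is anti-MHR if and only if $\phi$ is (weakly) non-decreasing. So the entire statement reduces to showing that the distribution of $X_2^n(F)$ is stochastically non-decreasing in $n$, in the sense that $X_2^n(F) \leq_{st} X_2^{n+1}(F)$, and then invoking the standard fact that $\mathbb{E}[\phi(Y)] \leq \mathbb{E}[\phi(X)]$ whenever $Y \geq_{st} X$ and $\phi$ is non-increasing (with the inequality reversed when $\phi$ is non-decreasing).

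The main computational step is therefore to establish the stochastic dominance $X_2^{n+1}(F) \geq_{st} X_2^n(F)$, which I would prove by writing down the CDFs and comparing pointwise. Since $X_2^n(F)$ is at most $t$ exactly when at least $n-1$ of the $n$ i.i.d. draws from $F$ are at most $t$, we have
\begin{equation*}
\Pr[X_2^n(F) \leq t] = F(t)^{n-1}\bigl(n - (n-1) F(t)\bigr).
\end{equation*}
The key calculation is that the difference
\begin{equation*}
\Pr[X_2^{n+1}(F) \leq t] - \Pr[X_2^n(F) \leq t] = -n\, F(t)^{n-1}\bigl(1-F(t)\bigr)^2 \leq 0,
\end{equation*}
which is a direct algebraic simplification after factoring out $F(t)^{n-1}$. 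This gives the claimed stochastic dominance on all of $\mathbb{R}_+$.

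Combining the two ingredients: under MHR, $\phi$ is non-increasing, so $\mathbb{E}[\phi(X_2^{n+1}(F))] \leq \mathbb{E}[\phi(X_2^n(F))]$, i.e., $H_2^{n+1}(F) \leq H_2^n(F)$; under anti-MHR, $\phi$ is non-decreasing, giving the reverse inequality. I do not expect any real obstacle here: the only non-trivial step is the algebraic identity for the CDF difference, and that is a direct expansion. The one minor subtlety is justifying the ``monotone function vs.\ stochastic dominance'' step for the possibly unbounded, non-negative function $\phi$, which is standard (for instance by the layer-cake / integration-by-parts representation $\mathbb{E}[\phi(Y)] = \phi(\infty) - \int (\Pr[Y \leq t])\, d\phi(t)$ when $\phi$ is monotone, applied coordinate-wise to $X_2^n$ and $X_2^{n+1}$).
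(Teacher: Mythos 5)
Your proof is correct. The paper states this observation without a proof, so there is no in-paper argument to compare against, but your approach is exactly the natural one: write $H_2^n(F) = \mathbb{E}[\phi(X_2^n(F))]$ with $\phi = 1/h_F$, note that MHR (resp.\ anti-MHR) is precisely monotonicity of $\phi$ in the relevant direction, and reduce the whole claim to the first-order stochastic dominance $X_2^{n+1}(F) \succeq_{\mathrm{st}} X_2^n(F)$. Your CDF computation $\Pr[X_2^n(F)\leq t] = F(t)^{n-1}\bigl(n-(n-1)F(t)\bigr)$ and the algebraic identity
\[
\Pr[X_2^{n+1}(F)\leq t] - \Pr[X_2^n(F)\leq t] = -n\,F(t)^{n-1}\bigl(1-F(t)\bigr)^2 \leq 0
\]
both check out (set $u=F(t)$ and expand). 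The final step invoking monotone-function preservation of stochastic order is standard, and your integration-by-parts remark adequately dispatches the unboundedness concern. No gaps.
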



\begin{lem}
When $n=2$, $\varepsilon > \frac{1}{27}$, and $k = 2$, $D = F^n_{\varepsilon,k}$ satisfies the Limit Entry Condition, but does not have a symmetric equilibrium in the Free Market setting.
\end{lem}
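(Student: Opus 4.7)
The plan is to verify the two claims separately: (1) the Limit-Entry Condition, which reduces to an explicit algebraic inequality in $\varepsilon$; and (2) the failure of a symmetric equilibrium, which reduces to showing that the unique first-order-condition-satisfying price cannot be a best response.

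For (1), I would compute both $H_1^2(F_{\varepsilon,2})$ and $h_2^2(F_{\varepsilon,2})$ in closed form. Since $F_{\varepsilon,2}$ places CDF mass $1/2$ at the breakpoint $x_1 = \ln(2)/\varepsilon$ that separates the two hazard-rate regimes, conditioning on whether the maximum of two independent draws lies in the lower region (probability $(1/2)^2 = 1/4$, contributing inverse hazard $1/\varepsilon$) or the upper region (probability $3/4$, contributing $1$) yields $H_1^2(F_{\varepsilon,2}) = \tfrac{1}{4\varepsilon} + \tfrac{3}{4}$. Combining this with the already-computed value $h_2^2(F_{\varepsilon,2}) = \tfrac{3\varepsilon+1}{4}$, the Limit-Entry Condition $H_1^2(F) \leq 2/h_2^2(F)$ becomes, after clearing denominators, equivalent to the quadratic inequality $9\varepsilon^2 - 26\varepsilon + 1 \leq 0$. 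I would then verify that this inequality holds throughout the relevant range of $\varepsilon$ by locating its two real roots and checking that the stated lower bound on $\varepsilon$ lies inside the corresponding interval.

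For (2), Proposition~\ref{prop:symmetric} tells us that the only possible symmetric equilibrium price is $p^{\star} := 1/h_2^2(F_{\varepsilon,2}) = 4/(3\varepsilon+1)$, because any other candidate violates the first-order condition $\varphi_{F^*_p}(p) = 0$. So it suffices to show that $p^{\star}$ is \emph{not} itself a best response when all other providers set $p^{\star}$. The deviating provider's payoff at price $q$ is $R(q) := q(1-F^*_{p^{\star}}(q))$, and I would use the closed-form expression for this payoff derived in the excerpt (splitting into the cases $q \leq p^{\star}$ and $q \geq p^{\star}$ and invoking the given formula for $F_{\varepsilon,2}$) to exhibit a specific price $q \neq p^{\star}$ with $R(q) > R(p^{\star}) = p^{\star}/2$, mirroring the numerical calculation already carried out for $\varepsilon = 0.1$.

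The main obstacle is the second part, since it requires ruling out $p^{\star}$ as a best response \emph{uniformly} over all $\varepsilon$ in the stated range, rather than at a single numerical value. The cleanest route I see is to leverage the first example in the distributional examples section: when $F_{\varepsilon,2}$ is \mhrplus-violating in the relevant way, $F^*_{p^{\star}}$ is anti-MHR in a neighborhood of $p^{\star}$, so $\varphi_{F^*_{p^{\star}}}(\cdot)$ is locally decreasing at $p^{\star}$. Combined with the fact that $\varphi_{F^*_{p^{\star}}}(p^{\star}) = 0$ (Proposition~\ref{prop:symmetric}), this forces $R''(p^{\star}) > 0$, making $p^{\star}$ a local minimum of revenue; hence any sufficiently nearby $q$ yields strictly higher payoff, and $p^{\star}$ cannot be a best response, ruling out a symmetric equilibrium.
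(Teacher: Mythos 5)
Your first part is correct and in fact more explicit than the paper's own treatment: the computations $H_1^2(F_{\varepsilon,2}) = \frac{1}{4\varepsilon} + \frac{3}{4}$ and $h_2^2(F_{\varepsilon,2}) = \frac{3\varepsilon+1}{4}$ are right, and the Limit-Entry Condition does reduce to $(3\varepsilon+1)^2 \le 32\varepsilon$, i.e.\ $9\varepsilon^2 - 26\varepsilon + 1 \le 0$. Be aware, though, that the lower root of this quadratic is $\frac{13-4\sqrt{10}}{9} \approx 0.0390$, which is slightly \emph{larger} than $1/27 \approx 0.0370$, so your planned final step of ``checking that the stated lower bound lies inside the interval'' would actually fail for $\varepsilon$ in a thin window just above $1/27$. (The paper asserts the $1/27$ threshold without computation and then instantiates at $\varepsilon = 0.1$, where the inequality holds comfortably; your derivation is the more honest route but exposes that the stated constant is slightly off.)

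The genuine gap is in your second part. Your fallback of numerically exhibiting a profitable deviation at a specific parameter value is exactly what the paper does (at $\varepsilon=0.1$ it computes the best response to $p^\star = 40/13$ and finds $q \approx 3.506 \neq p^\star$), and that alone suffices, since the lemma only needs to exhibit one example for Proposition~\ref{prop:examples}. But your proposed ``cleanest route'' for general $\varepsilon$ does not work as stated: anti-MHR of $F^*_{p^\star}$ means the inverse hazard rate $\frac{1-F^*_{p^\star}(x)}{f^*_{p^\star}(x)}$ is non-decreasing, which only gives $\varphi_{F^*_{p^\star}}'(x) = 1 - \frac{d}{dx}\big(\frac{1-F^*_{p^\star}(x)}{f^*_{p^\star}(x)}\big) \le 1$; it does \emph{not} force $\varphi' < 0$. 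A distribution can be anti-MHR and still regular (the regularity condition is $-f'(x)(1-F(x)) \le 2f(x)^2$, a factor of two weaker than the MHR condition $-f'(x)(1-F(x)) \le f(x)^2$; e.g.\ $F(x)=1-\frac{1}{1+x}$ is anti-MHR with constant virtual value). In that case $\varphi_{F^*_{p^\star}}$ is still non-decreasing, $R''(p^\star) = -f^*_{p^\star}(p^\star)\,\varphi_{F^*_{p^\star}}'(p^\star) \le 0$, and $p^\star$ may well remain a global revenue maximizer, i.e.\ a symmetric equilibrium. To salvage your local-minimum argument you would need the strictly stronger violation $-(f^*_{p^\star})'(p^\star)\cdot(1-F^*_{p^\star}(p^\star)) > 2(f^*_{p^\star}(p^\star))^2$ at the point $p^\star$ itself, which the appendix's anti-MHR calculation does not deliver. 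As written, this step would fail, so the uniform-in-$\varepsilon$ claim remains unproven by your argument; only the single-instance verification goes through.
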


\begin{proof}
    
    We refer to the same example above (in Lemma $A.2$) where we had $n=2$, $\varepsilon = 0.1$, and $k = 2$. We showed that it has \textbf{no symmetrical equilibrium}. However, any $\varepsilon > \frac{1}{27}$ guarantees that $H_1^n(F_{\varepsilon, k}) < \frac{n}{h_2^n(F_{\varepsilon, k})}$ and thus satisfies the \textbf{Limit-Entry Condition}. In particular, when $\varepsilon = 0.1 > \frac{1}{27}$, $D = F^2_{\varepsilon,2}$ satisfies the Limit Entry Condition but does not have a symmetric equilibrium in the Free Market setting.\\
    
    Therefore, when $n=2$, $\varepsilon  > \frac{1}{27}$, and $k = 2$, $D = F^n_{\varepsilon,k}$ satisfies the Limit Entry Condition, but does not have a symmetric equilibrium in the Free Market setting.\\
    
\end{proof}


\begin{lem} 
When $n=2$, $\varepsilon = 0.02$, and $k = \frac{4}{3}$, $D=F^n_{\varepsilon, k}$ has a symmetric equilibrium in the Free Market setting, but does not satisfy the Limit Entry Condition. 
\end{lem}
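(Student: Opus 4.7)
The plan is to exhibit the specific instance $(n,\varepsilon,k)=(2,0.02,4/3)$ and verify directly that it admits a symmetric equilibrium while failing the Limit-Entry Condition. Two of the three required computations are routine. Writing $\tau = \ln(k)/\varepsilon$ for the breakpoint of the piecewise-exponential distribution $F_{\varepsilon,k}$, we have $\Pr[X \leq \tau] = 1-1/k = 1/4$. Consequently, for $n=2$: the minimum of two draws falls in the low-hazard region with probability $1-(3/4)^2 = 7/16$, so $h_2^2(F) = \tfrac{7}{16}\varepsilon + \tfrac{9}{16} = \tfrac{9.14}{16}$, and the only candidate symmetric equilibrium price from Proposition~\ref{prop:symmetric} is $p^{\star} = 16/9.14$. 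Similarly, the maximum of two draws falls in the low-hazard region with probability $(1/4)^2 = 1/16$, giving $H_1^2(F) = \tfrac{1}{16}\cdot\tfrac{1}{\varepsilon} + \tfrac{15}{16} = \tfrac{65}{16} \approx 4.0625 > 3.501 \approx \tfrac{32}{9.14} = n/h_2^2(F)$, so the Limit-Entry Condition fails.

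The substantive step is verifying that $p^{\star}$ is actually a best response, not merely a first-order critical point. The distribution $F_{0.02,4/3}$ is MHR but not \mhrplus: its density jumps upward at $\tau$, forcing any constant $c$ in the \mhrplus\ definition to satisfy both $c \geq 1$ (from the high-hazard region) and $c \leq \varepsilon = 0.02$ (from $h_F(x) \geq c$ in the low-hazard region), which is impossible. Hence Proposition~\ref{prop:mainstars} does not apply. Instead, following the same integral manipulation carried out for the $k=2$ case, I compute the revenue function $R(q) = q(1-F^{\star}_{p^{\star}}(q))$ in closed form by splitting $\int_M^\infty f(x) F(x-q+p^{\star})\,dx$ across $\tau$. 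For $q \geq p^{\star}$ with $q - p^{\star} < \tau$ this yields
\[
R(q) = q \cdot \left[ \tfrac{1}{2} e^{\varepsilon(p^{\star}-q)} - \tfrac{1}{k^2}\left(\tfrac{1}{1+\varepsilon} - \tfrac{1}{2}\right)\left(e^{-\varepsilon(p^{\star}-q)} - e^{p^{\star}-q}\right) \right],
\]
with a piecewise-distinct but analogous formula for $q < p^{\star}$. The first-order condition $R'(p^{\star}) = 0$ simplifies to $p^{\star} = k^2/(\varepsilon k^2 + 1 - \varepsilon) = 16/9.14$, matching the value computed above. A direct computation of $R''(p^{\star})$ produces a strictly negative number, so $p^{\star}$ is at least a local maximum.

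The main obstacle is upgrading local optimality to global optimality without an MHR property on $F^{\star}_{p^{\star}}$. I would argue that $R$ is continuous on $[0,\infty)$ with $R(0) = 0$ and $R(q) \to 0$ as $q \to \infty$ (because the probability of being chosen decays exponentially in $q-p^{\star}$ while $q$ grows only linearly), so the global maximum is attained at an interior critical point. Because $R$ is an explicit sum of a few exponentials with well-separated exponents $\pm \varepsilon q$ and $-q$ on each side of $p^{\star}$, one can analyze $R'$ on the intervals $(0,p^{\star})$ and $(p^{\star},\infty)$ separately and show it changes sign exactly once (at $p^{\star}$). A numerical fallback is equally clean: since the parameters are fixed and $R$ is explicit, a finite grid together with an elementary Lipschitz bound on $R$ suffices to verify $R(q) < R(p^{\star})$ for all $q \neq p^{\star}$.
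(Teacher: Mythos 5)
Your reading of $F_{\varepsilon,k}$ takes the displayed CDF literally, so $F(\tau)=1-1/k=1/4$ and the low-hazard region carries mass $1/4$; this gives $h_2^2(F)=\tfrac{7}{16}\varepsilon+\tfrac{9}{16}=9.14/16$ and $p^{\star}=16/9.14\approx 1.75$. The paper's own calculation instead puts mass $3/4$ below the breakpoint (consistent with its earlier statement $F(x_1)=1/k$ but not with the CDF as written), obtaining $h_2^2=13/160$ and $p=160/13\approx 12.3$, and then garbles $160/13$ into $160/3$ when checking the best response. Your reading is the one actually consistent with the closed-form revenue expression you quote: its first-order condition at $q=p$ yields exactly $p=k^2/(\varepsilon k^2+1-\varepsilon)=16/9.14$. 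The Limit-Entry half of your argument is fine under either reading ($65/16>32/9.14$, resp.\ $28.56>320/13$), as is your observation that $F_{0.02,4/3}$ is MHR but not \mhrplus.

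The gap is in the equilibrium half, and it is fatal rather than merely unfinished: the step you flag as the main obstacle (upgrading the local maximum at $p^{\star}$ to a global one) cannot be completed because the claim is false. Plug $u=q-p^{\star}=5$ into the revenue formula you wrote down: with $A=\tfrac{1}{k^2}\bigl(\tfrac{1}{1+\varepsilon}-\tfrac{1}{2}\bigr)=\tfrac{9}{16}\cdot 0.4804\approx 0.2702$, the winning probability is $\tfrac12 e^{-0.1}-A\,(e^{0.1}-e^{-5})\approx 0.1556$ (this matches a direct evaluation of $\int_0^{\infty}f(x)(1-F(x+5))\,dx$), so $R(p^{\star}+5)\approx 6.75\times 0.1556\approx 1.05$, whereas $R(p^{\star})=p^{\star}/2\approx 0.875$. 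Hence $R'$ changes sign more than once, your proposed grid/Lipschitz check would refute rather than confirm global optimality, and $p^{\star}$ is not a best response to itself (the best response is near $q\approx 8.4$ with payoff $\approx 1.09$). Intuitively, three quarters of the mass of $F_{0.02,4/3}$ sits in the rate-$1$ bulk above $\tau\approx 14.4$ while a quarter is spread thinly below, so a provider facing price $1.75$ profits by pricing several units higher and selling only to the bulk. The same failure occurs under the paper's alternative mass assignment (the best response to $160/13$ is near $u\approx 30$, earning $\approx 9.3$ versus $6.15$), so this is not resolved by picking the ``right'' reading of the distribution: as parameterized, the existence claim does not survive the very verification your outline correctly insists on, and either the parameters or the definition of $F_{\varepsilon,k}$ must change before any proof can go through.
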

\begin{proof}
We again consider the two-provider setting ($n=2$) where
the consumer is drawn from  $D= F^2_{\varepsilon, k}$. In this instance, we set $\varepsilon = 0.02$, and $k = \frac{4}{3}$, i.e. when the consumer is drawn from  $D= F^2_{0.02, \frac{4}{3}}$. 
Using the formula from Proposition $3.2$, we compute the hazard rate of the second highest value
    $h_2^n(F_{\varepsilon, k}) = \frac{13}{160}$ which gives the unique potential symmetric equilibrium price
    $p = 1/h_2^n(F_{\varepsilon, k}) = \frac{160}{13} = 12.308$\\
    
    Provider $2$ is now best responding to this potential symmetric equilibrium price
    $p =  \frac{160}{13}$ by setting the revenue maximizing price defined above as 
     arg$\max_q \ \ q \cdot  \ (
    \frac{1}{2} e^{0.02(-q+\frac{160}{3})}
    - (\frac{3}{4})^2 [\frac{1}{1+0.02} - \frac{1}{2}] [e^{-0.02 (-q+\frac{160}{3})} - e^{(-q+\frac{160}{3})}]$ )
    for $q \geq \frac{160}{3}$. Solving for this gives $q =\frac{160}{3} =p$ and revenue = $\frac{80}{3}$.

    We thus have a \textbf{symmetrical equilibrium} since the best response to $p$ is $q=p$.\\
    
    We now check if the Limit Entry condition is satisfied. First, we compute the terms for the Limit-Entry Condition as follows: 
    
    $H_1^n(F_{\varepsilon, k}) =  (\frac{3}{4})^2\cdot\frac{1}{0.02} + [1-(\frac{3}{4})^2]\cdot 1 = 28.5625$ (exact)
    
    $h_2^n(F_{\varepsilon, k}) = [1 - (1-\frac{3}{4})^2]\cdot 0.02 + (1-\frac{3}{4})^2\cdot 1 =\frac{15}{16}\cdot 0.02 + \frac{1}{16}=\frac{13}{160}$ which gives $\frac{1}{h_2^n(F_{\varepsilon, k})} = \frac{160}{13} = 12.308$
    
    Note that $H_1^n(F_{\varepsilon, k}) > \frac{2}{h_2^n(F_{\varepsilon, k})}$ which violates the \textbf{Limit-Entry Condition}.\\

    Therefore, when $n=2$, $\varepsilon = 0.02$, and $k = \frac{4}{3}$, $D=F^n_{\varepsilon, k}$ has a symmetric equilibrium in the Free Market setting, but does not satisfy the Limit Entry Condition. \\
\end{proof}


\end{document}